\documentclass[acmsmall]{acmart}
\AtBeginDocument{%
  }

\setcopyright{cc}
\setcctype{by}
\acmJournal{PACMMOD}
\acmYear{2026} \acmVolume{4} \acmNumber{2 (PODS)} \acmArticle{111}
\acmMonth{5} \acmDOI{10.1145/3801907}






\usepackage{mathrsfs}
\usepackage{graphicx}
\usepackage{tabularx}
\usepackage{tikz}
\usepackage{commath}
\usepackage{braket}
\usepackage{wasysym}
\usepackage{booktabs}
\usepackage{hyperref}
\usepackage{amsmath}
\usepackage{amsthm}
\usepackage{lipsum}
\usepackage{url}
\usepackage{bm}
\usepackage{mathtools}
\usepackage[nameinlink]{cleveref}


\usepackage{algorithm}
\usepackage{algorithmic}


\renewcommand{\phi}{\varphi}


\hypersetup{
    colorlinks = true,
    linkcolor={ref-color},
    urlcolor={ref-color},
    citecolor={blue!80!black}
}

\newtheorem{mydef}{Definition}[section]

\newtheorem{remark}[mydef]{Remark}

\makeatletter

\makeatother



\newcommand{\R}{\mathbb{R}}

\newcommand{\Z}{\mathbb{Z}}

\newcommand{\eps}{\varepsilon}
\renewcommand{\epsilon}{\eps}

\newcommand{\wh}[1]{\widehat{#1}}

\newcommand{\poly}{\mathrm{poly}}
\newcommand{\polylog}{\mathrm{polylog}}


\newcommand{\E}{\operatorname{{\bf E}}}


\renewcommand{\hat}{\wh}

\newcommand{\T}{\mathrm{T}}



\newcommand{\calB}{\mathcal{B}}

\newcommand{\calT}{\mathcal{T}}


\newcommand{\bA}{\boldsymbol{A}}

\newcommand{\bC}{\boldsymbol{C}}

\newcommand{\bS}{\boldsymbol{S}}

\newcommand{\bX}{\boldsymbol{X}}
\newcommand{\bY}{\boldsymbol{Y}}

\newcommand{\ba}{\boldsymbol{a}}
\newcommand{\bb}{\boldsymbol{b}}

\newcommand{\ff}{\boldsymbol{f}}

\newcommand{\bx}{\boldsymbol{x}}
\newcommand{\by}{\boldsymbol{y}}









\newcommand{\ignore}[1]{}
\definecolor{cb-salmon-pink}{RGB}{255, 182, 119}
\definecolor{ref-color}{RGB}{200, 0, 200}

\DeclarePairedDelimiterX\diverg[2]{(}{)}{#1 \mathrel{}\mathclose{}\delimsize\|\mathopen{}\mathrel{} #2}

\crefname{enumi}{Step}{Steps}

\newcommand{\rank}{\mathrm{rank}}

\begin{document}

\title{On Sketching Trimmed Statistics}

\author{Honghao Lin}
\affiliation{%
  \institution{Carnegie Mellon University}
  \city{Pittsburgh}
  \country{USA}}
\email{honghaol@andrew.cmu.edu}

\author{Hoai-An Nguyen}
\affiliation{%
  \institution{Carnegie Mellon University}
  \city{Pittsburgh}
  \country{USA}}
\email{hnnguyen@andrew.cmu.edu}

\author{David P. Woodruff}
\affiliation{%
  \institution{Carnegie Mellon University}
  \city{Pittsburgh}
  \country{USA}}
\email{dwoodruf@cs.cmu.edu}


\begin{abstract}
We study sketching trimmed statistics of a frequency vector, including the $F_p$ moment of the top-$k$ coordinates and of the trimmed-$k$ vector. Despite their natural role in robust analytics, this is the first time these problems have been studied in any sublinear space setting. 
For $p \in [0,2]$, we obtain $\poly(\log n/\eps)$-space algorithms for both tasks when $k$ is moderately large, and for general $k$ we identify a sharp structural threshold that characterizes exactly when sublinear space is possible:
in particular, it is actually determined by the ratio between $a_k^2$ and $\|\mathbf{x}_{-k}\|_2^2/k$.
We extend these results to $p > 2$ and present several applications including algorithms for thresholded $F_p$ estimation and generalized impact indices. Notably, we improve the space bounds of Govindan, Monemizadeh, and Muthukrishnan (PODS ’17) for computing the $h$-index. 
\end{abstract}

\begin{CCSXML}
<ccs2012>
   <concept>
       <concept_id>10003752.10003809.10010055.10010057</concept_id>
       <concept_desc>Theory of computation~Sketching and sampling</concept_desc>
       <concept_significance>500</concept_significance>
       </concept>
 </ccs2012>
\end{CCSXML}

\ccsdesc[500]{Theory of computation~Sketching and sampling}

\keywords{sketching; streaming; frequency moments}

\received{December 2025}
\received[accepted]{February 2025}

\maketitle

\section{Introduction}

The data-stream model has become a central abstraction for analyzing massive datasets that are too large to store explicitly. Examples include internet traffic logs, financial transactions, database logs, and scientific data streams (e.g., large-scale experiments in fields such as particle physics, genomics, and astronomy). Formally, we assume an underlying frequency vector $\bx \in \Z^n$
that receives a sequence of positive and negative updates to its coordinates.

Classical work beginning with Alon, Matias, and Szegedy~\cite{AMS1999space} introduced streaming algorithms for estimating global statistics such as the frequency moments $F_p = \sum_i |x_i|^p$, leading to an extensive line of research~\cite{W2004optimal, KNW2010optimal, I2006stable, KNW2010exact, NY2022optimal, BZ2024optimality, BJKS2004information, CKS2003near, IW2005optimal, LW2013tight, CCF2002finding}. Computing $F_p$ moments remains fundamental in database systems, powering tasks in query optimization, data mining, and network monitoring.

However, many analytics tasks in databases and data science rely not on global norms but on \emph{order-based} or \emph{trimmed} statistics---quantities that depend only on the largest, smallest, or central portion of the data. Such measures arise in robust statistics (where extremes are treated as outliers), in descriptive analytics, in regression and optimization (e.g., via Ky-Fan norms~\cite{CW2015sketching}), and in impact metrics such as the $h$-index. Despite their ubiquity, essentially nothing was known about whether such order-restricted statistics can be approximated in sublinear space.

We study two fundamental trimmed statistics.
The first is the \emph{top-$k$ $F_p$ moment}, defined as the $F_p$ moment of the $k$ largest coordinates of $\bx$. This statistic is also known as the Ky-Fan-Norm and has applications in generalized $\ell_p$ regression~\cite{CW2015sketching}, composite super-quantile optimization~\cite{RC2023algorithms}, frequency analysis, and descriptive statistics. Surprisingly, even for natural settings such as estimating the $F_p$ moment of the largest $n/2$ coordinates, no prior sublinear-space streaming algorithms were known.

The second statistic is the \emph{trimmed-$k$ $F_p$ moment}, obtained by discarding both the largest and smallest $k$ coordinates before computing the moment. Trimmed moments are classical tools for mitigating outliers and heavy-tailed noise, and they appear in robust estimation, exploratory analysis, and trimmed regression. Their practical importance is illustrated by the fact that statistical libraries (e.g., in \textsf{R}) include built-in functions for trimmed means and variances, which correspond to trimmed-$k$ $F_p$ norms for $p=1$ and $p=2$.

At first glance, these problems might appear solvable using standard tools such as Count-Sketch. However, such approaches require $\Theta(k)$ space to identify the top $k$ items, which is much larger than the $\mathrm{poly}(\log n, 1/\varepsilon)$-space bounds typical for $F_p$-moment estimation (for $p \in [0,2]$). Moreover, without structural assumptions on the frequency vector, heavy tails make it difficult to isolate top entries with sufficient accuracy. These limitations motivate the central question of this work:
\begin{quote}
\emph{Can we design sublinear-space, one-pass streaming algorithms that approximate trimmed $F_p$ statistics---specifically, the $F_p$ moments of the top-$k$ and trimmed-$k$ versions of the frequency vector?}
\end{quote}

We answer this question affirmatively. We design the first sublinear-space \emph{linear sketches} for approximating top-$k$ and trimmed-$k$ $F_p$ statistics, which can be applied directly to a stream of updates. Moreover, linearity ensures that our algorithms extend naturally to distributed and mergeable settings, while also supporting fast update times.

\subsection{Our Contributions}
We note that the update time of all of our algorithms, or the time that our algorithm takes to process an update, is $\poly(\log n, 1/\eps)$. 
\paragraph{First sublinear-space algorithms for trimmed statistics.}
We design the first linear sketches for approximating both the top-$k$ and trimmed-$k$ frequency moments for $p\in[0,2]$ up to a $(1+\eps)$ multiplicative approximation in space $\poly(\log n/\eps)$
whenever $k$ is moderately large (e.g.,
$k \ge n / \polylog n$).

\paragraph{A sharp structural characterization.}
For general $k$, we identify an explicit condition involving the tail of $\bx$ under which sublinear space
is possible. Where $\ba$ is $\bx$ sorted in decreasing order by absolute value and $\bx_{-k}$ is $\bx$ without the top $k$ coordinates, we show that approximating the top-$k$ $F_p$ moment using
$\poly(\log n/\varepsilon)$ space is possible \emph{if and only if}
\[
  a_k^2 \;\gtrsim\; \poly\left(\frac{\varepsilon}{\log n}\right) \cdot \frac{\|\bx_{-k}\|_2^2}{k}.
\]
Above this threshold, our sketches give $(1\pm\varepsilon)$-approximations;
on the other hand, when $a_k^2 \le \frac{k}{n^{c}} \cdot \frac{\|\bx_{-k}\|_2^2}{k}$, we prove $n^{\Omega(1)}$ space is necessary.
This yields a sharp complexity characterization whose threshold is determined by the ratio between $a_k^2$ and $\frac{\|\bx_{-k}\|_2^2}{k}$.
An analogous characterization holds for the trimmed-$k$ statistic, though we show that an additive
error term $\varepsilon  k\cdot |a_{k - \varepsilon k}|^p$ is unavoidable. 

To the best of our knowledge, no prior work has established a connection between trimmed statistics and conditions on the $k$-residual error. 

We also regard our analysis as a concrete contribution. Our algorithms employ a multi-level sub-sampling scheme, where we identify heavy hitters at each level and leverage this information to estimate the value of the desired frequencies. Unlike prior work using similar frameworks, we require a refined analysis of the level set structure, specifically by comparing the residual norms at different levels. We give a more comprehensive view in our technical overview.

\paragraph{Extending to $p > 2$ and applications.}
We extend our results to $p>2$, incurring a $\Theta(n^{1-2/p})$ factor that is provably necessary when estimating $F_p$ globally~\cite{LW2013tight}.

We also remark on two direct applications of our algorithms. 
\begin{itemize}
    \item Li, Lin, and Woodruff~\cite{LLW2024optimal} give an $O(k^{2/p} n^{1-2/p}\,\mathrm{poly}(\log n/\eps))$-space linear sketch for estimating the $k$-residual error $\|\bx - \bx_k\|_p^p$ for $p>2$, where $\bx_k$ is the best $k$-sparse approximation of $\bx$. This residual quantifies the benefit of using a more computationally expensive sparse approximation (i.e., increasing $k$). When $\|\bx_k\|_p^p = \Theta(1)\|\bx_{-k}\|_p^p$, a $(1\pm\eps)$ estimate of the top-$k$ moment yields a $(1\pm\eps)$ estimate of the $k$-residual error, and in this regime our algorithm achieves a $k^{2/p}$ improvement over~\cite{LLW2024optimal}, which is substantial for large $k$.
    \item  Estimating the $F_p$ moment of a trimmed-$k$ vector extends naturally to trimmed regression. Given a matrix $\mathbf{A}$ and vector $\mathbf{b}$, trimmed regression minimizes $\|\mathbf{A}x - \mathbf{b}\|_p^p$ after removing the top and bottom $k$ residuals. A linear sketch for trimmed norms allows efficient estimation of $\|\mathbf{A}x - \mathbf{b}\|_{p,\mathrm{trim}}$ for any $x$. This enables the computation of the trimmed norm across all possible $\bx$ by leveraging a well-constructed net argument. 
\end{itemize}

We also obtain more involved applications that require not only estimating the contribution of the top-$k$ coordinates but also identifying the appropriate value of $k$ itself.
\begin{itemize}
  \item Thresholded $F_p$ estimation: approximating $\sum_{i \in \calB_{\calT}} |x_i|^p$ where $\calB_{\calT}$ is the set of coordinates larger than input $\calT$. This is closely related to heavy hitters which is a well-studied problem in databases and streaming \cite{BCIW2016beating, BDW2016optimal, BCINWW2017bptree, WMZ2016new}. We note that we can take the threshold to be $\eps \norm{\bx}_p$ and in parallel calculate $\norm{\bx}_p$ to get the $F_p$ moment of the $\ell_p$ heavy hitters. 
  \item Moment based thresholding: approximating $k$, where $k$ is the largest integer such that $\sum_{i=0}^k |a_i|^p \ge k^{p+1}$, and $\ba$ denotes $\bx$ sorted in decreasing order. For $p = 1$ this is equivalent to computing the $g$-index \cite{E2006theory}.
  \item Value based thresholding: approximate the top-$k$ $F_p$ moment where $k$ is the largest integer such that each of the top $k$ frequencies is at least $k$. For $p = 0$, this corresponds to the popular $h$-index~\cite{H2005index} and for $p = 1$ dividing this by $k$ corresponds to the $a$-index \cite{ACHH2009hindex}. 
\end{itemize}

The $h$-index, $g$-index, and $a$-index can all be viewed as impact indices. Beyond measuring academic publishing impact, the $h$-index has been used to identify influential users in social networks~\cite{R2015measuring} and to reveal structural properties of large networks, such as approximating the network’s degree distribution, computing node coreness, and supporting truss and nucleus decomposition for identifying dense subgraphs~\cite{EJPRS2018provable, LZZS2016hindex, SSP2018local}.

Our algorithm improves upon the $h$-index algorithm of Govindan, Monemizadeh, and Muthukrishnan~\cite{GMM2017streaming}, which requires a parameter $\beta$ that lower bounds the $h$-index and uses $O(\poly \log n \cdot \frac{n}{\beta \eps^2})$ bits of space. Setting $\beta = 1$ yields linear space, and even with a strong lower bound the savings remain limited. For instance, when $h = n^{1/3}$ their algorithm still requires $O(n^{2/3})$ bits, whereas our algorithm achieves a $(1\pm\eps)$ approximation using only $O(\poly(\log n/\eps))$ bits.

\paragraph{Experimental results.}
We illustrate the practicality of our algorithms by running experiments on two real world datasets and one synthetic dataset. Specifically, we compare our algorithm for computing the $F_p$ moment for the top-$k$ frequencies against the classical Count-Sketch and show that our algorithm achieves better accuracy when given the same space allotment.

\subsection{Extended Related Work}
There has been a long line of work on computing the frequency moment, or $F_p$, of a dataset. See \Cref{table:fp} for a picture of previous work. 
\begin{table}[h] 
\centering
\begin{tabularx}{\textwidth}{|X|X|c|c|}
\hline
\textbf{$F_p$} & \textbf{Stream} & \textbf{Upper bound} & \textbf{Lower bound} \\ \hline

$p = 0$ & insertion-only & $O(\eps^{-2} + \log n)$ \cite{KNW2010optimal} &  $\Omega(\eps^{-2} + \log n)$ \cite{W2004optimal}\\ \hline

$p = 0$ & turnstile & $O(\eps^{-2} \log n \log\log n)$ \cite{KNW2010optimal} & $\Omega(\eps^{-2}\log n \log\log n)$ \cite{DMWY2023separating}
\\ \hline

$p = 1$ & insertion-only & $O(\log \log n + \log \eps^{-1}) \cite{NY2022optimal}$ & $\Omega(\log \log n + \log \eps^{-1})$ \cite{NY2022optimal}
\\ \hline

$0 < p < 2$ & insertion-only, turnstile & $O(\eps^{-2} \log n)$ \cite{KNW2010exact} & $\Omega(\eps^{-2} \log n)$ \cite{KNW2010exact}\\ \hline

$p = 2$ & insertion-only, turnstile & $O(\eps^{-2} \log n )$ \cite{AMS1999space} & $\Omega(\eps^{-2} \log n)$ \cite{BZ2024optimality} \\ \hline
$p > 2$ & turnstile & $O(n^{1-2/p}(\eps^{-2} + \eps^{-4/p} \log n))$ \cite{G2011polynomial}& $\Omega(\eps^{-2} n^{1-2/p} \log n)$ \cite{LW2013tight}\\
\hline
\end{tabularx}
\caption{Upper and lower bounds for $F_p$ estimation. All space bounds are in bits.}
\label{table:fp}
\end{table}
There is also a long line of work on $\ell_p$ sampling, which is closely related. In this problem, we are given frequency vector $\bx \in \Z^n$ and the goal is to return an index $i \in \{1,2,\ldots, n\}$ with probability $|x_i|^p / \| \bx \|_p^p$. There are generally two samplers that are studied: approximate \cite{MW2010pass, AKO2010streaming, JST2011tight} and perfect \cite{FIS2008sampling, JW2021perfect, WXZ2025perfect}. An approximate sampler is one that returns an index $i$ with probability $(|x_i^p| / \| \bx \|_p^p)  \cdot (1 \pm \eps) + \poly(1/n)$ for $\eps \in (0,1)$. When we have an approximate sampler where $\eps = 0$, we call these ``perfect" samplers. 
The best known lower bound is $O(\log^3 n)$ (including a $\log n$ factor from the failure probability) for turnstile streams by Kapralov, Nelson, Pachocki, Wang, Woodruff, and Yahyazadeh~\cite{KNPWWY2017optimal}.
See \cite{JW2021perfect} for a more thorough discussion and table on previous work. 

We also discuss prior work on techniques and statistics related to the ones we study. We again emphasize that we are not aware of any work on the top-$k$ and trimmed-$k$ statistics that we study in this paper. 
Daliri, Freire, Musco, Santos, and Zhang~\cite{DFMSZ2024sampling} study inner product estimation via linear sketches, using priority sampling to preserve large coordinates. Although their focus is different, the underlying sampling idea is related to techniques for trimmed statistics, where one must isolate the contribution of heavy coordinates. Cohen and Kaplan~\cite{CK2007bottom} studied bottom-$k$ sketches. Here, there is a ground set where each item in the set is given an independent random rank dependent on the weight of the item. Then, for some input subset, a bottom-$k$ sketch contains the $k$ items with the lowest ranks. This relates to our setting, where we also study approximations of statistics defined by restricting our attention to subsets of the data. 

\subsection{Road Map}
In \Cref{sec:prelim} we have our preliminaries. In \Cref{sec:TO} we give a technical overview. In \Cref{sec:topk} we present our algorithm for computing the $F_p$ for $p \in [0,2]$ for the top-$k$ vector. In \Cref{sec:trimkbody} we present our algorithm for computing the $F_p$ for $p \in [0,2]$ of the trimmed-$k$ vector. In \Cref{sec:p2} we present the extension of our trimmed statistic algorithms to $F_p$ for $p>2$. In \Cref{sec:apps} we present a few applications of our algorithms (which include algorithms for impact indices $h$, $g$, and $a$). 
In \Cref{sec:LBs} we give our hardness results. In \Cref{sec:exper} we give our experiments. 

\section{Preliminaries} \label{sec:prelim}
\paragraph{Notation.}
We use $x_i$ to denote the $i^{\text{th}}$ entry of input vector $\bx$ or equivalently the frequency of element $i$. $\rank(v)$ denotes the rank of item $v$ in vector $\bx$, or the number of entries in $\bx$ with value \emph{greater} than $v$. $\bx_{-k}$ denotes the vector $\bx$ excluding the top $k$ frequencies by absolute value. $|\bx|$ denotes the length/dimension of vector $\bx$. In general, we boldface vectors and matrices. 

\paragraph{Count-Sketch and heavy hitters.}
\label{sec:count-sketch}
We review the Count-Sketch algorithm~\cite{CCF2002finding}. We have $q$ distinct hash functions $h_i: [n] \to [B]$ and an array $\bC$ of size $q \times B$. Additionally, we have $q$ sign functions $g_i: [n] \to \{-1, 1\}$. The algorithm maintains $\bC$ (a Count-Sketch structure) such that $C[\ell, b] = \sum_{j: h_\ell(j) = b} g_{\ell} (j) \cdot x_j$. The frequency estimation $\hat{x}_i$ of $x_i$ is defined to be the median of $\{g_\ell(i) \cdot C[\ell, h_\ell(i)]\}_{\ell \le q}$. Here, the parameter $B$ is the number of the buckets we use in this data structure. Formally, when $q = O(\log n)$, we have with probability at least $1 - 1/\poly(n)$, $|\hat{x}_i - x_i| \le O\left(\frac{\norm{\bx_{-B}}_2}{\sqrt{B}}\right)$. Based on this, we obtain the following Heavy Hitter data structure.

\begin{definition}
    Given parameters $\theta, k$, a \textsf{HeavyHitter} data structure $\mathcal{D}$ that receives a stream of updates to the frequency vector $\ff$ and provides a set $T \in [n]$ of heavy hitters of $\ff$, where
    \begin{enumerate}
        \item $i \in T$ if $f_i \geq \theta ||\ff_{-k}||_2,$
        \item For every $i \in T$, we have $f_i \ge \frac{9}{10}\cdot \theta ||\ff_{-k}||_2$
    \end{enumerate}
     where $\ff_{-k}$ is the frequency vector excluding the top $k$ frequencies. Moreover, for every $i \in T$, $\mathcal{D}$ can estimate $f_i$ up to $\frac{1}{k} \norm{\ff_{-k}}_2$ additive error. 
\end{definition}

\paragraph{Turnstile streaming model.}
Our input is an $n$-dimensional frequency vector $\bx$, initially all zeros. The algorithm processes a stream of updates of the form $(i, \pm 1)$, which increment or decrement $\bx_i$. Updates arrive in arbitrary order, and we assume the stream has length at most $\poly(n)$. The goal is to process the stream using sublinear space and only one pass. As is standard, we assume each coordinate is bounded above by an integer $m$ throughout the stream.

\paragraph{Linear sketches.} \label{prelim:LS}
We can compress an $n$-dimensional vector $\bx$ by multiplying it with a matrix $\bS \in \R^{r \times n}$. A matrix $\bS$ is a linear sketch if it is drawn independently of $\bx$ and if $\bS(x + c_i) = \bS x + \bS c_i$ for any update $c_i$ that changes a single entry of $\bx$ by $\pm 1$. Independence allows $\bS$ to be generated without knowledge of $\bx$, and linearity ensures that updates can be incorporated without storing $\bx$ explicitly. In practice, $\bS$ is stored implicitly using pseudorandom hash functions, enabling efficient updates. The goal is to minimize the required space, ideally making $r << n$. 

\paragraph{Subsampling scheme $P$.}
In our algorithms we require a subsampling scheme $P$ such that the $i$-th level has subsampling probability $r_i = 2^{-i}$ to each coordinate of the underlying vector. A hash function is used to remember which coordinates are subsampled in each level. If coordinates $j_1, j_2, \cdots, j_w$ are subsampled in the $i^{\text{th}}$ subsampling level, we are not storing these coordinates explicitly. Rather, this subsampling level only looks at updates which involve $j_1, j_2, \cdots, j_2$ to update its structures (in our case, its heavy hitters structure).

\paragraph{Derandomization.}

Throughout our paper, we assume that our hash functions exploit full randomness. Such an assumption can be removed with an additional $\poly(\log n)$ factor, e.g., the use of Nisan’s pseudorandom generator~\cite{N1992pseudorandom} or its variants. We refer the readers to a more detailed discussion in~\cite{JW2021perfect} (Theorem 5).

\section{Technical Overview}
\label{sec:TO}
We begin with the problem of estimating the $F_p$ norm of the top $k$ frequencies, starting with the simple case where each of the top $k$ frequencies has value $a_k$. 
Suppose that the condition
\begin{equation} \label{eq:condition}
a_k^2 \ge \poly(\eps/\log n) \cdot \frac{\| \bx_{-k}\|_2^2}{k}
\end{equation}
holds. A naive approach is to run Count-Sketch (\Cref{sec:count-sketch}) with $O(k)$ buckets, whose tail error is 
$O\!\left(\frac{\|\bx_{-k}\|_2}{\sqrt{k}}\right) = O(a_k)$. This captures each of the top-$k$ items but uses $\Theta(k)$ buckets, which is suboptimal when $k$ is large. Our goal is to use only $\poly(1/\eps, \log n)$ bits of space.

To reduce the space, we first sub-sample the stream. Let $\hat{\bx}$ be the sub-sampled stream obtained by sampling each coordinate independently with probability
$p = \min(1, c/(\eps^2 k))$,
for a constant $c$. Then $pk = \Theta(1/\eps^2)$, so in expectation the sub-stream contains $\Theta(1/\eps^2)$ of the top-$k$ items. A Chernoff bound shows that after rescaling by $1/p$, the number of surviving coordinates in $\hat{\bx}$ that were in the top $k$ is $(1 \pm \eps) \cdot k$ with high probability.

The key point here is that sub-sampling greatly reduces the tail mass while preserving enough of the coordinates in the top $k$. This allows us to use a much smaller Count-Sketch. 
Specifically, let $k' = \poly(1/\eps, \log n)$ and apply Count-Sketch with $k'$ buckets to the sub-stream $\hat{\bx}$. The tail error in this sub-stream is
$O\left( \frac{\|\hat{\bx}_{-k'}\|_2}{\sqrt{k'}} \right).$
Since only a $p$-fraction of the tail survives, $\|\hat{\bx}_{-k'}\|_2$ shrinks by approximately $\sqrt{p}$, and thus the tail error is
\[
O\left( \frac{\|\hat{\bx}_{-k'}\|_2}{\sqrt{k'}} \right) = \poly(\eps/\log n) \cdot \frac{\|\bx_{-k}\|_2}{\sqrt{k}}.
\]
Combined with condition~\eqref{eq:condition}, this error is small enough to correctly isolate every surviving coordinate of the top $k$ in the sub-stream. After rescaling by $1/p$, we obtain a $(1\pm\eps)$ approximation to the $F_p$ norm of the top-$k$. On the other hand, when condition~\eqref{eq:condition} fails, a reduction to a variant of the 2-SUM problem of~\cite{CLLTWZ2024tight} shows that any such approximation requires $n^{\Omega(1)}$ bits of space.

We now turn to the general case. Here we use the level-set framework introduced by~\cite{IW2005optimal}. We divide the coordinates of $\bx$ into exponentially growing intervals 
\[
[0,1+\eps),\ [1+\eps,(1+\eps)^2),\ \dots,
\]
up to the maximum coordinate called level sets. 
At $O(\log n)$ decreasing subsampling rates, we maintain $\ell_2$ heavy hitters for each subsampled stream. For every level set that contributes significantly to the top-$k$ portion of the $F_p$ norm (i.e., whose size multiplied by its level value contributes non-negligibly to the $F_p$ mass), we locate a subsampling rate at which there are $\Theta(1/\varepsilon^2)$ expected survivors from that level. Using the stored heavy hitters for this sub-stream, we estimate the number of coordinates belonging to that level set. Summing these estimated contributions over all relevant levels yields a good approximation to the top-$k$ $F_p$ moment.

The algorithmic framework is standard, but our analysis is not. Prior work such as~\cite{IW2005optimal} analyzes the tail norm at a single truncation level and does not need to compare multiple truncation levels simultaneously. In contrast, our setting requires understanding how the residual norms $\|x_{-t}\|_2$ behave across many adjacent values of $t$. A naive approach fails because the sketching noise can mask the small differences between $\|x_{-t}\|_2$ and $\|x_{-(t+1)}\|_2$. Our main technical contribution is a refined multi-level analysis that ensures these comparisons are stable under sketching noise.

To explain this structure, consider a contributing level set $[v = (1+\eps)^i,\ (1+\eps)^{i+1})$, 
and let $s_i$ be the number of coordinates in this interval. Let $\mathrm{rank}(v)$ be the number of coordinates greater than $v$. We show that condition~\eqref{eq:condition} implies the multi-level analogues
\[
v^2 s_i \ge \poly(\eps/\log n)\cdot \|x_{-\mathrm{rank}(v)}\|_2^2,
\qquad
s_i \ge \poly(\eps/\log n)\cdot \mathrm{rank}(v).
\]
These inequalities state that the coordinates in this level set carry enough $\ell_2$ mass to dominate the tail after $\mathrm{rank}(v)$. This allows us to run a sub-sampled Count-Sketch with sampling probability
$p = \min(1, c/(\eps^2 s_i)),$
so that the expected number of survivors from this level is $\Theta(1/\eps^2)$. The Count-Sketch tail error under this sampling is
\[
O\left( \frac{ \|x_{-\mathrm{rank}(v)}\|_2 }{\sqrt{s_i}} \right) \cdot \poly(\eps/\log n),
\]
which is small compared to $v$, allowing us to identify all important coordinates in this level. After rescaling, this gives a $(1\pm\eps)$ approximation of $s_i$. To our knowledge, this type of multi-level residual comparison has not appeared before and is the core analytical novelty of our work.

We next consider estimating the $F_p$ moment of the trimmed-$k$ vector, obtained by removing the largest and smallest $k$ coordinates. Naively, we could estimate this by taking the difference between the $F_p$ moment of the top $(n-k)$ frequencies and the $F_p$ moment of the top $k$ frequencies which has error $O(\eps) \cdot (\sum_{i = 1}^{n - k} |a_i|^p)$. 
To achieve better error, we rely on a key observation: we can compute both the top-$(n-k)$ and top-$k$ moments using the \emph{same} level-set decomposition and the same estimates $\tilde{s}_j$ for the sizes of all level sets. This means that when the estimator identifies the position where the first $k$ items end, both computations use the \emph{same approximate cutoff}. Because of this,  the large contribution of the top $k$ coordinates to the error cancels when we subtract the two estimates. In particular, what we are actually estimating is 
$\sum_{i=u}^{u+n-2k} |a_i|^p$
for some $u = k \pm O(\eps k)$, corresponding to the middle portion of the vector after discarding the top $k$ items up to the available accuracy.
Since each contributing level set is estimated to within a $(1\pm\eps)$ factor, the resulting estimate of this block incurs a total error of
$\eps\left( \sum_{i=k}^{n-k} |a_i|^p \;+\; k|a_{k-\eps k}|^p \right).$
We also give a hard instance to show that this extra additive error term $\eps \cdot k|a_{k - \eps k}|^p$ is unavoidable. 

\section{$F_p$ Moment for $p \in [0,2]$ of Top-$k$ Frequencies} \label{sec:topk}

Here, we prove the following theorem with an algorithm that estimates the $F_p$ moment for $p\in[0,2]$ of the top $k$ frequencies. 

\begin{theorem}[$F_p$ of top-$k$ vector for $0 \leq p \leq 2$] \label{thm:top_k}
Given $\bx \in \Z^n$,  $ p \in [0,2]$, $k \geq 0$, and $\eps \in (0,1)$, suppose that we have 
    \begin{equation}
    \label{eq:1bb}
    a_k^2 \ge (\eps/\log n)^c \cdot \frac{\| \bx_{-k}\|_2^2}{k}.
    \end{equation}
    for some constant $c$. Then there exists a linear sketch that uses $\poly(\log n / \eps)$ bits of space (where the exponent depends on the value of $c$) and estimates $\sum_{i = 1}^k |a_i|^p$ up to a $(1 \pm \eps)$ multiplicative factor with high constant probability.
\end{theorem}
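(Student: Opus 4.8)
The plan is to use a multi-level subsampling scheme together with a Count-Sketch--style heavy-hitter data structure at each level, and to exploit the condition~\eqref{eq:1bb} to argue that the top $k$ frequencies, although not individually recoverable, are collectively well-behaved enough to be estimated. First I would set up $\Theta(\log n)$ levels, where at level $j$ we subsample each coordinate independently with probability $2^{-j}$ (implemented with a hash function built into the linear sketch so the whole construction is linear in $\bx$), and on the surviving coordinates run a Count-Sketch (or $\mathrm{CountSketch}$ composed with a hashing into $\poly(\log n/\eps)$ buckets) that lets us recover, with high probability, all coordinates that are $\phi$-heavy with respect to the $\ell_2$ norm of the subsampled vector, for a suitable threshold $\phi = \poly(\eps/\log n)$. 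I would then partition the coordinates $\{1,\dots,n\}$ into level sets $S_t = \{i : |x_i| \in (2^{t}, 2^{t+1}]\}$ and, for each level set that contributes non-negligibly to $\sum_{i\le k}|a_i|^p$, identify the subsampling level $j(t)$ at which a constant fraction of the $S_t$-coordinates both survive and become heavy hitters; counting the recovered heavy hitters at that level and rescaling by $2^{j(t)}$ gives an unbiased estimate of $|S_t|$, and hence of the contribution of $S_t$ to the moment, once we know which part of $S_t$ lies among the top $k$.

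The second ingredient is to determine, for each level set, how much of it falls within the top-$k$ prefix $\{a_1,\dots,a_k\}$ versus the tail $\bx_{-k}$. The point of condition~\eqref{eq:1bb} is exactly that $a_k^2 \ge (\eps/\log n)^c \|\bx_{-k}\|_2^2/k$, i.e., the $k$-th largest squared frequency is not much smaller than the average squared tail mass; this is what guarantees that every level set intersecting the top-$k$ prefix is ``heavy'' relative to the overall $\ell_2$ mass at an appropriate subsampling level, so the heavy-hitter recovery actually sees it. Concretely, I would argue that only $O(\log(1/\phi)) = O(\log(\log n/\eps))$ level sets straddle the cutoff index $k$, and for those I would use the subsampled heavy-hitter counts to estimate both $|S_t|$ and the rank of the smallest element of $S_t$ (i.e., how many coordinates have absolute value at least $2^{t+1}$), again by summing recovered counts across level sets and rescaling; comparing these rank estimates to $k$ tells us what fraction of each straddling level set to include. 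Summing $(\text{estimated count}) \cdot (2^{t})^p$ over the included level sets, with the fractional correction on the straddling ones, yields the estimate of $\sum_{i\le k}|a_i|^p$.

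For the error analysis I would (i) bound the variance of each per-level-set count estimate using standard subsampling concentration (Chernoff on the number of surviving coordinates, plus the Count-Sketch guarantee that heavy hitters in the subsampled vector are recovered with small additive $\ell_2$ error), (ii) show that the level sets too light to be recovered at any level contribute at most an $\eps$-fraction of $\sum_{i\le k}|a_i|^p$ — this is where I would need to compare residual norms $\|\bx_{-k}\|_2$ with the tail of $\ba$ at the relevant level and invoke~\eqref{eq:1bb} to rule out the bad case where a lot of top-$k$ mass hides in unrecoverable small level sets, and (iii) union-bound over the $O(\log n)$ levels and $O(\log(\log n/\eps))$ straddling level sets. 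The main obstacle I anticipate is step (ii) together with the straddling-level-set correction: unlike the plain $F_p$-estimation setting, here an error in estimating the rank of a level set near the cutoff translates directly into including or excluding $\Theta(k)$ coordinates each of size $\approx a_k$, so I must show the rank estimates are accurate to within an $\eps$-fraction of $k$, and this accuracy requirement is precisely what forces condition~\eqref{eq:1bb} (and, presumably, what drives the matching $n^{\Omega(1)}$ lower bound when it fails). Getting the subsampling probabilities, the heavy-hitter threshold $\phi$, and the number of repetitions calibrated so that the rank error is $\le \eps k$ with high probability — while keeping the total space at $\poly(\log n/\eps)$ — is the delicate part of the argument.
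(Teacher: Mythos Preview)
Your overall framework---multi-level subsampling plus Count-Sketch heavy hitters, level-set decomposition, rescaled counts, and a fractional correction at the cutoff---is exactly the paper's approach. Two points need sharpening before the argument goes through.

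First, geometric level sets with ratio $2$ cannot deliver a $(1\pm\eps)$-estimate: within $S_t=\{i:|x_i|\in(2^t,2^{t+1}]\}$ the value $|x_i|^p$ varies by a factor $2^p$, so even with exact counts your sum $\sum_t |S_t|\cdot 2^{tp}$ is only a constant-factor approximation. The paper uses $(1+\eps)$-geometric level sets $S_j=\{i:|x_i|\in[\zeta(1+\eps)^{t-j-1},\zeta(1+\eps)^{t-j})\}$ (with a random shift $\zeta$ to avoid boundary issues), so that values inside a set agree up to $(1\pm\eps)^p$. This also means there are $\Theta(\eps^{-1}\log m)$ level sets, not $O(\log m)$, which affects your union bounds and your ``non-contributing'' threshold.

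Second, and more substantively, the step where you invoke condition~\eqref{eq:1bb} to argue that contributing level sets become heavy at the right subsampling level is underdeveloped. The condition controls $a_k$ against $\|\bx_{-k}\|_2$, but at the subsampling level tuned to a level set $S_j$ with value $\approx v$, the Count-Sketch tail error scales with $\|\bx_{-\mathrm{rank}(v)}\|_2/\sqrt{s_j}$, not with $\|\bx_{-k}\|_2/\sqrt{k}$. The paper's key technical lemma (their Lemma~\ref{lem:condition}) shows that if $S_j$ contributes at least an $\eps^2/\log m$ fraction of $\sum_{i\le k}|a_i|^p$ and $v\ge|a_k|$, then condition~\eqref{eq:1bb} forces $v^2 s_j \ge (\eps/\log n)^{c+4}\|\bx_{-\mathrm{rank}(v)}\|_2^2$; the proof requires bounding $\|\bx_{-\mathrm{rank}(v)}\|_2^2-\|\bx_{-k}\|_2^2$ by a geometric-sum argument over intermediate level sets, each of which is itself size-controlled by the ``contributes'' hypothesis. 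Your step~(ii) gestures at this (``compare residual norms $\|\bx_{-k}\|_2$ with the tail of $\ba$ at the relevant level'') but does not articulate the two-step translation---first from $a_k^2 k$ to $v^2 s_j$ via the contribute condition, then from $\|\bx_{-k}\|_2^2$ to $\|\bx_{-\mathrm{rank}(v)}\|_2^2$ via the intermediate level-set bound---which is where the refined analysis actually lives. Also, your claim that ``only $O(\log(\log n/\eps))$ level sets straddle the cutoff index $k$'' is not right as stated: exactly one level set straddles the index $k$, and the correction there is handled by the running-count argument (find $i$ with $\sum_{j<i}\tilde s_j<k\le\sum_{j\le i}\tilde s_j$), not by a separate rank estimator per level set.
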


\begin{remark}
    When $k \geq \polylog n$, condition~(\ref{eq:1bb}) is always met. Therefore, this directly gives us the result for moderately large $k$. 
\end{remark}

Before presenting the full algorithm, we first give our \Cref{alg:levelSetEst} which estimates the size of each ``contributing" level set that contains at least one top-$k$ item up to a $(1 \pm \eps)$-factor. We then give the formal definition of what it means for a level set to ``contribute" and show that accurately estimating the sizes of contributing level sets gives a good final approximation. 

\begin{algorithm}[t]
\caption{Level-Set-Estimator ($\eps \in (0, 1]$)}
\label{alg:levelSetEst}
\begin{algorithmic}[1] 
\REQUIRE $(i)$ A subsampling scheme $P$ such that the $i$-th level has subsampling probability $r_i = 2^{-i}$ to each coordinate of the underlying vector; $(ii)$ an upper bound on each coordinate $m$; $(iii)$ $L + 1$ HeavyHitter structures $\mathcal{D}_0, \ldots, \mathcal{D}_L$ with parameter $\theta= (\eps / \log n)^{c + 9}$ where $L = \log n$ and  $\mathcal{D}_i$ corresponds to the $i$-th substream. 
\STATE $t_0 \leftarrow K \log(\eps^{-1}\log m) $ where $K$ is a large constant. 
\STATE $\zeta \leftarrow$ uniform random variable between $[1/2, 1]$. 
\STATE $t \leftarrow \log_{1+\eps}(m) + 1$. 
\FOR{$j = 0, \ldots, L$}
    \STATE $\Lambda_j \leftarrow \poly(\eps/L)$ heavy hitters from $\mathcal{D}_j$. 
\ENDFOR
\FOR{$j = 0, \ldots, t_0$}
    \STATE Let $\tilde{s_j}$ be the number of elements contained in $\Lambda_0$ in $ [\zeta (1+\eps)^{t-j-1}, \zeta (1+\eps)^{t-j}]$. \label{line:8}
\ENDFOR 
\FOR{$j = t_0 + 1,..., t-1$}
    \STATE Find the largest $\ell$ s.t. $\Lambda_\ell$ contains $z = \Theta(\log n/\eps^2)$ elements 
    in $[\zeta (1+\eps)^{t-j-1}, \zeta (1+\eps)^{t-j}]$. 
    \IF{such $\ell$ exists} 
        \STATE $\tilde{s_j} \leftarrow z \cdot 2^\ell.$
    \ELSE 
        \STATE $\tilde{s_j} \leftarrow 0$. 
    \ENDIF
\ENDFOR
\STATE Output each $\tilde{s_j}$ for all $j \in [t]$. 
\end{algorithmic}
\end{algorithm}

\begin{algorithm}[t]
\caption{Ky-Fan-$k$-Norm-Estimator ($\eps \in (0, 1]$, $p \in (0, 2]$, $k \geq 0$)}
\label{alg:topk}
\begin{algorithmic}[1] 
\STATE Run \hyperref[alg:levelSetEst]{Level-Set-Estimator ($\eps$)}.
\STATE Find the $i$ such that $\sum_{j=0}^{i - 1} \tilde{s}_j < k$ and $\sum_{j=0}^{i} \tilde{s}_j \ge k$. 
\STATE \textbf{Return} $\left(\sum_{j=0}^{i - 1} \tilde{s_j} \cdot \zeta^p (1+\eps)^{p(t-j)}\right) + \left(k - \sum_{j = 0}^{i - 1} \tilde{s}_j\right) \zeta^p(1+\eps)^{p(t-i)}$. 
\end{algorithmic}
\end{algorithm}

\begin{definition} \label{def:contribute}
    Suppose that $m$ is an upper bound on $\|\bx\|_{\infty}$ and $t = 1 + \log_{1 + \eps} m$. Let $\zeta$ be a uniform random variable in $[1/2, 1]$. Define the level set $S_j$ for $j \in [1, \log_{1+\eps}m]$ to be 
    \[
    S_j = \{i \in [n]: |x_i| \in [\zeta (1+\eps)^{t-j-1}, \zeta (1+\eps)^{t-j})\} 
    \]
    and $s_j = |S_j|$. We say the level set $S_j$ ``contributes" if
    \[
    \sum_{i \in S_j} |x_i|^p \ge \frac{\eps^2}{\log^2 n} \left(\sum_{i = 1}^k |a_i|^p\right).
    \]
\end{definition}

\begin{lemma} \label{lem:condition}
     Suppose that $a_k^2 \ge (\eps/\log n)^c \cdot \frac{\norm{\bx_{-k}}_2^2}{k}$. For any $j$ such that $\zeta (1+\eps)^{t-j} \ge |a_k|$ and $S_j$``contributes", 
    taking $v = \zeta(1+\eps)^{t - j - 1}$ we have $v^2 \cdot s_j \ge (\eps / \log n)^{c + 5} \cdot \norm{\bx_{-\mathrm{rank}(v)}}_2^2$
    where $\mathrm{rank}(v)$ is the rank of $v$ in array $\bx$ (i.e., the number of entries in $\bx$ with value greater than $v$).
\end{lemma}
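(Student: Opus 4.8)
The plan is to convert the $F_p$-type ``contribute'' hypothesis into an $F_2$-type lower bound on $\sum_{i\in S_j}|x_i|^2$ and then play this off against the main hypothesis $k\,a_k^2\ge(\eps/\log n)^c\norm{\bx_{-k}}_2^2$. Two elementary observations come first. Since every $i\in S_j$ has $|x_i|\in[v,(1+\eps)v)$ and $(1+\eps)^2\le4$ for $\eps\le1$, we have $v^2 s_j\ge\frac14\sum_{i\in S_j}|x_i|^2$, so it suffices to lower bound the latter. And since $\zeta(1+\eps)^{t-j}=(1+\eps)v$, the assumption $\zeta(1+\eps)^{t-j}\ge|a_k|$ is precisely $(1+\eps)v\ge|a_k|$, hence $v\ge|a_k|/2$.

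Next I would establish the key inequality
\[
\sum_{i\in S_j}|x_i|^2\ \ge\ v^{2-p}\cdot\frac{\eps^2}{\log m}\sum_{i=1}^k|a_i|^p,
\]
obtained by writing $|x_i|^2=|x_i|^p\,|x_i|^{2-p}\ge v^{2-p}|x_i|^p$ for $i\in S_j$ (valid since $2-p\ge0$ and $|x_i|\ge v$), summing over $S_j$, and applying ``$S_j$ contributes''. From it I would extract two consequences. \emph{Bound (a):} using $\sum_{i=1}^k|a_i|^p\ge k|a_k|^p$, $v\ge|a_k|/2$, and $2^{2-p}\le4$, the key inequality yields $\sum_{i\in S_j}|x_i|^2\ge\frac{\eps^2}{4\log m}\,k\,a_k^2$, so the main hypothesis gives $\sum_{i\in S_j}|x_i|^2\ge\frac{\eps^2(\eps/\log n)^c}{4\log m}\norm{\bx_{-k}}_2^2$. \emph{Bound (b):} write $R=\mathrm{rank}(v)$ and $M:=\sum_{i=R+1}^{k}|a_i|^2$ (read as $0$ when $R\ge k$); every coordinate with rank $>R$ has absolute value at most $v$, so for $R<i\le k$ we have $|a_i|^p\ge v^{p-2}|a_i|^2$, whence $\sum_{i=1}^k|a_i|^p\ge\sum_{i=R+1}^{k}|a_i|^p\ge v^{p-2}M$; substituting this into the key inequality the two powers of $v$ cancel and we get $\sum_{i\in S_j}|x_i|^2\ge\frac{\eps^2}{\log m}\,M$.

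To conclude, I would observe that in all cases $\norm{\bx_{-R}}_2^2\le\norm{\bx_{-k}}_2^2+M$ (equality when $R<k$, and when $R\ge k$ one is deleting at least the top $k$ coordinates), and then combine (b) with $\frac{4}{(\eps/\log n)^c}$ times (a):
\[
\Bigl(1+\frac{4}{(\eps/\log n)^c}\Bigr)\sum_{i\in S_j}|x_i|^2\ \ge\ \frac{\eps^2}{\log m}\bigl(M+\norm{\bx_{-k}}_2^2\bigr)\ \ge\ \frac{\eps^2}{\log m}\,\norm{\bx_{-R}}_2^2 .
\]
Since $\eps/\log n\le1$ gives $1+\frac{4}{(\eps/\log n)^c}\le\frac{5}{(\eps/\log n)^c}$, and $v^2 s_j\ge\frac14\sum_{i\in S_j}|x_i|^2$, this becomes $v^2 s_j\ge\frac{(\eps/\log n)^c\eps^2}{20\log m}\norm{\bx_{-R}}_2^2$. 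As the stream has length $\poly(n)$ we have $\log m=O(\log n)$, so for $n$ larger than an absolute constant $\frac{\eps^2}{20\log m}\ge(\eps/\log n)^4$ (this reduces to $(\log n)^4\ge20\eps^2\log m$, which holds using $\eps\le1$), and therefore $v^2 s_j\ge(\eps/\log n)^{c+4}\norm{\bx_{-\mathrm{rank}(v)}}_2^2$, as claimed.

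The single delicate point is bound (b). The naive estimate $\norm{\bx_{-\mathrm{rank}(v)}}_2^2\le\norm{\bx_{-k}}_2^2+(k-\mathrm{rank}(v))\,v^2$ is far too weak: bounding each ``middle'' coordinate (of rank between $\mathrm{rank}(v)+1$ and $k$) by $v$ can lose a factor as large as $v^2/a_k^2$. The fix is to notice that those middle coordinates and the coordinates of $S_j$ have comparable magnitude on the $p$-th-power scale, which is exactly what makes the $v^{2-p}$ from the key inequality and the $v^{p-2}$ from the pointwise bound cancel; then ``contribute'' dominates $M$ directly. Everything else is routine Hölder-type pointwise inequality and bookkeeping of $\poly(\eps/\log n)$ factors; I note that the argument is uniform over $p\in[0,2]$ and also handles the degenerate case $a_k=0$, which forces $\norm{\bx_{-k}}_2=0$ by the main hypothesis, after which bound (b) alone suffices since then $\norm{\bx_{-R}}_2^2=M$.
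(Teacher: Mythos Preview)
Your proof is correct, and it takes a genuinely different route from the paper. The paper first shows $v^2 s_j \ge (\eps/\log n)^{c+2}\norm{\bx_{-k}}_2^2$ (essentially your bound (a), argued by contradiction), but then controls the gap $\norm{\bx_{-\rank(v)}}_2^2 - \norm{\bx_{-k}}_2^2$ by a dyadic decomposition: it slices $[|a_k|,v)$ into bands $T_\ell = \{i : |x_i|\in[v/2^{\ell+1},v/2^\ell)\}$, bounds each $|T_\ell|$ by $O(\log m/\eps^2)\,s_j\,2^{\ell p}$ via the ``contribute'' condition, and sums to get $\sum_{\rank(v)\le i\le k}|a_i|^2 \le v^2 s_j \cdot O(\log^2 n/\eps^2)$.

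Your bound (b) replaces that entire decomposition by the pointwise inequality $|a_i|^p \ge v^{p-2}|a_i|^2$ for the middle coordinates, which cancels perfectly against the $v^{2-p}$ factor coming from the key inequality applied to $S_j$. This is cleaner and shorter: you avoid the auxiliary level sets and the geometric sum, and you get a slightly tighter constant (no extra $\log n$ from summing over dyadic scales). The paper's dyadic argument, on the other hand, is closer in spirit to the algorithmic level-set machinery used elsewhere in the paper and perhaps makes the structure of the ``middle'' mass more explicit. Both arguments hinge on the same two ingredients (the ``contribute'' lower bound and the hypothesis on $a_k$), so the difference is purely in how the $F_p$--$F_2$ translation for the middle coordinates is carried out.
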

\begin{proof}
    Since $S_j$ contributes, we have 
    \[
    s_j v^p \geq (\eps / \log n)^2 k a_k^p. 
    \]
    Multiplying both sides by $v^{2-p}$ gives 
    \begin{equation} \label{eq:assump1}
    s_j v^2 \geq (\eps / \log n)^2 k a_k^p v^{2-p} \geq (\eps / \log n)^2 k a_k^2 
    \end{equation}
    where the last inequality is because we have $v^{2-p} \geq a_k^{2-p}$ since $v \geq a_k$ and $p \leq 2$. Now using the lemma assumption $a_k^2 \geq (\eps / \log n)^c \cdot \| \bx_{-k}\|_2^2 / k$ and multiplying by $k$ gives 
    \[
    k a_k^2 \geq (\eps / \log n)^c \| \bx_{-k}\|_2^2. 
    \]
    Substituting this into (\ref{eq:assump1}) gives us 
    \begin{equation} \label{eq:assump2}
        s_j v^2 \geq (\eps / \log n)^{c+2} \|\bx_{-k}\|_2^2. 
    \end{equation}
    We next measure the difference between $\norm{\bx_{-k}}_2^2$ and $\norm{\bx_{-\mathrm{rank}(v)}}_2^2$. Define the level set $T_\ell$ for $\ell \in [0,q-1]$ where
\[
T_\ell = \{i \in [n]: |x_i| \in [v/2^{\ell + 1}, v/2^\ell)\} , 
\]
for $v/2^{\ell + 1} \ge |a_k|$ and 
\[
T_q = \{i\in [n]: |x_i| \in [|a_k|, v/2^q)\} \;. 
\]
Let $t_\ell = |T_\ell|$, $t_q = |T_q|$. 
Now, we claim that  we have for $w \in [0, q]$ that $t_w \le O(\log^2 n / \eps^2) \cdot s_j \cdot 2^{wp}$. Otherwise we would have 
\[
\left(\sum_{i = 1}^k |a_i|^p\right) 
\ge \left(\frac{v}{2^{w + 1}}\right)^{p} \cdot t_w
\ge \left(\frac{v}{2^{w + 1}}\right)^{p} \cdot O\left(\frac{\log^2 n}{\eps^2}\right) s_j 2^{wp} \ge O\left(\frac{\log^2 n}{\eps^2}\right) \left(\sum_{i \in S_j} |x_i|^p\right) \;
\]
which contradicts the fact that $S_j$ contributes. 
So now, taking a sum we get 
\[
\sum_{\mathrm{rank}(v) \le i \le k} a_i^2 \le \sum_{i \in [q]} t_i  \left(\frac{v}{2^{i}}\right)^2 \leq \sum_{i \in [q]} \left( \frac{v}{2^i} \right)^2 O\left(\frac{\log^2 n}{\eps^2}\right) s_j 2^{ip} \le v^2 s_j \cdot O\left(\frac{\log^3 n}{\eps^2}\right) \;.
\]
This implies that 
\begin{equation} \label{eq:rank}
\norm{\bx_{-k}}_2^2 \ge \norm{\bx_{-\mathrm{rank}(v)}}_2^2 - v^2 s_j \cdot O\left(\frac{\log^3 n}{\eps^2}\right) \;.
\end{equation}
Combining~\eqref{eq:assump2} and~\eqref{eq:rank} we immediately get that
\[
v^2 \cdot s_j \ge (\eps / \log n)^{c + 5} \cdot \norm{\bx_{-\mathrm{rank}(v)}}_2^2 \;. \qedhere
\]
\end{proof}

\begin{lemma} \label{lem:sjapprox}
    Consider some level set $S_j$ with $s_j = |S_j|$. Consider the subsampling stream $\mathcal{P}$ where each coordinate is sampled with probability $r = \min\left(1, \frac{C \log n}{s_j \eps^2} \right)$ for some constant $C$ and suppose that $S_j$ has $z$ survivors in this stream. Let $\tilde{s}_j = z / r$, then with probability $1 - 1/\poly(n)$ we have $(1 - \eps)s_j \le \tilde{s}_j \le (1 + \eps)s_j$. 
\end{lemma}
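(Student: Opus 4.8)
The plan is to apply a Chernoff bound to the indicator random variables tracking which coordinates of $S_j$ survive the subsampling. First I would observe that if $r = 1$ then there is nothing to prove, since $\tilde s_j = s_j$ deterministically; so assume $r = \frac{C \log n}{s_j \eps^2} < 1$, which in particular forces $s_j > C\log n / \eps^2$. For each $i \in S_j$, let $X_i$ be the indicator that $i$ is sampled into the substream $\mathcal{P}$, so the $X_i$ are independent Bernoulli$(r)$ variables and $z = \sum_{i \in S_j} X_i$. Then $\E[z] = r s_j = C\log n / \eps^2$, and $\tilde s_j = z / r$, so $(1-\eps) s_j \le \tilde s_j \le (1+\eps) s_j$ is equivalent to $|z - \E[z]| \le \eps \E[z]$.

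Next I would invoke the multiplicative Chernoff bound: for independent $\{0,1\}$ variables with mean $\mu = \E[z]$,
\[
\Pr\bigl[\,|z - \mu| \ge \eps \mu \,\bigr] \le 2\exp\!\left(-\frac{\eps^2 \mu}{3}\right).
\]
Plugging in $\mu = C \log n / \eps^2$ gives a failure probability at most $2\exp(-C \log n / 3) = 2 n^{-C/3}$, which is $1/\poly(n)$ for $C$ a sufficiently large constant. This is exactly the claimed bound, so the conclusion follows.

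The only subtlety — and the single place any care is needed — is the boundary case analysis for whether $r$ is truncated to $1$: when $s_j \le C\log n/\eps^2$ the sampling probability is clamped to $1$, every element of $S_j$ survives, $z = s_j$, and $\tilde s_j = s_j$ exactly, so the inequality holds trivially with probability $1$. Otherwise $r < 1$ and $\mu = C\log n/\eps^2$ exactly, which is what makes the Chernoff exponent large enough to beat $1/\poly(n)$ independently of the actual value of $s_j$. I do not anticipate any real obstacle here; the lemma is a direct concentration statement and the proof is essentially a one-line Chernoff application once the truncation case is dispatched. (If one wants independence from the randomness used elsewhere, note that the sampling hash function for level $\mathcal{P}$ can be taken independent of the HeavyHitter structures, and full randomness is assumed per the derandomization remark; a union bound over the $O(\log m)$ level sets then keeps the total failure probability $1/\poly(n)$.)
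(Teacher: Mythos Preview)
Your proposal is correct and matches the paper's proof essentially line for line: both define indicator variables for survival, compute $\E[z]=s_jr$, and apply the multiplicative Chernoff bound to get failure probability $2\exp(-C\log n/3)=1/\poly(n)$. Your explicit treatment of the $r=1$ truncation case is a small refinement the paper omits, but otherwise the arguments are identical.
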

\begin{proof}
Denote the coordinates in $S_j$ as $u_i$ for $i \in [s_j]$. Let $X_i$ for $i \in [s_j]$ denote an indicator random variable which is $1$ if $u_i$ was sampled in $\mathcal{P}$ and $0$ otherwise. We have that $\E[X_i] = r$ and $\E\left[\sum_i X_i\right] = s_j r$. Then, from Chernoff's bound we have that 
\[
\mathbf{Pr} \left[\left|\sum_i X_i - s_j r\right| \ge \eps s_j r \right] \le 2\exp(-\eps^2 s_j r / 3) \leq 2\exp(-C \log n/3) \le 1 / \poly(n). \qedhere
\]

\end{proof}

\begin{lemma} \label{lem:lowSL}
    Consider some level set $S_j$ with $s_j = |S_j|$. Consider the subsampling stream $\mathcal{P}/2$ where each coordinate is sampled with probability $r = \min\left(1, \frac{C \log n}{s_j \eps^2} \cdot \frac{1}{2} \right)$ for some constant $C$. With probability at least $1-1/\poly(n)$ there are less than $c\log n/\eps^2$ survivors from $S_j$. 
\end{lemma}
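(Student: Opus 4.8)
The plan is to mirror the Chernoff-bound argument from the proof of \Cref{lem:sjapprox}, but now only the \emph{upper} tail needs to be controlled and the subsampling rate has been halved. First I would dispose of the boundary case $r=1$: by the definition $r=\min\bigl(1,\frac{C\log n}{2 s_j \eps^2}\bigr)$, having $r=1$ forces $\frac{C\log n}{2 s_j \eps^2}\ge 1$, i.e.\ $s_j\le\frac{C\log n}{2\eps^2}$. In this case every coordinate of $S_j$ is kept in $\mathcal{P}/2$ with certainty, so the number of survivors equals $s_j\le\frac{C\log n}{2\eps^2}<\frac{c\log n}{\eps^2}$; this holds deterministically provided the constant $c$ satisfies $c>C/2$ (in the use of this lemma in the analysis of \Cref{alg:levelSetEst} one also takes $c<C$, but that is not needed here).

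In the main case $r=\frac{C\log n}{2 s_j \eps^2}<1$. As in \Cref{lem:sjapprox}, write $S_j=\{u_1,\dots,u_{s_j}\}$ and let $X_i$ be the indicator that $u_i$ is sampled into $\mathcal{P}/2$; the $X_i$ are independent (by the full-randomness assumption on the hash functions) with $\E[X_i]=r$, so $\mu:=\E\bigl[\sum_i X_i\bigr]=s_j r=\frac{C\log n}{2\eps^2}$. Writing the survivor threshold as $\frac{c\log n}{\eps^2}=(1+\delta)\mu$ with $\delta=\frac{2c}{C}-1$, the hypothesis $c>C/2$ makes $\delta$ a fixed positive constant, and the multiplicative Chernoff upper-tail bound gives
\[
\mathbf{Pr}\left[\sum_i X_i \ge \frac{c\log n}{\eps^2}\right] = \mathbf{Pr}\left[\sum_i X_i \ge (1+\delta)\mu\right] \le \exp\left(-\frac{\delta^2 \mu}{2+\delta}\right) \le \exp\left(-\frac{\delta^2 C\log n}{2(2+\delta)}\right) \le \frac{1}{\poly(n)},
\]
where the second-to-last inequality uses $\mu=\frac{C\log n}{2\eps^2}\ge\frac{C\log n}{2}$ since $\eps\le 1$, and the final bound holds by taking $C$ large enough relative to the desired polynomial degree. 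A union bound over all level sets $S_j$ (there are $\poly(n)$ of them over the course of the algorithm) then preserves the $1-1/\poly(n)$ guarantee.

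This lemma is essentially routine; the only point requiring care is the bookkeeping of constants — checking that a single $c$ works in both cases and that $c>C/2$, so that the gap between the expected survivor count $\mu$ and the threshold $(1+\delta)\mu$ is bounded away from zero. Since $C$ can be taken arbitrarily large and $c$ is chosen afterward in terms of $C$, there is no real obstacle. Intuitively, halving the sampling rate relative to the rate of \Cref{lem:sjapprox} is exactly what pushes the expected number of survivors of $S_j$ down to $\frac{C\log n}{2\eps^2}$, comfortably below $\frac{c\log n}{\eps^2}$, so the upper tail is strongly concentrated; in the analysis of \Cref{alg:levelSetEst} this guarantees that the level $\ell$ selected for a given level set is not chosen too large (equivalently, the sampling rate not too small), which would otherwise overestimate $\tilde{s}_j$.
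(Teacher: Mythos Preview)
Your proof is correct and follows essentially the same approach as the paper: introduce indicator variables for each coordinate of $S_j$, compute the mean $s_j r$, and apply the multiplicative Chernoff upper-tail bound. Your version is in fact more careful than the paper's --- you explicitly dispose of the boundary case $r=1$ and track the relation between $c$ and $C$, whereas the paper's argument implicitly takes $c=C$ (writing $c\log n/\eps^2 = 2\E[X]$) and omits the $r=1$ case; the final union bound over level sets is not needed for the lemma as stated (which concerns a single $S_j$), but it does no harm.
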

\begin{proof}
    Denote the coordinates in $S_j$ as $u_i$ for $i \in [s_j]$. Let $X_i$ for $i \in [s_j]$ denote an indicator random variable which is $1$ if $u_i$ was sampled in $\mathcal{P}/2$ and $0$ otherwise. We have that $\E[X_i] = r$ and $\E[X] = s_j r$ for $X = \sum_i X_i$. From Chernoff's bound we have that 
    \[
    \mathbf{Pr}\left[X \geq \frac{c \log n}{\eps^2}\right ] = \mathbf{Pr}[X > 2 \cdot \E[X]] \leq 2 \exp (-C \log n / 6) \leq 1/\poly(n).  \qedhere
    \]
\end{proof}

\begin{lemma}
    \label{lem:sampling_rate}
    Suppose that $a_k^2 \ge (\eps/\log n)^c \cdot \frac{\norm{\bx_{-k}}_2^2}{k}$ and $S_j$ contributes where $\zeta (1+\eps)^{t-j} \ge |a_k|$. Then, with probability at least $1 - \poly(\eps / \log n)$, we have $\tilde{s}_j \in [(1 - \eps)s_j, (1 + \eps)s_j]$.
\end{lemma}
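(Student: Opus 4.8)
Fix $v=\zeta(1+\eps)^{t-j-1}$ (so the interval defining $S_j$ is $[v,(1+\eps)v)$) and write $R=\rank(v)$ and $z=\Theta(\log n/\eps^2)$. The plan is to locate the subsampling level whose rate $r_\ell=2^{-\ell}$ matches $s_j$, namely a level with $r_\ell s_j=\Theta(z)$, show that at such a level the HeavyHitter structure $\mathcal D_\ell$ stores \emph{every} surviving coordinate of $S_j$ and estimates each well enough to classify it into $S_j$'s interval, and then read off the rescaled survivor count. Granting that, the number of $S_j$-survivors that the algorithm sees in the interval at level $\ell$ equals the true number $N_\ell$ of $S_j$-survivors, which is $(1\pm\eps)r_\ell s_j$ with probability $1-1/\poly(n)$ by \Cref{lem:sjapprox}; combining \Cref{lem:sjapprox} and \Cref{lem:lowSL} over the $O(1)$ levels near this one shows that the level $\hat\ell$ actually chosen by the algorithm (the largest level with at least $z$ captured survivors in the interval) lies within $\{ \ell,\ell+1 \}$ and is still in the regime where the recovery argument applies, so $\tilde s_j=N_{\hat\ell}\cdot 2^{\hat\ell}=N_{\hat\ell}/r_{\hat\ell}\in[(1-\eps)s_j,(1+\eps)s_j]$. (Level sets falling into the direct-counting branch $j\le t_0$ are handled by the same recovery argument at level $0$, using that — as the next paragraph shows — they then contain only $\poly(\log n/\eps)$ coordinates, which $\Lambda_0$ stores exactly.)

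The crux, and the only genuinely new step, is the structural inequality
\[
\rank(v)\ \le\ \poly(\log n/\eps)\cdot s_j ,
\]
which is where both hypotheses — ``$S_j$ contributes'' and $a_k^2\ge(\eps/\log n)^c\,\norm{\bx_{-k}}_2^2/k$ — are used together. Since $\zeta(1+\eps)^{t-j}\ge|a_k|$ gives $v\ge|a_k|/(1+\eps)$, I would split into two cases. If $v>|a_k|$, then every coordinate counted by $\rank(v)$ has absolute value $>v>|a_k|$, hence is among the top $k$, so $\sum_{i=1}^k|a_i|^p\ge \rank(v)\,v^p$; the contribution condition $\sum_{i\in S_j}|x_i|^p\ge\frac{\eps^2}{\log m}\sum_{i=1}^k|a_i|^p$ together with $\sum_{i\in S_j}|x_i|^p\le s_j((1+\eps)v)^p$ then yields $\rank(v)\le\frac{(1+\eps)^p\log m}{\eps^2}\,s_j$. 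If $v\le|a_k|$, then the coordinates counted by $\rank(v)$ that do \emph{not} lie among the top $k$ belong to $\bx_{-k}$ and each has absolute value $\ge|a_k|/(1+\eps)$, so there are at most $(1+\eps)^2\norm{\bx_{-k}}_2^2/a_k^2\le(1+\eps)^2(\log n/\eps)^c k$ of them by the $a_k$-hypothesis, giving $\rank(v)\le\poly(\log n/\eps)\,k$; and ``contributes'' with $v\le|a_k|$ forces $s_j\ge\frac{\eps^2}{(1+\eps)^p\log m}\,k$, so again $\rank(v)\le\poly(\log n/\eps)\,s_j$. Combining this with \Cref{lem:condition} (directly applicable under our hypotheses) gives the two facts we need: $\rank(v)\le\poly(\log n/\eps)\,s_j$ and $v^2 s_j\ge(\eps/\log n)^{c+4}\norm{\bx_{-\rank(v)}}_2^2$.

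These two facts control the subsampled residual. At a level $\ell$ with $r_\ell s_j=\Theta(z)$ we have $r_\ell\rank(v)\le r_\ell\poly(\log n/\eps)s_j\le\poly(\log n/\eps)$, so a Chernoff bound shows that with probability $1-1/\poly(n)$ at most $k'=\poly(\log n/\eps)$ surviving coordinates have absolute value $>v$; since the HeavyHitter parameter is $k'$, these are absorbed by the top-$k'$ removal, so $\|\hat\bx^{(\ell)}_{-k'}\|_2^2$ is at most the $r_\ell$-subsampled second moment of $\bx_{-\rank(v)}$, whose expectation is $r_\ell\norm{\bx_{-\rank(v)}}_2^2\le r_\ell(\log n/\eps)^{c+4}v^2 s_j=(\log n/\eps)^{c+4}v^2\cdot(r_\ell s_j)\le\poly(\log n/\eps)\,v^2$. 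By Markov's inequality, with probability $1-\poly(\eps/\log n)$ we get $\|\hat\bx^{(\ell)}_{-k'}\|_2\le\poly(\log n/\eps)\cdot v$, hence $\theta\|\hat\bx^{(\ell)}_{-k'}\|_2\le(\eps/\log n)^{c+4}\poly(\log n/\eps)\cdot v\le v$ once $c$ exceeds the degree of that polynomial. Every surviving $u\in S_j$ has $|x_u|\ge v$, so it is reported by $\mathcal D_\ell$, and its estimate is accurate to within $\frac{1}{k'}\|\hat\bx^{(\ell)}_{-k'}\|_2\le\poly(\log n/\eps)\,v/k'$, which is $\le v\cdot(\eps/\log n)^{D}$ for $D$ as large as we like by enlarging $k'$; a union bound over the $\poly(\log n/\eps)$ reported heavy hitters (per level) and the randomness of $\zeta$ then shows that, with probability $1-\poly(\eps/\log n)$, no reported heavy hitter sits within this error band of a level-set boundary, so each $S_j$-survivor is classified into $[v,(1+\eps)v)$. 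Taking a union bound over all the failure events (the Chernoff events, the HeavyHitter failures, and the $\zeta$-event — all $\poly(\eps/\log n)$ or $1/\poly(n)$), the argument of the first paragraph goes through and $\tilde s_j\in[(1-\eps)s_j,(1+\eps)s_j]$.

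I expect the main obstacle to be exactly the structural inequality $\rank(v)\le\poly(\log n/\eps)\,s_j$: it is the single place where the two hypotheses interact, and it is what lets us bound the \emph{subsampled} residual $\|\hat\bx^{(\ell)}_{-k'}\|_2$ rather than the uncontrolled full second moment. The secondary difficulty is bookkeeping: the only tool available for $\|\hat\bx^{(\ell)}_{-k'}\|_2^2$ is its first moment, so the Markov step is what degrades the guarantee from $1-1/\poly(n)$ to $1-\poly(\eps/\log n)$, and one must separately check that the algorithm's choice of level is robust to the factor-$2$ granularity of the subsampling rates (this is the role of \Cref{lem:sjapprox} and \Cref{lem:lowSL} in pinning $\hat\ell$ to within one level of the target).
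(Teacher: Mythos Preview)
Your proposal is correct and follows essentially the same route as the paper: derive $\rank(v)\le\poly(\log n/\eps)\cdot s_j$ from the contribute condition (the paper does this in one line without your case split on $v\gtrless|a_k|$), invoke \Cref{lem:condition} for $v^2 s_j\ge(\eps/\log n)^{c+4}\norm{\bx_{-\rank(v)}}_2^2$, combine these to show the HeavyHitter tail error at the matching subsampling level is $o(v)$ so every $S_j$-survivor is captured (with $\zeta$ absorbing boundary effects), and finish via \Cref{lem:sjapprox} and \Cref{lem:lowSL}. Your explicit Markov step for the subsampled residual and the two-case derivation of the rank bound are slightly more careful than the paper's terse presentation but not materially different.
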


\begin{proof}
    Consider a level set $S_j$ with a value range in $[v, (1 + \eps) v]$ where $\zeta (1+\eps)^{t-j} \ge |a_k|$. Consider the sub-stream $\mathcal{P}$ with corresponding sampling rate $r  = \min\left(1, \frac{C \log n}{s_j \eps^2}\right)$ for a sufficiently large constant $C$ and assume there are $z$ survivors of $S_j$ in $\mathcal{P}$.
    Since we have a random threshold $\zeta$ in the boundary of the level set, we have with probability at least $1 - \poly(\eps /\log n)$, a $(1 - \eps)$ fraction of them is within $[v\left(1 + \poly(\eps / \log n)\right), 
(1 + \eps) v \left(1 - \poly(\eps / \log n)\right)]$. 

    We next analyze the tail error of the heavy hitter data structure. First we have that 
    \begin{equation} \label{eq:sjRank}
    s_j \ge (\eps^2 / \log^2 n) \cdot \rank(v).
    \end{equation}
    Suppose that we had $s_j < (\eps^2 / \log^2 n) \cdot \rank(v)$. This would mean that $\rank(v) \cdot v^p \geq s_j (\log^2 n / \eps^2) \cdot v^p$, which contradicts the fact that $S_j$ 
    contributes. Therefore, combining 
    (\ref{eq:sjRank}) with the fact that  
    $\mathcal{P}$ 
    has sampling rate 
    $r = \min\left(1, \frac{C \log n}{s_j \eps^2}\right)$ gives us that with probability $1- \poly(\eps/\log n)$ using Chernoff's bound that the number of survivors of the top $\rank(v)$ coordinates of $x$ surviving in $\mathcal{P}$ is at most $O((\log n / \eps)^4)$. 
    
    Recall that we use $(\log n/\eps)^{c + 9}$ buckets in our heavy hitter data structure (\Cref{sec:count-sketch}). Hence, with probability at least $1 - \poly(\eps /\log n)$, the tail error of the heavy hitter data structure will be at most $\frac{1}{\sqrt{s_j}} (\eps / \log n)^{c/2 + 9/2} \cdot \norm{\bx_{-\rank(v)}}_2$. 
    
    Recall that we have condition $a_k^2 \ge (\eps/\log n)^{c} \cdot \frac{\norm{x_{-k}}_2^2}{k}$ and therefore from Lemma~\ref{lem:condition} have $v^2 s_j \ge (\eps / \log n)^{c + 5} \cdot \norm{\bx_{-\mathrm{rank}(v)}}_2^2$. 
     Combining this with the above tail error we immediately have with high probability the heavy hitter data structure can identify every survivor in $[v\left(1 + (\eps / \log n)^2\right), 
(1 + \eps) v \left(1 - (\eps / \log n)^{2}\right)]$. 

Combining this with Lemma~\ref{lem:sjapprox}, the remaining thing is to show that with high probability that a level with a smaller sampling rate than $r$ does not have $\Theta(\log n / \eps^2)$ survivors of $S_j$. 
Recall that in the algorithm, for each level set it finds the sub-stream with the smallest sampling rate such that there are $\Theta(\log n / \eps^2)$ coordinates in the set. Then to get an estimate of the size of the level set we re-scale by the sampling probability. It follows from Lemma~\ref{lem:lowSL} that we identify the right sampling rate. 

So, we have with probability at least $1 - \poly(\eps / \log n)$ that $\tilde{s}_j \in [(1 - \eps)s_j, (1 + \eps)s_j]$.
\end{proof}

\begin{remark}
Lemma~4.6 suggests that our current algorithm requires $(\log n / \eps)^{c + 9}$buckets in the CountSketch data structure. However, this dependence can be further improved. For example, (1) for $p < 2$, we can save an $\log n$ when summing over the different level sets in Lemma~4.3. (2) Our current algorithm estimates each level-set size to $(1 \pm \varepsilon)$ accuracy. However, if a level is “tiny” (e.g., contributes only an $\varepsilon$-fraction of total $F_p$), such precision is unnecessary: an $O(1)$ constant-factor estimate suffices. Using this idea, we can first obtain a constant-factor approximation of the size of the level set and use it to choose an appropriate sampling level. This idea has appeared in Section 5.1.3 of \cite{LNWW2025unbiased}. This refinement can reduce the final exponent by two. (3) Our current algorithm partitions the coordinates into levels in the power of $(1+\varepsilon)$. We can instead define level sets using powers of $2$ and estimate the average contribution within each such level. This further reduces the exponent by one.
\end{remark}

\begin{lemma} \label{lem:nonS}
Consider some $S_j$ which does not contribute where $\zeta(1+\eps)^{t-j} \geq |a_k|$. With probability at least $1-\poly(\eps / \log n)$, $\tilde{s}_j \in [0, (1+\eps)s_j]$. 
\end{lemma}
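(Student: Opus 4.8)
The plan is to split off the trivial lower bound $\tilde s_j \ge 0$ --- by construction \Cref{alg:levelSetEst} sets $\tilde s_j$ either to $0$ or to $z\cdot 2^\ell \ge 0$ --- and then focus entirely on the upper bound $\tilde s_j \le (1+\eps)s_j$. The point I would emphasize at the outset is that, in contrast with \Cref{lem:sampling_rate}, this is a \emph{one-sided} statement, so I do not need the heavy-hitter data structure to actually detect the survivors of $S_j$, and in particular I do not need the hypothesis $a_k^2 \ge (\eps/\log n)^c\,\norm{\bx_{-k}}_2^2/k$ anywhere: the only mechanism by which the algorithm can overestimate $s_j$ is to rescale a substream that was subsampled too sparsely, and ruling that out is a pure Chernoff argument.

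Concretely, write $r = \min\left(1,\frac{C\log n}{s_j\eps^2}\right)$ for the ``correct'' sampling rate (at which, by \Cref{lem:sjapprox}, $z/r \in [(1-\eps)s_j,(1+\eps)s_j]$), recall that level $\ell$ of the subsampling scheme has rate $2^{-\ell}$, and call a level \emph{undersampled} if $2^{-\ell} < \frac{z}{(1+\eps)s_j}$. I would first record the elementary observation that if $\tilde s_j$ is determined at a level that is not undersampled then $\tilde s_j = z\cdot 2^\ell \le (1+\eps)s_j$, and if no level ever collects $z$ elements then $\tilde s_j = 0 \le (1+\eps)s_j$; hence it suffices to show that with probability $1 - \poly(\eps/\log n)$ no undersampled level has $z$ of its stored heavy hitters lying in the value range $[\zeta(1+\eps)^{t-j-1},\zeta(1+\eps)^{t-j})$ of $S_j$. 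Fixing an undersampled level, the expected number of coordinates of $S_j$ that survive there is $2^{-\ell}s_j < z/(1+\eps)$, so a multiplicative Chernoff bound --- the same computation as in \Cref{lem:lowSL}, except applied to every rate below $r$ rather than only $r/2$ --- bounds the probability of $z$ or more survivors by $\exp(-\Omega(\eps^2 z)) = 1/\poly(n)$ once the constant hidden in $z = \Theta(\log n/\eps^2)$ is taken large enough, and a union bound over the $O(\log n)$ undersampled levels keeps this $1/\poly(n)$.

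The one place requiring care, which I would flag as the main obstacle, is that what the algorithm thresholds is the number of \emph{stored heavy hitters} it classifies into the range of $S_j$, not the number of genuine surviving $S_j$-coordinates, so I must argue these two counts differ by at most a $\poly(\eps/\log n)$ fraction. Here the random offset $\zeta \in [1/2,1]$ does the work: a coordinate not in $S_j$ can be classified into its range only if its true value falls within the $\frac{1}{k}\norm{\ff_{-k}}_2$ heavy-hitter estimation error of one of the two random boundaries $\zeta(1+\eps)^{t-j-1}$ or $\zeta(1+\eps)^{t-j}$, an event of probability $\poly(\eps/\log n)$ over the choice of $\zeta$ --- this is exactly the boundary argument already carried out in the proof of \Cref{lem:sampling_rate}, which I would simply cite. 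Folding this $\poly(\eps/\log n)$ boundary failure probability into the union bound of the previous paragraph shows that with probability $1-\poly(\eps/\log n)$ the level at which $\tilde s_j$ is set is not undersampled, giving $\tilde s_j \le (1+\eps)s_j$. (The degenerate case $r = 1$, i.e.\ $s_j = O(\log n/\eps^2)$, and the top $O(\log(\eps^{-1}\log m))$ level sets, where $\tilde s_j$ is read directly off $\Lambda_0$ at sampling rate $1$, are handled analogously and are only easier, since there no subsampling fluctuation needs to be controlled.)
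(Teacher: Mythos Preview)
Your proof is correct and follows the same approach as the paper: both argue that $\tilde s_j\ge 0$ trivially and that overestimation is ruled out by a Chernoff bound on undersampled levels (as in \Cref{lem:lowSL}) together with the random-$\zeta$ boundary argument. The paper's own proof is a two-line deferral to ``the same argument as \Cref{lem:sampling_rate},'' and is in fact slightly looser in wording --- it asserts a full $(1\pm\eps)$-approximation whenever enough survivors are found, though only the upper bound actually follows without the contribute hypothesis --- so your version is more explicit about which half of that argument is being invoked and why the condition on $a_k$ is not needed.
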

\begin{proof}
Consider some level set $S_j$ which does not contribute. 
In this case, the algorithm may not find enough number of survivors in any subsampling stream. So, our lower bound for the size estimate is $0$. If there are enough survivors, then by the same argument as Lemma~\ref{lem:sampling_rate}, 
$\tilde{s_j}$ will be a $(1 \pm \eps)$-approximation of $s_j$. Therefore, for any level set that does not contribute, the size estimate can be an under-estimate but never an estimate by more than a $(1 + \eps)$ factor with probability at least $1-\poly(\eps/\log n)$. 
\end{proof}

\begin{lemma} \label{lem:nonS2}
    The $F_p$ norm of all coordinates in non-contributing level sets is at most $O(\eps) \cdot (\sum_{i=1}^k |a_i|^p)$. 
\end{lemma}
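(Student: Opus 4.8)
The plan is a one–line counting argument over the level sets, so I expect no real obstacle here; the only thing worth checking is that the $\log m$ appearing in \Cref{def:contribute} is calibrated exactly to cancel the number of level sets.

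First, recall from \Cref{def:contribute} that the level sets $S_1,\dots,S_{\log_{1+\eps}m}$ have pairwise disjoint index supports (they partition the coordinates by value), and that $S_j$ \emph{fails} to contribute exactly when $\sum_{i\in S_j}|x_i|^p < \frac{\eps^2}{\log m}\bigl(\sum_{i=1}^k|a_i|^p\bigr)$. Hence the total $F_p$ mass carried by coordinates that lie in some non-contributing level set is
\[
\sum_{j:\,S_j\text{ non-contributing}}\ \sum_{i\in S_j}|x_i|^p
\ <\ (\text{number of level sets})\cdot\frac{\eps^2}{\log m}\Bigl(\sum_{i=1}^k|a_i|^p\Bigr).
\]
Disjointness of the $S_j$ is what lets us add the per-set bounds without overcounting; note we never need to argue that the level sets cover \emph{all} coordinates, since the sum in the statement ranges only over coordinates already known to belong to a non-contributing $S_j$.

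It remains to bound the number of level sets. The index $j$ runs over $[1,\log_{1+\eps}m]$, and for $\eps\in(0,1]$ we have $\ln(1+\eps)\ge\eps/2$, so $\log_{1+\eps}m=\frac{\log m}{\log(1+\eps)}\le\frac{2\log m}{\eps}$. Substituting into the displayed inequality gives $\frac{2\log m}{\eps}\cdot\frac{\eps^2}{\log m}\bigl(\sum_{i=1}^k|a_i|^p\bigr)=2\eps\bigl(\sum_{i=1}^k|a_i|^p\bigr)=O(\eps)\bigl(\sum_{i=1}^k|a_i|^p\bigr)$, which is the claim. The one point to flag is the cancellation $\tfrac{\log m}{\eps}\cdot\tfrac{\eps^2}{\log m}=\eps$ between the $\Theta(\log m/\eps)$ level sets and the $\log m$ in the denominator of the contribution threshold; this is precisely why the threshold in \Cref{def:contribute} is chosen with that denominator.
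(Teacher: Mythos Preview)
Your proof is correct and matches the paper's argument exactly: both count the $\Theta(\log m/\eps)$ level sets, multiply by the per-set threshold $\frac{\eps^2}{\log m}\bigl(\sum_{i=1}^k|a_i|^p\bigr)$, and observe that the $\log m$ factors cancel to leave $O(\eps)$. Your write-up is slightly more explicit about the disjointness of the $S_j$ and the bound $\log_{1+\eps}m\le 2\log m/\eps$, but the underlying idea is identical.
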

\begin{proof}
    There are $\frac{\log m}{ O(\eps)}$ level sets, and therefore at most that many non-contributing set. The $F_p$ norm of all the coordinates in each non-contributing set by definition is at most $\frac{\eps^2}{\log^2 n}\left(\sum_{i=1}^k |a_i|^p \right)$. 
     Therefore, the $F_p$ norm of all coordinates in non-contributing level sets is at most $\frac{\log m}{O(\eps)} \cdot \frac{\eps^2}{\log^2 n} \left(\sum_{i=1}^k a_i^p \right).$
\end{proof}

Our full algorithm is given in \hyperref[alg:topk]{Ky-Fan-$k$-Norm-Estimator} (Algorithm~\ref{alg:topk}). We note that in Line~\ref{line:8}, this is $\Lambda_0$ and not $\Lambda_j$. For a contributing level which has fewer than $O(1/\eps^2)$ coordinates, we will need to find all of the coordinates in that level set as opposed to obtaining a subsample of them. This means that we need to look at the entire stream instead of the sub-stream. It is also the reason that we divide the iterations into two parts $(0, t_0)$ and $(t_0 + 1, t - 1)$. 
We are now ready to prove our \Cref{thm:top_k}.
\begin{proof}[Proof of \Cref{thm:top_k}]

        We have that with probability at least $1-\poly(\eps / \log n)$ that for a contributing level set $S_j$, we have that the estimator $\tilde{s}_j \in (1 \pm \eps) s_j$ from Lemma~\ref{lem:sampling_rate}. 
        For level sets that do not contribute, we have from Lemma~\ref{lem:nonS} that the size estimate can be an under-estimate but never an estimate by more than a $(1 + \eps)$ factor with probability at least $1-\poly(\eps/\log n)$. 

        Note that we have $t = \log m / (2\eps)$ level sets where $m = \poly(n)$ by assumption. Therefore, we can take a union bound over all the level sets for all the above events and get that they happen with constant probability. Now, we upper bound the output of our algorithm. 

        As proven above, for each level set $S_j$, the algorithm estimates $s_j$ within a $(1\pm \eps)$ factor. In addition, each coordinate value in this level is within a $(1\pm \eps)$ factor due to the range of each level set. Since our estimator takes the sum of the top $k$ values, we have that the output is at most 
        \[
        (1 + \eps)^2 \sum_{i = 1}^{k} |a_{i}|^p = \left(1 + O(\eps)\right) \sum_{i = 1}^{k} |a_{i}|^p\;.
        \]
        We now lower bound the output of our algorithm. Recall that besides the error in estimating $s_j$ for contributing sets $S_j$ and the error associated with the level set range, underestimating comes from non-contributing level sets. The sum of all the coordinates in the non-contributing level sets is at most $O(\eps) \left(\sum_{i = 1}^{k} |a_{i}|^p\right)$ by Lemma~\ref{lem:nonS2}. 
        Therefore, we have that the output of the algorithm is at least 
        \[
        (1 - \eps)^2 \sum_{i = 1}^{k} |a_{i}|^p - O(\eps) \left(\sum_{i = 1}^{k} |a_{i}|^p\right)= \left(1 - O(\eps)\right) \sum_{i = 1}^{k} |a_{i}|^p\;.\qedhere
        \]
        
\end{proof}

\section{Estimation of Trimmed-$k$ $F_p$ Moment for $p \in [0,2]$} \label{sec:trimkbody}
We now present our result for the trimmed-$k$ norm. 

\begin{theorem}[$F_p$ of trimmed-$k$ vector for $0 \leq p \leq 2$] \label{thm:trimk} 
Given $\bx \in \Z^n$, $p \in [0,2]$, $k \geq 0$, and $\eps \in (0,1)$, if we have 
    \[
    a_k^2 \ge (\eps/\log n)^c \cdot \frac{\norm{\bx_{-k}}_2^2}{k}
    \]
    for some constant $c$, then there exists a linear sketch that uses $\poly(\log n / \eps)$ bits of space (where the exponent depends on the value of $c$) and estimates $\sum_{i = k}^{n - k} |a_i|^p$ with error $\eps \left( \sum_{i = k}^{n - k} |a_i|^p  + k|a_{k - \eps k}|^p\right)$ with high constant probability. 
\end{theorem}

Our estimator \hyperref[alg:trimk]{Trimmed-$k$-Norm-Estimator} (Algorithm~\ref{alg:trimk}) is similar to that of the previous section. However, we have an additional difficulty since we need to remove the contribution of the top and bottom $k$ coordinates. We only can estimate the contribution of each level set up to a $(1\pm \eps)$-factor. Therefore, we are actually estimating $\sum_{i = u}^{u + n- 2k} a_i^p$ where $u = k \pm O(\eps) k$.

\begin{algorithm}[t]
\caption{Trimmed-$k$-Norm-Estimator ($\eps \in (0, 1]$, $p \in (0, 2]$, $k \geq 0$)}
\label{alg:trimk}
\begin{algorithmic}[1] 
\STATE Run \hyperref[alg:levelSetEst]{Level-Set-Estimator ($\eps$)}.
\STATE Find the $i_1$ such that $\sum_{j=0}^{i_1 - 1} \tilde{s}_j < k$ and $\sum_{j=0}^{i_1} \tilde{s}_j \ge k$. 
\STATE Find the $i_2$ such that $\sum_{j = 0}^{i_2 - 1} \tilde{s_j}< n - k$ and $\sum_{j=0}^{i_2} \tilde{s_j} \geq n - k$.  
\STATE top-$k$ $\leftarrow \left(\sum_{j=0}^{i_1 - 1} \tilde{s_j} \cdot \zeta^p (1+\eps)^{p(t-j)}\right) + \left(k - \sum_{j = 0}^{i_1 - 1} \tilde{s}_j\right) \zeta^p(1+\eps)^{p(t-i_1)}$. 
\STATE top-$n$-minus-$k$ $\leftarrow \left(\sum_{j=0}^{i_2 - 1} \tilde{s_j} \cdot \zeta^p (1+\eps)^{p(t-j)}\right) + \left((n-k) - \sum_{j = 0}^{i_2 - 1} \tilde{s}_j\right) \zeta^p(1+\eps)^{p(t-i_2)}$.
\STATE \textbf{Return} top-$n$-minus-$k$ $ - $ top-$k$.  
\end{algorithmic}
\end{algorithm}
At a high level our algorithm runs \hyperref[alg:levelSetEst]{Level-Set-Estimator} (Algorithm~\ref{alg:levelSetEst}) to divide the universe into level sets and estimate how many coordinates are in each level set. Then it sums up the top $n - k$ coordinates and then subtracts off the top $k$ coordinates. 
Note that the condition
~\eqref{eq:1bb} 
always holds for $a_{n-k}$ as the trimmed-$k$ vector of $\bx$ only makes sense when $k \le n /2$, which means we have $n - k = \Theta(n)$. 
We first show that $u = k \pm O(\eps) k$.  
\begin{lemma}
    $u \ge k - O(\eps) k$. 
\end{lemma}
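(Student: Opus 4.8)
The plan is to first express $u$ in terms of the quantities the algorithm actually manipulates and then bound $\sum_{j<i_1}(\tilde s_j-s_j)$, which is precisely the gap between $u$ and $k$. Recall that $i_1$ is chosen so that $\sum_{j=0}^{i_1-1}\tilde s_j<k\le\sum_{j=0}^{i_1}\tilde s_j$, and that the first (top-$k$) term of \hyperref[alg:trimk]{Trimmed-$k$-Norm-Estimator} weights each level set $S_0,\dots,S_{i_1-1}$ by its full estimated size $\tilde s_j$ and weights $S_{i_1}$ by the leftover count $k-\sum_{j<i_1}\tilde s_j$; the second (top-$(n-k)$) term does the same with $n-k$ in place of $k$. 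Since all coordinates inside a single level set agree in value up to a $(1+\eps)$ factor, subtracting the first term from the second amounts — up to the same level-set discretization slack already controlled in the proof of \Cref{thm:trimk} — to summing a contiguous block of the sorted sequence $\ba$ whose first index is
\[
u \;=\; \Big(\textstyle\sum_{j<i_1}s_j\Big) \;+\; \Big(k-\textstyle\sum_{j<i_1}\tilde s_j\Big) \;=\; k+\sum_{j<i_1}(s_j-\tilde s_j),
\]
where the first summand is the true number of coordinates falling in $S_0,\dots,S_{i_1-1}$. So it suffices to prove $\sum_{j<i_1}(\tilde s_j-s_j)\le\eps k$.

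I would then condition on the event, which holds with high constant probability by a union bound over the $t=O(\log m/\eps)$ level sets, that $\tilde s_j\le(1+\eps)s_j$ for every level set $S_j$ with $\zeta(1+\eps)^{t-j}\ge|a_k|$ (\Cref{lem:sampling_rate} gives the two-sided estimate for contributing sets, and \Cref{lem:nonS} the one-sided upper bound for non-contributing ones, both under exactly this ``value above $|a_k|$'' hypothesis). Let $S_{j^\star}$ be the level set containing $a_k$; then $\zeta(1+\eps)^{t-j}\ge|a_k|$ for every $j\le j^\star$, so $\tilde s_j\le(1+\eps)s_j$ holds for all $j\le j^\star$. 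I would also use the elementary observation that the top $k$ coordinates of $\bx$ all have value $\ge|a_k|$ and hence lie in $S_0\cup\dots\cup S_{j^\star}$, so $\sum_{j\le j^\star}s_j\ge k$.

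The argument then splits according to how far $i_1$ drifts past $j^\star$. If $i_1\le j^\star$, every $j<i_1$ satisfies $j\le j^\star$, so $\tilde s_j\le(1+\eps)s_j$; for each such $j$ with $\tilde s_j>s_j$ this gives $\tilde s_j-s_j\le\eps s_j\le\eps\tilde s_j$, whence $\sum_{j<i_1}(\tilde s_j-s_j)\le\eps\sum_{j<i_1}\tilde s_j<\eps k$ by the defining property of $i_1$. If instead $i_1\ge j^\star+1$, then $\sum_{j<i_1}s_j\ge\sum_{j\le j^\star}s_j\ge k>\sum_{j<i_1}\tilde s_j$, so $\sum_{j<i_1}(\tilde s_j-s_j)<0$. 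In either case $\sum_{j<i_1}(\tilde s_j-s_j)<\eps k$, and therefore $u>(1-\eps)k=k-O(\eps)k$.

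The step I expect to be the main obstacle is exactly the source of this case split: \Cref{lem:sampling_rate} and \Cref{lem:nonS} only control $\tilde s_j$ for level sets whose value exceeds $|a_k|$, yet if several non-contributing level sets above $a_k$ are badly under-estimated, $i_1$ can overshoot $j^\star$ and sweep in level sets below the $|a_k|$ threshold, for which we have no handle on $\tilde s_j$. What rescues the argument is that in precisely that overshoot regime one no longer needs such a bound: the crude counting inequality $\sum_{j<i_1}\tilde s_j<k\le\sum_{j<i_1}s_j$ already forces $\sum_{j<i_1}(\tilde s_j-s_j)<0$. The remaining, more routine, piece is to make the identification of the estimator's output with the block $\sum_{i=u}^{u+n-2k}|a_i|^p$ precise; this uses only that each level set is a $(1\pm\eps)$-flat run of values, so the reorganization contributes no error beyond the discretization already accounted for in \Cref{thm:trimk}.
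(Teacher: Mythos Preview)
Your proof is correct and follows essentially the same approach as the paper: use the one-sided bound $\tilde s_j\le(1+\eps)s_j$ on level sets associated with the top-$k$ coordinates to conclude that over-estimation can shift the start index $u$ below $k$ by at most $O(\eps)k$. Your treatment is in fact more careful than the paper's one-line proof, as you explicitly handle the possibility $i_1>j^\star$ (where the algorithm's cut-off overshoots the level set containing $a_k$), correctly observing that in this regime $\sum_{j<i_1}\tilde s_j<k\le\sum_{j<i_1}s_j$ already forces $u>k$.
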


\begin{proof}
Recall that like reasoned in the proof of \Cref{thm:top_k}, for each level set associated with the top $k$ coordinates, we either underestimate its size or estimate its size up to a $(1 \pm \eps)$-factor. Therefore, we have $u \ge k - O(\eps) k$ where equality holds when we overestimate each level set associated with the top-$k$ coordinates by a $(1 + \eps)$ factor. 
\end{proof}

\begin{lemma}
    \label{lem:non-contribute_ub}
    $u \le k + O(\eps)k $.
\end{lemma}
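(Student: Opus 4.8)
The goal is to bound the drift $u-k$ of the effective index past $k$. Since $u \ge k - O(\eps)k$ was already established, it suffices to treat the case $u > k$, and I would write $u = k + D$, where $D$ is the number of coordinates of $\bx$ whose rank lies in $(k,u]$; all of these have absolute value in $[|a_u|,|a_k|]$, and the lower endpoint of the last touched level set $S_{i_1}$ equals $(1\pm O(\eps))|a_u|$ by the definition of the level sets.

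\textbf{Step 1 (reduce $D$ to non-contributing cardinality).} I would rerun the cumulative-sum bookkeeping from the proof of \Cref{thm:top_k}, splitting the touched level sets into those of value $\ge |a_k|$ and those of value in $[|a_u|,|a_k|)$. Using $\tilde s_j \ge (1-\eps)s_j$ on contributing level sets (by the argument of \Cref{lem:sampling_rate}, whose hypotheses propagate to this value range under \eqref{eq:1bb}) and $\tilde s_j \ge 0$ otherwise, the cumulative estimate restricted to level sets of value $\ge |a_k|$ is at least $(1-\eps)$ times the number of \emph{contributing} top-$k$ coordinates, so the ``deficit'' that must be filled by level sets of value $< |a_k|$ to reach $k$ is at most $\eps k$ plus the number of non-contributing top-$k$ coordinates. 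Filling that deficit with contributing level sets of value $< |a_k|$ costs at most $(1+O(\eps))$ times the deficit in true cardinality, whereas a non-contributing level set met on the way can add its full size; combined with $\tilde s_j \le (1+\eps)s_j$ on all touched level sets, this yields $D \le O(\eps)k + N$, where $N$ is the total number of coordinates lying in non-contributing level sets of value $\ge |a_u|$.

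\textbf{Step 2 (bound $N$).} Each coordinate counted by $N$ has absolute value at least $|a_u|$, so by \Cref{lem:nonS2},
\[
N\,|a_u|^p \;\le\; \sum_{S_j\ \mathrm{non\text{-}contributing}}\ \sum_{\ell\in S_j}|x_\ell|^p \;\le\; O(\eps)\Bigl(\sum_{i=1}^k|a_i|^p\Bigr).
\]
To upgrade this to $N \le O(\eps)k$ I need $|a_u|^p \ge \Omega\bigl(\tfrac1k\sum_{i=1}^k|a_i|^p\bigr)$, i.e. that $|a_u|$ is not much smaller than a typical top-$k$ value, and this is the step that uses the hypothesis $a_k^2 \ge (\eps/\log n)^c\|\bx_{-k}\|_2^2/k$. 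The point is that a level set of value $v\le|a_k|$ with many coordinates would itself \emph{contribute}: on one hand \eqref{eq:1bb} gives $\|\bx_{-k}\|_2^2 \le (\log n/\eps)^c k a_k^2$, so a non-contributing level set of value $v\le|a_k|$ can have only $s = O\bigl((\log n/\eps)^{O(c)}k(a_k/v)^2\bigr)$ coordinates; on the other hand non-contribution forces $sv^p = O\bigl(\eps^2(\log n)^{-1}\sum_i|a_i|^p\bigr)$. Playing these two bounds against each other over the $O(\log n/\eps)$ possible values of $v$ forces $|a_u|$ to stay within a $\poly(\log n/\eps)$ factor of $|a_k|$, and hence caps $\tfrac1k\sum_{i=1}^k|a_i|^p \big/ |a_u|^p$ at $\poly(\log n/\eps)$. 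Absorbing this $\poly(\log n/\eps)$ into the exponent $c$ (as the theorem statement permits) gives $N \le O(\eps)k$, and therefore $u = k + D \le k + O(\eps)k$.

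I expect Step 2 — turning the $F_p$-\emph{mass} bound of \Cref{lem:nonS2} into a \emph{cardinality} bound on $N$ — to be the main obstacle, because a priori the average of the top-$k$ $p$-th powers can be arbitrarily larger than $|a_k|^p$, so a naive count bound is useless; the resolution has to exploit that \eqref{eq:1bb} simultaneously limits the number of tail coordinates of each magnitude, so that the only level sets below $|a_k|$ that can be skipped are genuinely small in cardinality. Step 1 is essentially routine: it is the same cumulative-sum accounting already used for \Cref{thm:top_k}, and the endpoint identifications are immediate from the level-set definition.
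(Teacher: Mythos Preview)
Your approach has a real gap in Step~2. The claim that $\tfrac{1}{k}\sum_{i=1}^k |a_i|^p \big/ |a_u|^p \le \poly(\log n/\eps)$ is false under \eqref{eq:1bb}. Take $a_1 = M$ arbitrarily large, $a_2 = \cdots = a_n = 1$, and $k = n/2$; then $a_k = 1$, $\|\bx_{-k}\|_2^2 = n/2$, and \eqref{eq:1bb} reads $1 \ge (\eps/\log n)^c$, which holds. Yet $\sum_{i=1}^k |a_i|^p \approx M^p$ while $|a_u|^p \le 1$, so the ratio is unbounded in $M$. Moreover the level set at value~$1$ contains $n-1$ coordinates and, for $M$ large, does \emph{not} contribute in the sense of \Cref{def:contribute} (its mass $n-1$ is negligible next to $\tfrac{\eps^2}{\log m}M^p$), so your $N \ge n-1 \gg \eps k$. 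Your ``playing the two bounds against each other'' step at best pins $|a_u|$ near $|a_k|$, but that says nothing about how large $a_1$ --- and hence the top-$k$ $F_p$ sum --- can be relative to $|a_k|$; the $F_p$-mass bound of \Cref{lem:nonS2} simply cannot be converted into the cardinality bound you need. The decomposition in Step~1 is already too coarse: by charging every non-contributing level set its full size to $N$, you force Step~2 to prove something that is false.

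The paper's proof sidesteps the $F_p$-based ``contribute'' notion entirely here and argues via a pure cardinality criterion: any level set $S_j$ with value in the top-$k$ range and $s_j \ge \Omega(k\eps^2/\log n)$ coordinates is estimated to $(1\pm\eps)$, \emph{regardless of whether it contributes}. Indeed, at subsampling rate $r = \Theta\bigl(\tfrac{1}{k}\poly(\log n/\eps)\bigr)$ such a level set has $\Theta(\log n/\eps^2)$ survivors by Chernoff, and since each survivor has value $\ge |a_k|$ while the residual $\ell_2$ norm in that substream is $\poly(\eps/\log n)\,\|\bx_{-k}\|_2/\sqrt{k} \le \poly(\eps/\log n)\,|a_k|$ by \eqref{eq:1bb}, the heavy-hitter structure detects them all. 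Hence only level sets with fewer than $O(k\eps^2/\log n)$ coordinates can be undercounted; summing over the $O(\log n/\eps)$ level sets in the top-$k$ range gives at most $O(\eps k)$ missed coordinates, which yields $u \le k + O(\eps)k$. In the counterexample above, the huge level set at value~$1$ satisfies this cardinality criterion and is therefore estimated accurately --- which is exactly why $u$ stays near $k$ even though your $N$ is enormous.
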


\begin{proof}

From Lemma~\ref{lem:sampling_rate}, we know that for each level set that contributes (by Definition~\ref{def:contribute}) we can estimate its size up to a $(1 \pm \eps)$-factor. We first bound the number of coordinates in the level sets (associated with coordinates in the top-$k$) that do not contribute and we therefore do not estimate well. 

We claim that for each level set $S_j$, if it contains at least $\Omega(k\eps^2 / \log n)$ coordinates, then the algorithm estimates its size up to a $(1\pm \eps)$ factor. To show this, consider the sub-stream with sampling rate $r = \Theta(\frac{1}{k \poly(\eps /\log n)})$. By Chernoff's bound, with high probability there will be $\Theta(1/\eps^2)$ survivors in this sub-stream. Furthermore, since we are guaranteed that     
$ a_k^2 \ge \poly(\eps/\log n) \cdot \norm{\bx_{-k}}_2^2/k$, 
    this implies the coordinates in this level set can be identified by the Heavy Hitter data structure. Following the same proof as Lemma~\ref{lem:sampling_rate}, our algorithm estimates the size of this level set up to a $(1 \pm \eps)$ factor with probability $1 - \poly(\eps / \log n)$.

    Therefore, for each level set associated with coordinates in the top-$k$, we either estimate its size up to a $(1 \pm \eps)$-factor or the level set has $o(k \eps^2 / \log n)$ coordinates. Since there are at most $\log n / O(\eps)$ such level sets, we have $u \le  k + O(\eps) k$.
\end{proof}

The above discussion shows that if we can estimate $\sum_{i = u}^{u + n- 2k} |a_i|^p$ up to error 
$
\eps\left(\sum_{i = u}^{u + n- 2k} |a_i|^p\right) + \eps \cdot k |a_k|^p,
$
then the overall error of our estimator is 
\[O(\eps) \left(\sum_{i = k}^{n - k} |a_i|^p + k |a_{k - \eps k}|^p\right).\]
To achieve this, we shall consider a similar level-set argument as that in \Cref{sec:topk}.

We first define a ``contributing" level set. Recall Definition~\ref{def:contribute} for the definition of $S_j$ and $s_j$.
\begin{definition}
\label{def:congtibute_trimmed}
    We say the level set $S_j$ ``contributes" if 
    \[
    \sum_{i \in S_j} |x_i|^p \ge \frac{\eps^2}{\log^2 n} \left(\sum_{i = k + 1}^{n-k} |a_i|^p + k |a_k|^p\right).
    \]
\end{definition}
\begin{lemma} \label{lem:contLS}
    If 
    $S_j$ contributes and we have 
    $v = \zeta(1+\eps)^{t - j} \ge |a_{n - k}|$ then we have $v^2 \cdot s_j \ge (\eps / \log n)^{5} \cdot \norm{\bx_{-\mathrm{rank}(v)}}_2^2$
    where $\mathrm{rank}(v)$ is the rank of $v$ in array $\bx$ (i.e., the number of entries in $\bx$ with value greater than $v$) .
\end{lemma}

\begin{proof}
Since $S_j$ contributes, we have that 
\[
s_j v^p \geq (\eps / \log n)^2 \cdot \left(\sum_{i = k + 1}^{n - k} |a_i|^p + k |a_k|^p\right) \geq (\eps / \log n)^2 \cdot (n-k) |a_{n-k}|^p. 
\]
Multiplying by $v^{2-p}$ gives 
\begin{equation} \label{eq:new2}
v^2 s_j \geq (\eps / \log n)^2 \cdot (n-k)|a_{n-k}|^p v^{2-p} \geq (\eps / \log n)^2 \cdot (n-k)a^2_{n-k}
\end{equation}
where the last inequality is because we have $v \geq |a_{n-k}|$ and $p \leq 2$. Note that we have $n-k \geq k$ since the problem is only properly defined if we have $k \leq \frac{n}{2}$. Therefore, we have \[(n-k) a^2_{n-k} \geq k a^2_{n-k} \geq \|\bx_{-(n-k)}\|_2^2.\] Plugging this into (\ref{eq:new2}) gives us 
\begin{equation}\label{eq:new3}
    v^2 s_j \geq (\eps / \log n)^2 \cdot \|\bx_{-(n-k)}\|_2^2. 
\end{equation}

We next measure the difference between $\norm{\bx_{-(n - k)}}_2^2$ and $\norm{\bx_{-\mathrm{rank}(v)}}_2^2$. For the level set 
\[
T_\ell = \{i \in [n]: |x_i| \in [v/2^{\ell + 1}, v/2^\ell]\} , 
\]
where $v/2^{\ell + 1} \ge a_{n - k}$ and 
\[
T_q = \{i \in [n]: |x_i| \in [|a_{n-k}|, v / 2^q)\}. 
\]
Let $t_\ell = |T_\ell|$, $t_q = |T_q|$. We now claim that for $w \in [0,q]$ we have $t_w \le O(\log^2 n / \eps^2) \cdot s_j \cdot 2^{wp}$. Otherwise, we would have  
\[
\sum_{i = k + 1}^{n - k} |a_i|^p + k |a_k|^p\ge \left(\frac{v}{2^{w + 1}}\right)^{p} \cdot t_w \ge
\left(\frac{v}{2^{w + 1}}\right)^{p} \cdot O\left(\frac{\log^2n}{\eps^2}\right) s_j 2^{wp} \ge O\left(\frac{\log^2n}{\eps^2}\right) \left(\sum_{i \in S_j} |x_i|^p\right) \;
\]
which contradicts the fact that $S_j$ contributes. Now, taking a sum we get 
\begin{equation} \label{eq:rankNK}
\sum_{\mathrm{rank}(v) \le i \le n - k} a_i^2 \le \sum_{i \in [q]} t_i \left(\frac{v}{2^{i}}\right)^2 \leq \sum_{i \in [q]}\left(\frac{v}{2^i} \right)^2 O\left(\frac{\log^2n}{\eps^2}\right) s_j 2^{ip} \le v^2 s_j \cdot O\left(\frac{\log^3 n}{\eps^2}\right) \;.
\end{equation}
This implies that 
\[
\norm{\bx_{-(n - k)}}_2^2 \ge \norm{\bx_{-\mathrm{rank}(v)}}_2^2 - v^2 s_j \cdot O\left(\frac{\log^3 n}{\eps^2}\right) \;.
\]
Combining~\eqref{eq:new3} and~\eqref{eq:rankNK} we get immediately that
\[
v^2 \cdot s_j \ge (\eps / \log n)^{5} \cdot \norm{\bx_{-\mathrm{rank}(v)}}_2^2 \;. \qedhere
\]
\end{proof}
\begin{lemma}
\label{lem:trimmed_level_set}
 Suppose that 
 $S_j$ contributes 
where $\zeta(1+\eps)^{t - j} \ge |a_{n - k}|$. Then with probability at least $1-\poly(\eps / \log n)$ we have $\tilde{s}_j = (1 \pm \eps)s_j$.
\end{lemma}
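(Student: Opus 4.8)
The plan is to follow the proof of \Cref{lem:sampling_rate} almost line for line, with \Cref{def:congtibute_trimmed} taking the role of \Cref{def:contribute} and \Cref{lem:contLS} taking the role of \Cref{lem:condition}. Write $v = \zeta(1+\eps)^{t-j}$ for the upper boundary of $S_j$, so that every coordinate counted by $s_j$ has absolute value in $[v/(1+\eps), v)$. I would fix attention on the substream $\mathcal{P}$ whose sampling rate is $r = \min\left(1, \frac{C\log n}{s_j\eps^2}\right)$ for a large constant $C$, let $z$ be the number of survivors of $S_j$ in $\mathcal{P}$, and then establish three things: (a) every one of these $z$ survivors is reported, and estimated to within a $(1\pm\eps)$ factor, by the Heavy Hitter data structure at the matching subsampling level; (b) the rescaled count $z/r$ equals $(1\pm\eps)s_j$; and (c) no substream of strictly smaller rate retains $\Theta(\log n/\eps^2)$ elements of $S_j$, so that \Cref{alg:levelSetEst} selects exactly the level with rate $r$ and sets $\tilde s_j$ to the rescaled survivor count.

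For (a), the first step --- exactly as in \Cref{lem:sampling_rate} --- is to use the contribution inequality of \Cref{def:congtibute_trimmed} to deduce that $s_j \ge \poly(\eps/\log n)\cdot\rank(v)$: otherwise the $\rank(v)$ coordinates exceeding $v$ (which, since $v \ge |a_{n-k}|$, all lie among the top $n-k$ positions) would carry enough $F_p$-mass to violate the contribution inequality. Feeding $s_j \ge \poly(\eps/\log n)\cdot\rank(v)$ into a Chernoff bound against the rate $r$ then shows that only $\poly(\log n/\eps)$ of the top-$\rank(v)$ coordinates of $\bx$ survive in $\mathcal{P}$ with probability $1-1/\poly(n)$, so with $\poly(\log n/\eps)$ buckets the tail error of the Heavy Hitter data structure on $\mathcal{P}$ is at most $\frac{1}{\sqrt{s_j}}\poly(\eps/\log n)\cdot\norm{\bx_{-\rank(v)}}_2$. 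Since \Cref{lem:contLS}, which is unconditional, gives $v^2 s_j \ge (\eps/\log n)^4\norm{\bx_{-\rank(v)}}_2^2$, this tail error is far below $v$, so every survivor of $S_j$ is detected and well estimated. As in \Cref{lem:sampling_rate}, the random boundary $\zeta$ absorbs the $\poly(\eps/\log n)$-fraction of coordinates of $S_j$ whose values sit too close to $v$ or $v/(1+\eps)$ to be classified safely.

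Parts (b) and (c) then follow from lemmas already in hand: \Cref{lem:sjapprox} gives $z/r = (1\pm\eps)s_j$ with probability $1-1/\poly(n)$, and \Cref{lem:lowSL} gives that the substream with rate $r/2$ has fewer than $c\log n/\eps^2$ survivors of $S_j$ with probability $1-1/\poly(n)$, so the level chosen by \Cref{alg:levelSetEst} is indeed the one with rate $r$. A union bound over this handful of $1-1/\poly(n)$ and $1-\poly(\eps/\log n)$ events yields $\tilde s_j = (1\pm\eps)s_j$ with probability $1-\poly(\eps/\log n)$.

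The step I expect to be the main obstacle is the bookkeeping behind $s_j \ge \poly(\eps/\log n)\cdot\rank(v)$ and, more broadly, transporting the residual-norm comparison of \Cref{lem:contLS} cleanly through the analysis: the trimmed contribution threshold references $\sum_{i=k+1}^{n-k}|a_i|^p + k|a_k|^p$ rather than $\sum_{i=1}^k|a_i|^p$, so one has to track carefully where $\rank(v)$ falls relative to both $k$ and $n-k$ and in which regime each inequality is applied. Once \Cref{lem:contLS} and the Chernoff estimates are established, the rest is a routine repetition of the argument in \Cref{sec:topk}.
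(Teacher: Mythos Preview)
Your proposal is correct and mirrors the paper's own proof, which likewise defers to \Cref{lem:sampling_rate} with \Cref{def:congtibute_trimmed} and \Cref{lem:contLS} substituted for \Cref{def:contribute} and \Cref{lem:condition}. The paper handles your flagged step $s_j \ge (\eps^2/\log n)\cdot\rank(v)$ just as tersely (it simply asserts it ``since $S_j$ contributes''), and its only departure from your outline is that it also recalls the ambient hypothesis $a_k^2 \ge (\eps/\log n)^c\norm{\bx_{-k}}_2^2/k$ when invoking \Cref{lem:contLS} --- a reference that, as you note, \Cref{lem:contLS} itself does not require.
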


\begin{proof}
    The proof is similar to that of Lemma~\ref{lem:sampling_rate}. Again consider some $S_j$ with value range $[v, (1+\eps)v]$ where $\zeta(1+\eps)^{t-j} \geq |a_{n-k}|$. Consider the sub-stream $\mathcal{P}$ with corresponding sampling rate $r = \min\left(1, \frac{C \log n}{s_j \eps^2}\right)$ for a sufficiently large constant $C$ and assume that there are $z$ survivors of $S_j$ in $\mathcal{P}$. Since we have a random threshold $\zeta$ in the boundary of the level set, we have with probability $1-\poly(\eps / \log n)$ that a $(1-\eps)$ fraction of them is in $[v(1+\poly(\eps / \log n)), (1+\eps) v (1-\poly(\eps / \log n))]$. 

    We next analyze the tail error of the heavy hitter data structure. Since $S_j$ contributes, we have 
    \[s_j \ge (\eps^2 / \log^2 n) \cdot \rank(v).\]

    Recall again that we use $(\log n / \eps)^{c+9}$ buckets in the HeavyHitter data structure. 
    So, by the same logic as in Lemma~\ref{lem:sampling_rate}, with probability $1-\poly(\eps / \log n)$ 
    the tail error of the HeavyHitter structure is at most $\frac{1}{\sqrt{s_j}} (\eps / \log n)^{c/2 + 9/2} \norm{\bx_{-\rank(v)}}_2$. 

    We have $v^2 \cdot s_j \geq (\eps / \log n)^{5} \cdot \norm{\bx_{-\rank(v)}}_2^2$ by Lemma~\ref{lem:contLS}. The rest of the proof goes through exactly as in Lemma~\ref{lem:sampling_rate}.  
\end{proof}

\begin{lemma} \label{lem:nonT}
    Consider some $S_j$ which does not contribute. With probability at least $1-\poly(\eps / \log n)$, $\tilde{s}_j \in [0, (1+\eps) s_j]$. 
\end{lemma}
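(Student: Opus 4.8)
The plan is to mirror the proof of \Cref{lem:nonS} in the top-$k$ setting, now adapted to the trimmed definition of ``contributes'' in \Cref{def:congtibute_trimmed}. First I would note that the claimed bound has two parts: the lower bound $\tilde s_j \ge 0$, which is immediate from the construction of \hyperref[alg:levelSetEst]{Level-Set-Estimator}, since whenever no substream with the required $\Theta(\log n/\eps^2)$ survivors in the interval $[\zeta(1+\eps)^{t-j-1},\zeta(1+\eps)^{t-j})$ is found the algorithm sets $\tilde s_j \leftarrow 0$; and the upper bound $\tilde s_j \le (1+\eps)s_j$, which is the content of the argument. So the real work is the upper bound.

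For the upper bound I would proceed by a case split on whether the algorithm actually locates a substream with enough survivors. If it does not, then $\tilde s_j = 0 \le (1+\eps)s_j$ trivially. If it does — say the chosen level $\ell$ has $\Theta(\log n/\eps^2)$ survivors of $S_j$ and we set $\tilde s_j = z \cdot 2^\ell$ — then I would argue that this substream's sampling rate $r = 2^{-\ell}$ cannot be much below the ``ideal'' rate $\min(1, C\log n/(s_j\eps^2))$, because by \Cref{lem:lowSL} a rate smaller by a constant factor would, with probability $1-1/\poly(n)$, leave fewer than $c\log n/\eps^2$ survivors, contradicting the fact that $\ell$ was selected. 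Given that $r$ is within a constant factor of the ideal rate (or larger), the Heavy Hitter data structure with $(\log n/\eps)^{c+6}$ buckets still recovers no more than the true survivors in the level set — a Heavy Hitter structure never reports a coordinate that is not actually in the designated value window, so $z$ can only undercount, not overcount, the genuine survivors of $S_j$. Combining this with a Chernoff bound on $z$ (as in \Cref{lem:sjapprox}, applied at rate $r$) gives $z/r \le (1+\eps)s_j$ with probability $1-\poly(\eps/\log n)$, i.e.\ $\tilde s_j \le (1+\eps)s_j$. Taking a union bound over the relevant failure events yields the claim with probability $1-\poly(\eps/\log n)$.

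The point worth emphasizing — and the reason the statement is an inequality rather than a two-sided approximation — is that for a non-contributing $S_j$ we have \emph{no} lower bound on $s_j$ relative to the residual norm, so \Cref{lem:contLS} does not apply and the Heavy Hitter structure may simply fail to detect the (few) survivors of $S_j$ buried under the tail; this is precisely why an underestimate down to $0$ is possible. Conversely, overcounting is structurally impossible because the estimator only tallies coordinates whose estimated value lands in the level-set window, and the Heavy Hitter guarantee (\Cref{sec:count-sketch}) controls estimation error by $\tfrac{1}{k}\|\ff_{-k}\|_2$, so a coordinate outside the window is not misclassified into it except on a $\poly(\eps/\log n)$-probability event absorbed by the random offset $\zeta$.

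I expect the main obstacle to be bookkeeping rather than a genuine difficulty: one must be careful that the event ``the chosen substream has rate within a constant of ideal'' holds simultaneously with ``the Chernoff concentration of $z$ at that rate'' and ``the boundary-effect bound from $\zeta$'', and that the union bound over all $t = O(\log m/\eps)$ level sets still leaves overall constant success probability — exactly as handled in the proof of \Cref{thm:top_k}. Since \Cref{lem:nonS} already carried out this argument verbatim in the top-$k$ case, and the only change here is replacing $\sum_{i=1}^k |a_i|^p$ by $\sum_{i=k+1}^{n-k}|a_i|^p + k|a_k|^p$ in \Cref{def:congtibute_trimmed} (which plays no role in the non-contributing analysis beyond defining which sets are non-contributing), the proof should reduce to citing the earlier argument with this substitution.
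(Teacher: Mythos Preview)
Your proposal is correct and matches the paper's approach exactly: the paper's proof of \Cref{lem:nonT} is literally ``This follows from the same proof as \Cref{lem:nonS},'' and your final paragraph anticipates precisely this reduction. Your elaboration of the two cases (no substream found $\Rightarrow \tilde s_j = 0$; substream found $\Rightarrow$ the concentration argument of \Cref{lem:sampling_rate}/\Cref{lem:sjapprox}/\Cref{lem:lowSL} gives $\tilde s_j \le (1+\eps)s_j$) is exactly the content of \Cref{lem:nonS}, so you have recovered the intended proof with somewhat more detail than the paper itself provides.
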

\begin{proof}
    This follows from the same proof as Lemma~\ref{lem:nonS}. 
\end{proof}
\begin{lemma} \label{lem:nonT2}
    The $F_p$ norm of all coordinates in non-contributing level sets is at most $O(\eps) \cdot (\sum_{i=k+1}^{n-k}|a_i|^p + k|a_k|^p)$. 
\end{lemma}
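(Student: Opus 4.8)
The plan is to mirror the argument of \Cref{lem:nonS2} almost verbatim, with the only change being the right-hand side of the ``contribute'' threshold. Recall from \Cref{def:congtibute_trimmed} that $S_j$ is non-contributing precisely when
\[
\sum_{i \in S_j} |x_i|^p < \frac{\eps^2}{\log m}\left(\sum_{i = k+1}^{n-k} |a_i|^p + k |a_k|^p\right).
\]
So each individual non-contributing level set carries at most a $\frac{\eps^2}{\log m}$ fraction of the quantity $\left(\sum_{i=k+1}^{n-k}|a_i|^p + k|a_k|^p\right)$ in its $F_p$ mass.

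First I would count the level sets: by \Cref{def:contribute}, the value range $[0,m)$ is partitioned (after the random offset $\zeta$) into $t = 1 + \log_{1+\eps} m = O\!\left(\frac{\log m}{\eps}\right)$ geometrically-spaced level sets $S_1, \dots, S_{\log_{1+\eps} m}$, so in particular there are at most $O(\log m/\eps)$ non-contributing ones. Summing the per-set bound over all of them gives
\[
\sum_{j : S_j \text{ non-contributing}} \ \sum_{i \in S_j} |x_i|^p
\;\le\; \frac{\log m}{O(\eps)} \cdot \frac{\eps^2}{\log m}\left(\sum_{i = k+1}^{n-k} |a_i|^p + k |a_k|^p\right)
\;=\; O(\eps)\left(\sum_{i = k+1}^{n-k} |a_i|^p + k |a_k|^p\right),
\]
which is exactly the claimed bound.

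There is essentially no obstacle here: the argument is purely a union/summation over the fixed geometric partition and the definitional threshold, identical in structure to \Cref{lem:nonS2}. The only point worth a sentence of care is that the level-set partition is the \emph{same} fixed partition used throughout (so ``non-contributing level set'' refers to this specific collection, and coordinates in non-contributing level sets are disjoint across $j$), and that we are using the trimmed-norm version of the contribute threshold from \Cref{def:congtibute_trimmed} rather than the top-$k$ version from \Cref{def:contribute}. With those understood, the statement follows immediately from the two displayed inequalities above.
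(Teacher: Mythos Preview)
Your proposal is correct and follows essentially the same approach as the paper: count the at most $O(\log m/\eps)$ level sets, bound each non-contributing one by $\frac{\eps^2}{\log m}\bigl(\sum_{i=k+1}^{n-k}|a_i|^p + k|a_k|^p\bigr)$ via \Cref{def:congtibute_trimmed}, and sum. The paper's proof is the same one-line multiplication you wrote.
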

\begin{proof}
    There are $\frac{\log m}{O(\eps)}$ level sets, and therefore at most that many non-contributing sets. Therefore by definition, we have that the $F_p$ norm of all coordinates in non-contributing level sets is at most 
    \[
    \frac{\log m}{O(\eps)} \cdot \frac{\eps^2}{\log^2 n} \left( \sum_{i = k+1}^{n-k} |a_i|^p + k|a_k|^p \right) = O(\eps) \cdot \left( \sum_{i = k+1}^{n-k} |a_i|^p + k|a_k|^p \right). \qedhere
    \]
\end{proof}

After obtaining Lemma~\ref{lem:trimmed_level_set}, similarly to the proof of \Cref{thm:top_k}, we can get the correctness of \Cref{thm:trimk}.
\begin{proof}[Proof of \Cref{thm:trimk}]
We have that with probability at least $1-\poly(\eps / \log n)$ that for contributing level set $S_j$, we have that the estimator $\tilde{s}_j \in (1\pm \eps) s_j$ from Lemma~\ref{lem:trimmed_level_set}. For level sets that do not contribute, we have from Lemma~\ref{lem:nonT}
that the size estimate can be an under-estimate but never an estimate by more than a $(1 + \eps)$ factor with probability at least $1-\poly(\eps/\log n)$. 

Note that we have $t = \log m / (2\eps)$ level sets where $m = \poly(n)$ by assumption. Therefore, we can take a union bound over all the level sets for all the above events and get that they happen with constant probability. Now, we upper bound the output of our algorithm. 

As proven above, for each contributing level set $S_j$, the algorithm estimates $s_j$ within a $(1\pm \eps)$ factor. In addition, each coordinate value in this level is within a $(1\pm \eps)$ factor due to the range of each level set. So, the output is at most  
 \[  
(1 + \eps)^2 \sum_{i = u}^{u + n-2k} |a_{i}|^p = \left(1 + O(\eps)\right) \sum_{i = u}^{u+n - 2k} |a_{i}|^p\;.
\]
        We now lower bound the output of our algorithm. Recall that besides the error in estimating $s_j$ for contributing sets $S_j$ and the error associated with the level set range, underestimating comes from non-contributing level sets. The sum of all the coordinates in the non-contributing level sets is at most $O(\eps) \left(\sum_{i = k+1}^{n-k} |a_{i}|^p + k|a_k|^p\right) $ by Lemma~\ref{lem:nonT2}. 
        Therefore, we have that the output of the algorithm is at least 
        \[
        (1 - \eps)^2 \sum_{i = u}^{u+n-2k} |a_{i}|^p - O(\eps) \left(\sum_{i = k+1}^{n-k} |a_{i}|^p + k|a_k|^p\right)= \left(1 - O(\eps)\right) \left(\sum_{i = k+1}^{n-k} |a_{i}|^p + k|a_{k-\eps k}|^p\right)\;. \qedhere
        \]

\end{proof}

\section{Trimmed Statistics for $p > 2$} \label{sec:p2}

In this section, we give our sketching algorithms for the trimmed statistic of a vector for the case when $p > 2$. We use the same algorithms as \Cref{alg:topk} and \Cref{alg:trimk} but instead keep track of the $\poly(\eps / \log n) \cdot \frac{1}{n^{1 - 2/p}}$- $\ell_2$ heavy hitters. 

\subsection{Top-$k$}
\begin{theorem}[$F_p$ of top-$k$ vector for $p > 2$] \label{thm:top_k_lp}
Given $\bx \in \Z^n$, $p > 2$, $k \geq 0$, and $\eps \in (0,1)$, suppose that we have $|a_k|^p \ge (\eps/\log n)^c \cdot \frac{\| \bx_{-k}\|_p^p}{k}$
    for some constant $c$. Then there exists a linear sketch that uses $\poly(\log n / \eps)\cdot n^{1 - 2/p}$ bits of space and estimates $\sum_{i = 1}^k |a_i|^p$ up to a $(1 \pm \eps)$ multiplicative factor with high constant probability.
\end{theorem}

We first consider the estimation of the $F_p$ moment of the top-$k$ frequencies. We use the same definition of ``contribute" as Definition~\ref{def:contribute}. 

The following lemma is analogous to Lemma~\ref{lem:condition} for the case when $p \le 2$.

\begin{lemma} \label{lem:conGP}
     For any $j$ such that that $\zeta (1+\eps)^{t-j} \ge |a_k|$ and $S_j$``contributes", 
    taking $v = \zeta(1+\eps)^{t - j - 1}$ we have $v^p \cdot s_j \ge (\eps / \log n)^{c+5} \cdot \norm{\bx_{-\mathrm{rank}(v)}}_p^p$
    where $\mathrm{rank}(v)$ is the rank of $v$ in array $\bx$ (i.e., the number of entries in $\bx$ with value greater than $v$) .
\end{lemma}
\begin{proof}
Since $S_j$ contributes by definition we have 
\begin{equation}
\label{eq:p1}
v^p \cdot s_j \ge (\eps / \log n)^{2} \cdot |a_k|^p \cdot k \ge (\eps/\log n)^{c+2} \norm{\bx_{-k}}_p^p \;
\end{equation}
where the last inequality comes from the condition that $a_k^p \geq (\eps/\log n)^c \cdot \|\bx_{-k}\|_p^p / k$. 

We next measure the difference between $\norm{\bx_{-k}}_p^p$ and $\norm{\bx_{-\mathrm{rank}(v)}}_p^p$. Define the level set $T_\ell$ for $\ell \in [0,q-1]$ where 
\[
T_\ell = \{i \in [n]: |x_i| \in [v/2^{\ell + 1}, v/2^\ell)\} , 
\]
for $v/2^{\ell + 1} \ge |a_k|$ and 
\[
T_q = \{i\in [n]: |x_i| \in [|a_k|, v/2^q)\} \;. 
\]
Let $t_\ell = |T_\ell|$, $t_q = |T_q|$. Now, we claim that  we have for $w \in [0, q]$ that $t_w \le O(\log^2n / \eps^2) \cdot s_j \cdot 2^{wp}$. Otherwise we would have 
\[
\left(\sum_{i = 1}^k |a_i|^p\right) 
\ge \left(\frac{v}{2^{w + 1}}\right)^{p} \cdot t_w
\ge \left(\frac{v}{2^{w + 1}}\right)^{p} \cdot O\left(\frac{\log^2n}{\eps^2}\right) s_j 2^{wp} \ge O \left(\frac{\log^2n}{\eps^2} \right)
\left(\sum_{i \in S_j} |x_i|^p\right) \;
\]
which contradicts the fact that $S_j$ contributes. So now, taking a sum we get 
\[
\sum_{\mathrm{rank}(v) \le i \le k} |a_i|^p \le \sum_{i \in [q]} t_i \left(\frac{v}{2^{i}}\right)^p  \leq O\left(\frac{\log^2n}{\eps^2}\right) s_j \sum_{i \in [q]} \left( \frac{v}{2^i} \right)^p 2^{ip} \le v^p s_j \cdot O\left(\frac{\log^3 n}{\eps^2}\right) \;.
\]
So we get
\begin{equation} \label{eq:rank1}
\norm{\bx_{-k}}_p^p \ge \norm{\bx_{-\mathrm{rank}(v)}}_p^p - v^p s_j \cdot O\left(\frac{\log^3 n}{\eps^2}\right) \;.
\end{equation}
Combining~\eqref{eq:p1} and~\eqref{eq:rank1} we immediately get that
\[
v^p \cdot s_j \ge (\eps / \log n)^{c+5} \cdot \norm{\bx_{-\mathrm{rank}(v)}}_p^p \;. \qedhere
\]
\end{proof}

\begin{lemma}
    \label{lem:sampling_rate1}
    Suppose that $|a_k|^p \ge (\eps/\log n)^c \cdot \frac{\norm{\bx_{-k}}_p^p}{k}$ and $S_j$ contributes where $\zeta (1+\eps)^{t-j} \ge |a_k|$. Then, with probability at least $1 - \poly(\eps / \log n)$, we have $\tilde{s}_j \in [(1 - \eps)s_j, (1 + \eps)s_j]$.
\end{lemma}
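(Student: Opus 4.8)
The plan is to follow the proof of \Cref{lem:sampling_rate} essentially verbatim, with one genuinely new ingredient: the tail of the Count-Sketch — now built with $\poly(\log n/\eps)\cdot n^{1-2/p}$ buckets — in the relevant substream must be bounded through the $\ell_2$--$\ell_p$ comparison $\norm{z}_2\le\norm{z}_0^{1/2-1/p}\norm{z}_p$ (here $\norm{\cdot}_0$ counts nonzeros), and the extra $n^{1-2/p}$ buckets are exactly what is needed to absorb the resulting $n^{1-2/p}$ factor. Concretely, fix a contributing level set $S_j$ with value range $[v,(1+\eps)v)$, $v=\zeta(1+\eps)^{t-j-1}$, and consider the substream $\mathcal{P}$ at rate $r=\min\!\big(1,\tfrac{C\log n}{s_j\eps^2}\big)$. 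As in \Cref{lem:sampling_rate}, the random offset $\zeta$ ensures that, except with probability $\poly(\eps/\log n)$, a $(1-\eps)$-fraction of the surviving elements of $S_j$ lie strictly inside the interval, so it suffices to show the heavy-hitter structure on $\mathcal{P}$ reports each interior survivor and estimates its value to within additive error $\poly(\eps/\log n)\cdot v$. Given that, the rescaled count of reported survivors is handled exactly as by \Cref{lem:sjapprox}, and \Cref{lem:lowSL} again certifies that the algorithm picks the correct sampling level; a union bound over all $S_j$ then finishes the proof.

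First I would bound the ``head'' of the substream. Arguing as in \Cref{lem:sampling_rate}, the fact that $S_j$ contributes forces $s_j\ge(\eps^2/\log n)\cdot\rank(v)$, so $r\cdot\rank(v)\le\poly(\log n/\eps)$, and a Chernoff bound shows that the number of surviving coordinates of value $\ge v$ is at most some $R=\poly(\log n/\eps)$ with probability $1-1/\poly(n)$. Since the sketch uses $B=\poly(\log n/\eps)\cdot n^{1-2/p}$ buckets (with the exponent of the polynomial taken large), $B\ge R$, so removing the top $B$ coordinates of the subsampled vector $\hat\bx$ deletes all head survivors, giving $\norm{\hat\bx_{-B}}_2\le\norm{\hat y}_2$ where $y$ is $\bx$ restricted to coordinates of value $<v$.

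The crux is bounding $\norm{\hat y}_2$. By the power-mean inequality, $\norm{\hat y}_2^2\le\norm{\hat y}_0^{\,1-2/p}\norm{\hat y}_p^2\le n^{1-2/p}\norm{\hat y}_p^2$. Separately, $\norm{\hat y}_p^p$ is a sum of at most $n$ independent terms, each bounded by $v^p$, with mean $r\norm{y}_p^p$, so a Bernstein bound gives $\norm{\hat y}_p^p\le O\!\big(r\norm{y}_p^p+v^p\log^2 n\big)$ with probability $1-1/\poly(n)$. \Cref{lem:conGP}, together with the hypothesis $|a_k|^p\ge(\eps/\log n)^c\norm{\bx_{-k}}_p^p/k$, yields $\norm{y}_p^p\le\norm{\bx_{-\rank(v)}}_p^p\le\poly(\log n/\eps)\cdot v^p s_j$, and since $r\le\tfrac{C\log n}{s_j\eps^2}$ this makes $r\norm{y}_p^p\le\poly(\log n/\eps)\cdot v^p$; note that, unlike the $p\le 2$ case, no loss in $s_j$ survives because the $s_j$ from the contribution bound cancels the $1/s_j$ in the sampling rate (and $p>2$ makes $s_j^{2/p-1}\le1$ anyway). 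Hence $\norm{\hat y}_2^2\le n^{1-2/p}\cdot\poly(\log n/\eps)\cdot v^2$, and the Count-Sketch additive error on $\mathcal{P}$ is $O\!\big(\norm{\hat\bx_{-B}}_2/\sqrt{B}\big)\le\poly(\eps/\log n)\cdot v$ once the exponent in $B$ exceeds that in the numerator. Since every interior survivor of $S_j$ has magnitude in $[v,(1+\eps)v)$, this error is $\ll v$, so each such survivor is reported and its value estimated to relative error $\poly(\eps/\log n)$; combining with \Cref{lem:sjapprox} and \Cref{lem:lowSL} gives $\tilde s_j\in[(1-\eps)s_j,(1+\eps)s_j]$ with probability $1-\poly(\eps/\log n)$.

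The main obstacle is precisely this tail bound: for $p>2$ the $\ell_2^2$ mass of the tail of $\hat\bx$ is not controlled by its $\ell_p^p$ mass, so one genuinely pays the $n^{1-2/p}$ factor from the power-mean inequality — this is the reason the sketch size must grow by $n^{1-2/p}$ — and care is needed to keep this factor from being compounded by the subsampling. Once the estimate $\norm{\hat y}_2^2\le n^{1-2/p}\poly(\log n/\eps)v^2$ is in hand, the remainder is a matter of matching the exponents of $\eps/\log n$ across the contribution threshold in \Cref{def:contribute}, the number of buckets, and the heavy-hitter accuracy, which is routine.
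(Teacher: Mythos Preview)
Your proposal is correct and follows essentially the same approach as the paper: bound the head survivors via $s_j\ge(\eps^2/\log n)\rank(v)$, control the Count-Sketch tail using the extra $n^{1-2/p}$ buckets together with the $\ell_2$--$\ell_p$ comparison, invoke \Cref{lem:conGP} to compare against $v$, and finish with \Cref{lem:sjapprox} and \Cref{lem:lowSL}. The only difference is an ordering swap: the paper applies H\"older to the \emph{original} tail $\bx_{-\rank(v)}$ (so the subsampling argument is literally the $p\le 2$ one for $\ell_2$, then $\norm{\bx_{-\rank(v)}}_2\cdot n^{1/p-1/2}\le\norm{\bx_{-\rank(v)}}_p$ is applied afterward), whereas you apply the power-mean inequality to the \emph{subsampled} tail $\hat y$ and then use a Bernstein bound on $\norm{\hat y}_p^p$. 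Your route is slightly more explicit about the concentration step and cancels the $s_j$ factor earlier (the paper instead ends up with a harmless $s_j^{1/p-1/2}\le 1$ factor), but the two arguments are interchangeable and use the same ingredients.
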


\begin{proof}
    Consider a level set $S_j$ with a value range in $[v, (1 + \eps) v]$ where $\zeta (1+\eps)^{t-j} \ge |a_k|$. Consider the sub-stream $\mathcal{P}$ with corresponding sampling rate $r  = \min\left(1, \frac{C \log n}{s_j \eps^2}\right)$ for a sufficiently large constant $C$ and assume there are $z$ survivors of $S_j$ in $\mathcal{P}$.
    Since we have a random threshold $\zeta$ in the boundary of the level set, we have with probability at least $1 - \poly(\eps /\log n)$, a $(1 - \eps)$ fraction of them is within $[v\left(1 + \poly(\eps / \log n)\right), 
(1 + \eps) v \left(1 - \poly(\eps / \log n)\right)]$. 

    We next analyze the tail error of the heavy hitter data structure. First we have that 
    \begin{equation} \label{eq:sjRank1}
    s_j \ge (\eps^2 / \log^2 n) \cdot \rank(v).
    \end{equation}
    Suppose that we had $s_j < (\eps^2 / \log^2 n) \cdot \rank(v)$. This would mean that $\rank(v) \cdot v^p \geq s_j (\log^2 n / \eps^2) \cdot v^p$, which contradicts the fact that $S_j$ 
    contributes. Therefore, combining 
    (\ref{eq:sjRank1}) with the fact that  
    $\mathcal{P}$ 
    has sampling rate 
    $r = \min\left(1, \frac{C \log n}{s_j \eps^2}\right)$ gives us that with probability $1- \poly(\eps / \log n)$ using Chernoff's bound that the number of survivors of the top $\rank(v)$ coordinates of $x$ surviving in $\mathcal{P}$ is at most $O((\log n / \eps)^4)$. 
    
    Suppose that we use $(\log n/\eps)^{c+9} \cdot n^{1-2/p}$ buckets in our heavy hitter data structure (\Cref{sec:count-sketch}). Hence, with probability at least $1 - \poly(\eps /\log n)$, the tail error of the heavy hitter data structure will be at most $\frac{1}{\sqrt{s_j}} (\eps / \log n)^{c/2+9/2} \cdot \norm{\bx_{-\rank(v)}}_p$ (converting from $\| \cdot \|_2$ to $\| \cdot \|_p$ by Holder's Inequality).  
    
    Recall that we have condition $|a_k|^p \ge (\eps/\log n)^c \cdot \frac{\norm{\bx_{-k}}_p^p}{k}$ and therefore from Lemma~\ref{lem:conGP} have $v^p \cdot s_j \ge (\eps / \log n)^{c+5} \cdot \norm{\bx_{-\mathrm{rank}(v)}}_p^p$. 
     Combining this with the above tail error we immediately have with high probability the heavy hitter data structure can identify every survivor in $[v\left(1 + \poly(\eps / \log n)\right), 
(1 + \eps) v \left(1 - \poly(\eps / \log n)\right)]$. 

Combining this with Lemma~\ref{lem:sjapprox}, the remaining thing is to show that with high probability that a level with a smaller sampling rate than $r$ does not have $\Theta(\log n / \eps^2)$ survivors of $S_j$. 
Recall that in the algorithm, for each level set it finds the sub-stream with the smallest sampling rate such that there are $\Theta(\log n / \eps^2)$ coordinates in the set. Then to get an estimate of the size of the level set we re-scale by the sampling probability. It follows from Lemma~\ref{lem:lowSL} that we identify the right sampling rate. 

So, we have with probability at least $1 - \poly(\eps / \log n)$ that $\tilde{s}_j \in [(1 - \eps)s_j, (1 + \eps)s_j]$.
\end{proof}

The rest of the proof of~\Cref{thm:top_k_lp} follows from the proof of \Cref{thm:top_k} and from the fact that Holder's inequality gives us $\norm{\cdot}_2 \cdot \frac{1}{n^{1/2 - 1/p}} \leq \norm{\cdot}_p$ for $p > 2$.

\subsection{Trimmed-$k$}

\begin{theorem}[$F_p$ for trimmed-$k$ vector for $p > 2$] \label{thm:trimmed_k_lp}
Given $\bx \in \Z^n$, $p > 2$, $k \geq 0$, and $\eps \in (0,1)$, suppose that we have $|a_k|^p \ge (\eps/\log n)^c \cdot \frac{\| \bx_{-k}\|_p^p}{k}$
    for some constant $c$. Then there exists a linear sketch that uses $\poly(\log n / \eps)\cdot n^{1 - 2/p}$ bits of space and estimates $\sum_{i = k + 1}^{n-k} |a_i|^p$ with error $\eps \left(\sum_{i=k + 1}^{n-k} |a_i|^p + k |a_{k - \eps k}|^p \right)$ with high constant probability. 
\end{theorem}

We next consider the estimation of the $F_p$ moment of the trimmed-$k$ vector. We say a level set $S_j$ ``contributes" according to 
Definition~\ref{def:congtibute_trimmed}. 
The following lemma is analogous to Lemma~\ref{lem:contLS} for the case when $p \le 2$.

\begin{lemma} \label{lem:contLS_lp}
    If 
    $S_j$ contributes and we have 
    $v = \zeta(1+\eps)^{t - j} \ge |a_{n - k}|$ then we have $v^p \cdot s_j \ge (\eps / \log n)^{c+5} \cdot \norm{\bx_{-\mathrm{rank}(v)}}_p^p$
    where $\mathrm{rank}(v)$ is the rank of $v$ in array $\bx$ (i.e., the number of entries in $\bx$ with value greater than $v$) .
\end{lemma}
\begin{proof}
Since $S_j$  contributes, by definition, we have 
\begin{equation} \label{eq:333}
    v^p \cdot s_j \geq (\eps / \log n)^{2} \cdot a_{n-k}^p \cdot (n-k) \geq O(\eps / \log n)^{2} \norm{\bx_{-(n - k)}}_p^p
\end{equation}
given $n - k  \ge \frac{n}{2}$. Note that the problem is only properly defined if we have $k \leq \frac{n}{2}$.   

We next measure the difference between $\norm{\bx_{-(n - k)}}_p^p$ and $\norm{\bx_{-\mathrm{rank}(v)}}_p^p$. For the level set 
\[
T_\ell = \{i \in [n]: |x_i| \in [v/2^{\ell + 1}, v/2^\ell]\} , 
\]
where $v/2^{\ell + 1} \ge |a_{n - k}|$ and 
\[
T_q = \{i \in [n]: |x_i| \in [|a_{n-k}|, v / 2^q)\}. 
\]
Let $t_\ell = |T_\ell|$, $t_q = |T_q|$. We now claim that for $w \in [0,q]$ we have $t_w \le O(\log^2 n / \eps^2) \cdot s_j \cdot 2^{wp}$. Otherwise, we would have  
\[
\sum_{i = k + 1}^{n - k} |a_i|^p + k|a_k|^p \ge \left(\frac{v}{2^{w + 1}}\right)^{p} \cdot t_w \ge
\left(\frac{v}{2^{w + 1}}\right)^{p} \cdot O\left(\frac{\log^2n}{\eps^2}\right) s_j 2^{wp} \ge O\left(\frac{\log^2n}{\eps^2}\right) \left(\sum_{i \in S_j} |x_i|^p\right) \;
\]
which contradicts the fact that $S_j$ contributes. Now, taking a sum we get 
\begin{equation} \label{eq:rankNK1}
\sum_{\mathrm{rank}(v) \le i \le n - k} |a_i|^p \le \sum_{i \in [q]} t_i \left(\frac{v}{2^{i}}\right)^p  \leq O\left(\frac{\log^2n}{\eps^2}\right) s_j \sum_{i \in [q]} \left( \frac{v}{2^i} \right)^p 2^{ip} \le v^p s_j \cdot O\left(\frac{\log^3 n}{\eps^2}\right) \;.
\end{equation}
This implies that 
\[
\norm{\bx_{-(n - k)}}_p^p \ge \norm{\bx_{-\mathrm{rank}(v)}}_p^p - v^p s_j \cdot O\left(\frac{\log^3 n}{\eps^2}\right) \;.
\]
Combining~\eqref{eq:333} and~\eqref{eq:rankNK1} we get immediately that $v^p \cdot s_j \ge (\eps / \log n)^{5} \cdot \norm{\bx_{-\mathrm{rank}(v)}}_p^p.$

\end{proof}

The rest of the proof of~\Cref{thm:trimmed_k_lp} follows from the proof of \Cref{thm:trimk} and from the fact that Holder's inequality gives us $\norm{\cdot}_2 \cdot \frac{1}{n^{1/2 - 1/p}} \leq \norm{\cdot}_p$ for $p > 2$.

\section{Applications} \label{sec:apps}
In this section, we will consider some variants of the problem we have considered thus far. 

\subsection{Sum of Large Items}

\begin{corollary}[Thresholded $F_p$ estimation] \label{thm:sumHH} 
Given $\bx \in \Z^n$, $ p \in [0,2]$, threshold $\mathcal{T} \geq 0$, and $\eps \in (0,1)$, if we have $a_k^2 \geq (\eps / \log n)^c \cdot \frac{\|\bx_{-k}\|_2^2}{k},$ 
    for some constant $c$ where $k$ is the largest integer such that $|a_k| \geq \mathcal{T}$, then there exists a linear sketch that uses $\poly(\log n / \eps)$ bits of space and estimates $\sum_{i \in \mathcal{B}_{\T}}|x_i|^p$ for $\mathcal{B}_\mathcal{T} = \{ i \in n : |x_i| \geq \mathcal{T} \}$ with error $\eps \left (\sum_{i \in \mathcal{B}_{\mathcal{T}}} |x_i|^p \right) + (1+\eps) \mathcal{T}^p \cdot |x_{(1-\eps)\mathcal{T}, \mathcal{T}}|$ with high constant probability where $|x_{(1-\eps)\mathcal{T}, \mathcal{T}}|$ denotes the number of coordinates with value $[(1-\eps)\mathcal{T}, \mathcal{T})$. 
\end{corollary}

The first application is the estimation of the $F_p$ moment of the frequencies that are larger than a given threshold $\mathcal{T}$ in absolute value. At a high level our algorithm, \hyperref[alg:summedHH]{Summed-HH-Estimator} (Algorithm~\ref{alg:summedHH}), divides the items in level sets. Then we take the sum over all level sets associated with a value that is above the threshold. We note that there is extra difficulty here since level sets only allow us to estimate coordinates within a $(1 + \eps)$-factor. In particular, we cannot distinguish between coordinates with value $(1-\eps) \mathcal{T}$ and $\mathcal{T}$ causing us to incur additional additive error. 

\begin{algorithm}[t]
\caption{Summed-HH-Estimator ($\eps \in (0, 1]$, $p \in (0, 2]$, $k \geq 0$, $\mathcal{T} \geq 0$)}
\label{alg:summedHH}
\begin{algorithmic}[1] 
\STATE Run \hyperref[alg:levelSetEst]{Level-Set-Estimator ($\eps$)}.
\STATE $t \leftarrow \log_{1+\eps}(m) + 1$.
\STATE Find the largest $i$ such that $\zeta (1+\eps)^{t-i} \geq \mathcal{T}$. 
\STATE \textbf{Return} $\left(\sum_{j=0}^{i} \tilde{s_j} \cdot \zeta^p (1+\eps)^{p(t-j)}\right)$. 
\end{algorithmic}
\end{algorithm}

\begin{proof} [Proof of~\Cref{thm:sumHH}]
For the purposes of analysis, take $k$ to be the largest integer such that $|a_k| \geq \mathcal{T}$. 
    Let $i$ be the largest integer number such that $\zeta (1+\eps)^{t-i} \geq \mathcal{T}$ and let $k'$ be the largest integer number such that $|a_k'| \ge \zeta (1+\eps)^{t-i - 1}$. We can see the output $Z$ of \Cref{alg:summedHH} will be the same as the output of \Cref{alg:topk} with input parameter $k'$.
    From the assumption we have that $a_{k'}^2 \geq \poly(\eps / \log n) \cdot \frac{\|\bx_{-k'}\|_2^2}{k'}, $
    as $|a_{k'}|$ can only differ from $|a_k|$ by a $(1 \pm \eps)$-factor.
    From \Cref{thm:top_k} we have that with high constant probability $\left|Z - \sum_{i = i}^{k'} |a_i|^p\right| \le \eps \left(\sum_{i = 1}^{k'} |a_i|^p \right)$

    We next measure the difference between $\left(\sum_{i = 1}^{k'} |a_i|^p \right)$ and $\left(\sum_{i = 1}^{k} |a_i|^p \right)$. Clearly, it can be upper bounded by $\mathcal{T}^p \cdot |x_{(1-\eps)\mathcal{T}, \mathcal{T}}|$ where $|\bx_{(1-\eps)\mathcal{T}, \mathcal{T}}|$ denotes the number of coordinates with value $[(1-\eps)\mathcal{T}, \mathcal{T}) $. Thus by triangle inequality we get that with high constant probability, $Z$ is an estimation of $\left(\sum_{i = 1}^{k} |a_i|^p \right)$ with error at most $\eps \left (\sum_{i \in \mathcal{B}_{\mathcal{T}}} |x_i|^p \right) + (1+\eps) \mathcal{T}^p \cdot |\bx_{(1-\eps)\mathcal{T}, \mathcal{T}}|$.
\end{proof}

\subsection{Extensions of Impact Indices}
\subsubsection{Extension of the $g$-index} \label{sec:thresk}

\begin{corollary} \label{thm:sumAboveT}
     Given $\bx \in \Z^n$, $p \in [1,2]$ and $\eps \in (0,1)$, let $k$ be the largest integer number such that $\sum_{i=0}^k |a_i|^p \geq k^{p+1}$. If we have 
    $a_{(1 - \eps )k}^2 \geq (\eps / \log n)^c \cdot \frac{\norm{\bx_{-(1 - \eps)k}}_2^2}{(1 - \eps) k}$
    for some constant $c$, then there exists a linear sketch that uses $\poly(\log n/\eps)$ bits of space and outputs a $\tilde{k}$ such that $(1 - \eps) \cdot k \le \tilde{k} \le (1 + \eps) \cdot (k + 1)$ with high constant probability. 
\end{corollary}

Here we prove \Cref{thm:sumAboveT} with our algorithm \hyperref[alg:thresAlg]{Threshold-Norm-Estimator} (\Cref{alg:thresAlg}). We run \hyperref[alg:levelSetEst]{Level-Set-Estimator} and then use that to find our estimate to $k$. 

\begin{algorithm}[t]
\caption{Threshold-Norm-Estimator ($\eps \in (0, 1]$, $p \in (0, 2]$)}
\label{alg:thresAlg}
\begin{algorithmic}[1] 
\STATE Run \hyperref[alg:levelSetEst]{Level-Set-Estimator ($\eps$)}.
\STATE Find the largest $\tilde{k}$ such that the estimated $F_p$ moment of the top-$k'$ frequencies by \Cref{alg:topk} $\geq \tilde{k}^{p+1}$.
\STATE \textbf{Return} $\tilde{k}$.
\end{algorithmic}
\end{algorithm}

\begin{proof} [Proof of~\Cref{thm:sumAboveT}]

Take $k$ to be the largest integer such that $\sum_{i=1}^k |a_i|^p \geq k^{p+1}$. We show that the estimate of the algorithm $\tilde{k}$ incurs an appropriate error. In the rest of the proof, we assume the estimation of each level set in \Cref{alg:levelSetEst} satisfies the guarantee, which holds with high constant probability. We first lower bound $\tilde{k}$.
\begin{lemma} \label{lem:k1}
    $\tilde{k} \geq (1-\eps) \cdot k$. 
\end{lemma}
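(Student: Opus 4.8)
The plan is to show that the (rounded) integer $m^\star := (1-\eps)k$ is itself a feasible candidate for the maximization defining $\tilde k$ --- that is, that the value \Cref{alg:topk} reports for the top-$m^\star$ $F_p$ moment is at least $(m^\star)^{p+1}$. Since $\tilde k$ is by definition the largest integer with this property, that immediately gives $\tilde k \ge m^\star = (1-\eps)k$. The $\pm 1$ coming from rounding $m^\star$ to an integer only perturbs the relevant quantities by a $(1\pm O(\eps))$ factor, which we absorb by rescaling $\eps$ (and the regime $k = O(1/\eps)$, where the statement concerns only a handful of coordinates, can be handled directly).

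First I would invoke \Cref{thm:top_k} with parameter $m^\star$ in place of $k$. The hypothesis of the corollary, $a_{(1-\eps)k}^2 \ge (\eps/\log n)^c \cdot \norm{\bx_{-(1-\eps)k}}_2^2/((1-\eps)k)$, is exactly condition~\eqref{eq:1bb} evaluated at the index $m^\star$, so it is legitimate to apply \Cref{thm:top_k} there. Conditioning on the good event for the level-set estimates from \Cref{alg:levelSetEst} --- which, via the union bound in the proof of \Cref{thm:top_k}, holds with high constant probability, and which is precisely the event we condition on throughout this proof --- the value the algorithm reports for the top-$m^\star$ moment lies in $(1\pm\eps)\sum_{i=1}^{m^\star}|a_i|^p$, and in particular is at least $(1-\eps)\sum_{i=1}^{m^\star}|a_i|^p$.

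Next I would use the elementary fact that, since $|a_1|\ge|a_2|\ge\cdots$, the partial averages $\tfrac1m\sum_{i=1}^m|a_i|^p$ are non-increasing in $m$. Hence for $m^\star\le k$ we get $\sum_{i=1}^{m^\star}|a_i|^p \ge \tfrac{m^\star}{k}\sum_{i=1}^k|a_i|^p \ge \tfrac{m^\star}{k}\cdot k^{p+1} = m^\star k^{p}$, the middle step being the defining property of $k$. Since $k \ge m^\star/(1-\eps)$, i.e.\ $k^{p} \ge (m^\star)^{p}(1-\eps)^{-p}$, the reported value is at least $(1-\eps)\cdot m^\star\cdot(m^\star)^{p}(1-\eps)^{-p} = (1-\eps)^{1-p}(m^\star)^{p+1}$. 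Because $p\ge 1$, the exponent $1-p$ is nonpositive and $1-\eps<1$, so $(1-\eps)^{1-p}\ge 1$ and the reported value is at least $(m^\star)^{p+1}$, which is what we wanted.

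The only genuinely delicate point is the transfer in the first step: the notion of a ``contributing'' level set in \Cref{def:contribute} is relative to the target index, so moving the target from $k$ to $m^\star$ changes which level sets contribute. This is, however, just bookkeeping --- the estimates $\tilde s_j$ are produced once and for all, and the guarantees we actually invoke (\Cref{lem:sampling_rate} and \Cref{lem:nonS}) only need \eqref{eq:1bb} to hold at the target index, which the corollary's hypothesis supplies at $m^\star$. Everything else is the one-line monotonicity inequality together with the elementary bound $(1-\eps)^{1-p}\ge 1$ valid for $p\ge 1$.
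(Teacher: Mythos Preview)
Your proof is correct and follows essentially the same approach as the paper: both show that $m^\star=(1-\eps)k$ is feasible by combining the defining inequality $\sum_{i=1}^k|a_i|^p\ge k^{p+1}$ with the $(1-\eps)$ lower bound from \Cref{thm:top_k} at index $m^\star$, and then use $p\ge 1$ to close the gap (the paper writes this as $(1-\eps)^2 k^{p+1}\ge(1-\eps)^{p+1}k^{p+1}$, which is algebraically the same as your $(1-\eps)^{1-p}\ge 1$). Your explicit appeal to the monotonicity of partial averages and your remark about conditioning on a single good event for the level-set estimates are slightly more careful than the paper's presentation, but the argument is the same.
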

\begin{proof}
It is sufficient to show that the algorithm will find $(1-\eps) \cdot k$ frequencies such that their \emph{estimated} $F_p$ moment is at least $(1-\eps)^{p+1} \cdot k^{p+1}$. 

By definition, the top $k$ frequencies have a $F_p$ norm of at least $k^{p+1}$. So, the top $(1-\eps) \cdot k$ frequencies have a $F_p$ norm of at least $(1-\eps) \cdot k^{p+1}$.  
Since we have condition $ a_{(1 - \eps )k}^2 \geq \poly(\eps / \log n) \cdot \frac{\norm{\bx_{-(1 - \eps)k}}_2^2}{(1 - \eps) k}$, by \Cref{thm:top_k}, estimating the $F_p$ moment of these top $(1-\eps) \cdot k$ frequencies incurs at most $\eps \cdot \sum_{i=1}^{(1-\eps) \cdot k} |a_i|^p$ error. So, we have that the estimated $F_p$ norm of the top $(1-\eps) \cdot k$ frequencies is at least \[(1-\eps) \cdot (1-\eps) \cdot k^{p+1} \geq (1-\eps)^{p+1} \cdot k^{p+1}.\] Recall that we have $p \in [1,2]$. 
\end{proof}
Now we upper bound $\tilde{k}$. We first prove the following which is needed to upper bound $\tilde{k}$. 
\begin{lemma} \label{lem:noCon}
    If the condition $ a_j^2 \geq \poly(\eps / \log n) \cdot \frac{\|\bx_{-j}\|_2^2}{j}$ is not met, with high probability \Cref{alg:topk} only overestimates $\sum_{i=1}^j |a_i|^p$ by a $(1+\eps)$ multiplicative factor. 
\end{lemma}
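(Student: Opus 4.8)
The plan is to observe that the \emph{upper}-bound half of the analysis of \Cref{alg:topk} — the inequality showing that the output never exceeds $(1+O(\eps))\sum_{i=1}^{j}|a_i|^p$ — goes through verbatim \emph{without} using the hypothesis $a_j^2\ge\poly(\eps/\log n)\cdot\norm{\bx_{-j}}_2^2/j$. In the proof of \Cref{thm:top_k} that hypothesis enters only through \Cref{lem:condition} and \Cref{lem:sampling_rate}, where it is used solely to preclude severe \emph{under}-estimation of contributing level sets (so that the heavy-hitter structure detects every survivor); dropping it can only introduce additional \emph{under}-counting, never over-counting. So the proof reduces to re-deriving the per-level-set upper bound $\tilde{s}_j\le(1+O(\eps))s_j$ and checking that every step uses only Chernoff-type concentration and the random threshold $\zeta$, not the assumption on $a_j$.

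First I would show that, with high probability and for \emph{every} level set $S_j$ (not only the non-contributing ones, as in \Cref{lem:nonS}), the estimate produced by \Cref{alg:levelSetEst} satisfies $\tilde{s}_j\le(1+O(\eps))s_j$, irrespective of the hypothesis. Recall $\tilde{s}_j=z\cdot 2^{\ell}$, where $\ell$ is the \emph{largest} sampling level whose stored heavy hitters contain $z=\Theta(\log n/\eps^2)$ coordinates in the range of $S_j$. A one-sided Chernoff/binomial-tail bound — of exactly the kind used in \Cref{lem:sjapprox} and \Cref{lem:lowSL}, whose proofs invoke no property of $a_j$ — shows that whenever the number of true survivors of $S_j$ at sampling level $\ell$ reaches $z$, the expected number of survivors $s_j 2^{-\ell}$ is already $\Omega(\log n/\eps^2)$, and hence that number of survivors is at most $(1+\eps)\,s_j 2^{-\ell}$; since $z$ is at most the number of survivors, rearranging gives $z\cdot 2^{\ell}\le(1+O(\eps))s_j$. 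The only other way $\tilde{s}_j$ could be inflated is through \emph{spurious} coordinates: a coordinate whose heavy-hitter estimate lands in the range of $S_j$ although its true value does not. By the guarantee of the heavy-hitter structure such a coordinate has true value $\ge \tfrac{9}{10}\theta\norm{\hat{\bx}_{-B}}_2$ and lies within the estimation error $\frac{1}{\sqrt{B}}\norm{\hat{\bx}_{-B}}_2$ of a boundary of $S_j$; since all level-set boundaries are scaled by the single uniformly random $\zeta\in[1/2,1]$, a union bound over the $\poly(\log n/\eps)$ stored heavy hitters is meant to show that with probability $1-\poly(\eps/\log n)$ there are at most $\eps z$ spurious coordinates assigned to $S_j$, which changes $\tilde{s}_j$ by only a further $(1+O(\eps))$ factor. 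A union bound over the $O(\log m/\eps)$ level sets preserves all of this with high constant probability.

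Given $\tilde{s}_j\le(1+O(\eps))s_j$ for all $j$, the remainder is identical to the upper-bound computation in \Cref{thm:top_k}: \Cref{alg:topk} outputs a sum of $j$ terms, each the product of an estimated count and a level-set value raised to the $p$-th power, where (apart from the $\eps z$-fraction of spurious assignments already absorbed above) each level-set value agrees with the true value of every coordinate it is charged to within a $(1+\eps)$ factor. Since the counts are over-estimated by at most $(1+O(\eps))$ and consecutive level-set values differ by $(1+\eps)$, replacing the true top-$j$ multiset of coordinate values by the one the algorithm effectively uses inflates each summand by at most a $(1+\eps)^{O(1)}$ factor; folding these constants into $\eps$ (and using $p\le 2$) yields an output of at most $(1+\eps)\sum_{i=1}^{j}|a_i|^p$, which is the claim.

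The step I expect to be the main obstacle is precisely the spurious-coordinate bookkeeping in the regime the lemma concerns — when $\norm{\bx_{-j}}_2$ is large, so the heavy-hitter tail error $\frac{1}{\sqrt{B}}\norm{\hat{\bx}_{-B}}_2$ need not be negligible relative to a level-set width. One has to argue cleanly that over-counting can arise only from coordinates immediately adjacent to a level-set boundary (and that mis-attribution never moves a coordinate to a level set whose value differs by more than a $1+O(\eps)$ factor), and that the single shared random threshold $\zeta$ simultaneously controls, for \emph{all} level sets, the number of such boundary-adjacent coordinates relative to $s_j$. Everything else is a direct reuse of the concentration bounds already established for \Cref{thm:top_k}, together with the observation that none of them depends on the hypothesis on $a_j$.
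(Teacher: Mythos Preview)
Your proposal is correct and takes essentially the same approach as the paper: the paper's proof is a two-sentence recall that the per-level-set upper bound $\tilde{s}_j\le(1+\eps)s_j$ established in the proof of \Cref{thm:top_k} (via \Cref{lem:nonS} and the Chernoff bounds in \Cref{lem:sjapprox}, \Cref{lem:lowSL}) never invokes the hypothesis on $a_j$, and then plugs this into the same $(1+\eps)^2$ upper-bound computation you describe. Your write-up is considerably more explicit about the spurious-coordinate/boundary issue and the role of $\zeta$, which the paper treats as already absorbed into the earlier lemmas; but the logical skeleton is identical.
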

\begin{proof}
    Recall that in the proof of \Cref{thm:top_k}, we have the property that with high probability the estimation of $s_i$ for each level set is either a $(1 + \eps)$-approximation (when the condition for $a_j$ is met and the level set "contribute") or only an over-estimation by at most a $(1 + \eps)$-factor. This means that the output of the algorithm is at most
        \[ 
        (1 + \eps)^2 \sum_{i = 1}^{j} |a_{i}|^p = \left(1 + O(\eps)\right) \sum_{i = 1}^{j} |a_{i}|^p\;. \qedhere
        \]
\end{proof}
\begin{lemma} \label{lem:k2}
    $\tilde{k} \leq (1+\eps) \cdot (k+1).$
\end{lemma}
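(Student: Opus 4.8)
The plan is to argue by contradiction. If $\tilde k\le k+1$ there is nothing to prove, so I would assume $\tilde k\ge k+2$ and show that $\tilde k$ can exceed $k+1$ by at most a $(1+O(\eps))$ factor; the leftover $O(\eps)$ constant is then absorbed by running the whole procedure with $\eps$ replaced by a small enough constant multiple of itself. Throughout I would condition on the constant-probability event, from the proof of \Cref{thm:top_k}, that the Level-Set-Estimator produces good size estimates for \emph{all} level sets simultaneously; since \Cref{alg:thresAlg} reuses a single run of the Level-Set-Estimator for every candidate value, this makes the derived top-$j$ $F_p$ estimates valid for all $j$ at once, which matters because $\tilde k$ is itself a function of the algorithm's randomness.

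First I would extract two deterministic facts from the definition of $k$. Since $k$ is the \emph{largest} integer with $\sum_{i=1}^{k}|a_i|^p\ge k^{p+1}$, every $j>k$ satisfies $\sum_{i=1}^{j}|a_i|^p<j^{p+1}$; taking $j=k+1$ and using that each of the $k+1$ summands is at least $|a_{k+1}|^p$ gives $|a_{k+1}|^p<(k+1)^p$. Hence for every integer $j\ge k+1$,
\[
\sum_{i=1}^{j}|a_i|^p \;<\; (k+1)^{p+1}+(j-k-1)(k+1)^p.
\]
On the algorithmic side, combining \Cref{thm:top_k} (at levels where the residual-norm condition holds) with \Cref{lem:noCon} (at levels where it does not) shows that the estimated top-$j$ $F_p$ moment computed inside \Cref{alg:thresAlg} is at most $(1+\eps)^2\sum_{i=1}^{j}|a_i|^p$, simultaneously for all $j$ on the good event.

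Now I would combine these. By the definition of $\tilde k$, the estimate at $j=\tilde k$ is at least $\tilde k^{p+1}$, so $\tilde k^{p+1}<(1+\eps)^2\bigl[(k+1)^{p+1}+(\tilde k-k-1)(k+1)^p\bigr]$. Dividing by $(k+1)^{p+1}$ and setting $t=\tilde k/(k+1)>1$, the bracket becomes $1+(\tilde k-k-1)/(k+1)\le 1+(t-1)=t$, so $t^{p+1}<(1+\eps)^2 t$, i.e.\ $t^p<(1+\eps)^2$. Since $t\ge 1$ and $p\ge 1$ we have $t\le t^p$, hence $t<(1+\eps)^2<1+3\eps$, that is $\tilde k<(1+3\eps)(k+1)$. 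Replacing $\eps$ by $\eps/3$ throughout (a constant-factor change to the stated $\poly(\log n/\eps)$ space) yields $\tilde k\le(1+\eps)(k+1)$.

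The only real obstacle is the bookkeeping of multiplicative slack: one must check that the $(1+\eps)^2$ error in the top-$j$ estimate inflates $\tilde k$ by only a $(1+O(\eps))$ factor. The naive bound $t<(1+\eps)^{2/p}$ is nearly $(1+\eps)^2$ when $p$ is close to $1$, and it is precisely the monotonicity $t\le t^p$ for $t\ge1,p\ge1$ that keeps the loss linear in $\eps$ instead of costing a $1/p$ in the exponent. Everything else is routine, modulo the care noted above about invoking the estimate guarantee in its ``for all $j$'' form rather than pointwise at the random index $\tilde k$.
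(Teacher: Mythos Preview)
Your proof is correct and follows essentially the same route as the paper: both use that $|a_{k+1}|^p<(k+1)^p$ (so $\sum_{i\le j}|a_i|^p< c(k+1)^{p+1}$ for $j=c(k+1)$), combine it with the over-estimate bound from \Cref{lem:noCon}, and then use $t\le t^p$ for $t\ge1,\ p\ge1$ to turn $c^p>(1+\eps)$ into $c>(1+\eps)$. Your write-up is slightly more careful than the paper's—tracking the $(1+\eps)^2$ slack explicitly and noting that the ``for all $j$'' guarantee is needed since $\tilde k$ is random—but the argument is the same.
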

\begin{proof}
The algorithm finds the largest $\tilde{k}$ such that the \emph{estimated} $F_p$ norm of the top $\tilde{k}$ frequencies is at least $\tilde{k}^{p+1}$. Therefore, we will argue that for every $k' > (1+\eps) (k+1)$, the estimated $F_p$ norm of the top $k'$ frequencies is (strictly) less than $(k')^{p+1}$. 

By definition, the top $(k + 1)$ frequencies have a $F_p$ norm strictly less than $(k+1)^{p+1}$. Otherwise, this contradicts that $k$ is the correct answer. Therefore, the $F_p$ norm of the top $c \cdot (k+1)$ frequencies for some $c$ is at most $c \cdot (k+1)^{p+1}$. 

From Lemma~\ref{lem:noCon} we know that in either case $\sum_{i=1}^{j} |a_i|^p$ is overestimated by at most a $(1+\eps)$ factor for $j = c\cdot(k + 1)$. So we have that the estimated $F_p$ norm of the top $c \cdot (k+1)$ frequencies is less than $(1+\eps) \cdot c \cdot (k+1)^{p+1} \leq c^{p+1}\cdot k^{p+1}$ given that $c > (1+\eps)$. 
\end{proof}
Combining Lemma~\ref{lem:k1} and Lemma~\ref{lem:k2} we get the correctness of \Cref{thm:sumAboveT}. 
\end{proof}

\subsubsection{Extension of the $h$-index and $a$-index}

\begin{corollary}
    \label{thm:generalIndex}
For input $\bx \in \Z^n$, take $k$ to be the largest integer such that $|a_k| \geq k$. Given $\bx \in \Z^n$, $p \in [0,2]$ and $\eps \in (0,1)$, if we have 
$a_{(1 + \eps)k }^2 \geq (\eps / \log n)^c \cdot \frac{\norm{\bx_{-(1 + \eps )k}}_2^2}{(1 + \eps)   k}$
for some constant $c$, then there exists a linear sketch that uses $\poly(\log n / \eps)$ bits of space and estimates $\sum_{i=1}^k |a_i|^p$ up to a $(1 \pm \eps)$ multiplicative factor with high constant probability. 
\end{corollary}

 Here we prove \Cref{thm:generalIndex} with our algorithm \hyperref[alg:indexAlg]{Index-Norm-Estimator} (\Cref{alg:indexAlg}). We run \hyperref[alg:levelSetEst]{Level-Set-Estimator} and then use that to estimate the value of $k$. Then we return the $F_p$ norm of the top $k'$ frequencies. 

 \begin{algorithm}[t]
\caption{Index-Norm-Estimator ($\eps \in (0, 1]$, $p \in (0, 2]$)}
\label{alg:indexAlg}
\begin{algorithmic}[1] 
\STATE Run \hyperref[alg:levelSetEst]{Level-Set-Estimator ($\eps/10$)}.
\STATE Take $f$ to be the vector where the first $\tilde{s}_0$ elements are $ \zeta(1+\eps)^{t}$, the next $\tilde{s}_1$ elements are  $ \zeta^{(1+\eps)^{(t-1)}}$, the next $\tilde{s}_2$ elements are $ \zeta(1+\eps)^{(t-2)}$, and so on. 
\STATE Find the largest $\tilde{k}$ such that $f_{\tilde{k}} \geq \tilde{k}$.
\STATE \textbf{Return} $\sum_{i=1}^{\tilde{k}} f_i^p$.
\end{algorithmic}
\end{algorithm}

\begin{proof} [Proof of~\Cref{thm:generalIndex}]
Let $k$ denote the largest integer such that $|a_k| \geq k$. 

\begin{lemma}
    With high probability, we have $(1 - \eps / 2 ) \cdot k \le k' \le (1 + \eps/4) \cdot k$
\end{lemma}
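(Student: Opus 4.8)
The plan is to prove the two one‑sided bounds on $k'$ separately, where $k'$ denotes the index $\tilde k$ located in Line~3 of \Cref{alg:indexAlg}; throughout write $\eps' = \eps/10$ (the parameter fed to the level‑set estimator) and let $c_j = \zeta(1+\eps')^{t-j}$ be the rounding value assigned to level set $S_j$, so $c_0 > c_1 > \cdots$. Read from its largest entry, the vector $f$ consists of $\tilde s_0$ copies of $c_0$, then $\tilde s_1$ copies of $c_1$, and so on; in particular $f$ is non‑increasing, so $\{m : f_m \ge m\}$ is a prefix $\{1,\dots,k'\}$ and $k' = \max\{m : f_m \ge m\}$. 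I condition on the high‑constant‑probability event that the level‑set estimates behave as in \Cref{sec:topk}: by \Cref{lem:nonS} every level set has $0 \le \tilde s_j \le (1+\eps')s_j$, and by \Cref{lem:sampling_rate}, applied with the role of ``$k$'' played by $(1+\eps)k$ (which is exactly what the hypothesis $a_{(1+\eps)k}^2 \ge (\eps/\log n)^c\|\bx_{-(1+\eps)k}\|_2^2/((1+\eps)k)$ licenses), every contributing level set with $c_j \ge |a_{(1+\eps)k}|$ has $\tilde s_j \in [(1-\eps')s_j,(1+\eps')s_j]$. A union bound over the $O(\eps^{-1}\log m)$ level sets keeps all of this simultaneously valid.

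For the upper bound $k' \le (1+\eps/4)k$: by maximality of $k$ we have $|a_{k+1}| < k+1$, so at most $k$ coordinates of $\ba$ have value $\ge k+1$. Any coordinate that $f$ places at a position of value $\ge (1+\eps')^2(k+1)$ lies in a level set $S_j$ with $c_j \ge (1+\eps')^2(k+1)$, and such a level set only contains coordinates of true value $\ge c_j/(1+\eps') \ge (1+\eps')(k+1) \ge k+1$; hence the number of such positions is $\sum_{j : c_j \ge (1+\eps')^2(k+1)}\tilde s_j \le (1+\eps')\sum_{j : c_j \ge (1+\eps')^2(k+1)} s_j \le (1+\eps')k$. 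Therefore, for every $m > (1+\eps/4)k$ we have $m > (1+\eps')k$, so $f_m < (1+\eps')^2(k+1)$, and once $k = \Omega(1/\eps)$ one checks $(1+\eps')^2(k+1) \le (1+\eps/4)k < m$, whence $f_m < m$ and $k' \le (1+\eps/4)k$.

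For the lower bound $k' \ge (1-\eps/2)k$: the top $k$ coordinates of $\ba$ all have value $\ge |a_k| \ge k$, so each lands (after rounding up) in a level set with $c_j > |a_k| \ge |a_{(1+\eps)k}|$, and by the conditioning each such level set that contributes is estimated to a $(1\pm\eps')$ factor. If the number of top‑$k$ coordinates sitting in \emph{non}‑contributing level sets is $O(\eps k)$, then $f$ has at least $(1-\eps')\bigl(k - O(\eps k)\bigr) \ge (1-\eps/2)k$ entries of value $> k \ge (1-\eps/2)k$, and hence $k' \ge (1-\eps/2)k$ by monotonicity of $f$.

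The crux is the count claim in the last step: few top‑$k$ coordinates lie in non‑contributing level sets, and since there are only $O(\eps^{-1}\log m)$ level sets, if each carries $\le \eps k$ of them the total displacement of the rank is $O(\eps k)$. For $p = 0$ this is immediate and yields the $h$‑index case cleanly: $\sum_{i=1}^k |a_i|^0 = k$ since each top‑$k$ coordinate is nonzero, so a non‑contributing level set has fewer than $\eps^2 k/\log m$ coordinates, and summing over the $O(\eps^{-1}\log m)$ level sets bounds the total by $O(\eps k)$. For general $p \in (0,2]$ the ``contribute'' threshold is $\frac{\eps^2}{\log m}\sum_{i=1}^k|a_i|^p$, which can far exceed $\frac{\eps^2}{\log m}k^{p+1}$ when the top‑$k$ $F_p$‑mass is concentrated in a few very large coordinates; then a level set straddling rank $k$ may fail to contribute while still holding many coordinates. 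This is the step I expect to require the most care: I would localize the comparison, replacing $\sum_{i=1}^k|a_i|^p$ by $\sum_{i=k}^{(1+\eps)k}|a_i|^p + k|a_{(1+\eps)k}|^p$, and then re‑run the estimate of \Cref{lem:condition} — decomposing $\bx_{-\rank(v)}$ into dyadic level sets below $v$ — to show that under the hypothesis any such straddling level set still satisfies $c_j^2 s_j \ge \poly(\eps/\log n)\|\bx_{-\rank(c_j)}\|_2^2$ and is therefore detected and estimated well. Finally, the small‑$k$ regime $k = O(\poly(\log n/\eps))$ is dispatched separately: there $(1+\eps)k$ is at most the number $B$ of buckets used by the heavy‑hitter structure, so under the hypothesis every coordinate of rank $\le (1+\eps)k$ is an $\ell_2$ heavy hitter of the whole stream, the level‑set estimator reads off each relevant $\tilde s_j$ exactly from $\Lambda_0$, and both bounds follow at once.
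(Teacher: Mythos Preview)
Your upper bound argument is essentially the paper's. The difference is in the lower bound.

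For the lower bound the paper does \emph{not} go through the $F_p$-contribution criterion at all; it invokes \Cref{lem:non-contribute_ub} instead. The observation there is purely $\ell_2$: for any level set $S_j$ lying among the top-$k$ levels with $s_j \ge \Omega(k\eps^2/\log n)$, consider the substream with sampling rate $r = \Theta\bigl(\tfrac{1}{k}\cdot\poly(\log n/\eps)\bigr)$. By Chernoff there are $\Theta(\log n/\eps^2)$ survivors of $S_j$, and the Count-Sketch tail error in that substream is $\poly(\eps/\log n)\cdot\|\bx_{-k}\|_2/\sqrt{k}$. The hypothesis $a_{(1+\eps)k}^2 \ge (\eps/\log n)^c\|\bx_{-(1+\eps)k}\|_2^2/((1+\eps)k)$ (which implies the corresponding bound at $k$) then forces this tail error below $|a_k|$, so every survivor is detected and $\tilde s_j = (1\pm\eps')s_j$. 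Level sets with $s_j < O(k\eps^2/\log n)$ contribute at most $O(\eps^{-1}\log n)\cdot O(k\eps^2/\log n) = O(\eps k)$ coordinates in total. This works uniformly in $p$ and never touches $\sum_{i=1}^k|a_i|^p$.

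Your plan---split off $p=0$, then for $p>0$ ``localize'' the $F_p$-mass to $\sum_{i=k}^{(1+\eps)k}|a_i|^p + k|a_{(1+\eps)k}|^p$ and rerun the dyadic decomposition of \Cref{lem:condition}---is aimed at the same conclusion (that the straddling level set is detected), but it is a longer route and you have not actually carried it out. The concentration obstacle you flag is real for the $F_p$-contribute criterion, but the paper simply avoids that criterion: the size threshold $\Omega(k\eps^2/\log n)$ combined with the $\ell_2$ hypothesis on $a_k$ is already sufficient for detection. Also note a slip in your accounting: ``$O(\eps^{-1}\log m)$ level sets, each carrying $\le \eps k$'' gives $O(k\log m)$, not $O(\eps k)$; you need the per-set bound $O(k\eps^2/\log m)$, which is exactly what the paper's argument supplies.
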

\begin{proof}
We first show that with high probability $\tilde{k} \geq (1-\eps/2) \cdot k$. By definition, there are at least $k$ frequencies each having frequencies at least $k$. From Lemma~\ref{lem:sampling_rate} we know that with high probability for each contributing $S_j$, we have $\tilde{s}_j \geq (1-\eps/10) \cdot s_j$. 
From Lemma~\ref{lem:non-contribute_ub} we can get the total number of coordinates in a non-contribute level set is at most $\eps / 10$. Put these two things together, we can get that with high probability, the level set estimator will find at least $(1 - \eps/2)$ coordinates having frequencies at least $(1 - \eps/2) \cdot k$. This means that $k' \ge (1 - \eps / 2) \cdot k$.

We next show that with high probability $k' \le (1 + \eps / 4) \cdot k$. By definition, there are at most $k$ frequencies each having value strictly greater than $k$. From Lemma~\ref{lem:sampling_rate} and Lemma~\ref{lem:nonS} we know that with high probability for each contributing $S_j$, we have $\tilde{s}_j \le (1 + \eps/10) \cdot s_j$. 
This means that with high probability, the level set estimator can find at most $(1 + \eps / 4)$ coordinates having frequencies at most $(1 + \eps / 4) \cdot k$, which implies $k' \le (1 + \eps / 4) \cdot k$.
\end{proof}

After obtaining the lower bound and upper bound $k'$, recall that the output of \Cref{alg:indexAlg} is the estimation of $F_p$ moment the top-$k'$ frequencies. Given $(1 - \eps / 2) \cdot k \le k' \le (1 + \eps / 4) \cdot k$, from \Cref{thm:top_k} we have that with high probability, the output $Z$ of \Cref{alg:indexAlg} satisfies
\[
(1 - \eps / 2) (1 - \eps / 2) \left( \sum_{i = 1}^{k} |a_i|^p \right)\le  Z \le (1 + \eps / 4) (1 + \eps / 2) \left( \sum_{i = 1}^{k} |a_i|^p \right) \;.
\]
This implies $Z$ is an approximation to $\sum_{i = 1}^{k} |a_i|^p $ within error at most $\eps \cdot \left( \sum_{i = 1}^{k} |a_i|^p \right)$
\end{proof}

\section{Lower Bounds} \label{sec:LBs}

\subsection{Hardness of $F_p$ of Top-$k$} \label{sec:hardness}
\subsubsection{Main bound}
We show that when the condition
$a_k^2 \ge \poly(\eps/\log n) \cdot \frac{\norm{\bx_{-k}}_2^2}{k}$
does not hold, then $n^{O(1)}$ space is required. Formally, we have the following.

\begin{lemma}
\label{lem:top_k_lower_bound_new}
    Suppose that $a_k^2 \le c\cdot \frac{k}{n} \cdot \frac{\norm{\bx_{-k}}_2^2}{k}$ 
    for some constant $c$. Assume $k \le 0.1 n$ and $\eps \in (1/\sqrt{k},1]$. Then, any $O(1)$-pass streaming algorithm that outputs a $(1 + \eps)$-approximation of $\sum_{i = 1}^k |a_i|^p$ with high constant probability requires $\Omega(\eps^{-2}n/k)$ bits of space.
\end{lemma}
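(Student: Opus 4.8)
The plan is to prove the lower bound by a reduction from a multi-party (or one-way two-party) communication problem, exploiting the fact that when $a_k^2 \le \frac{1}{n}\norm{\bx_{-k}}_2^2$ the top-$k$ coordinates are individually of the same scale as a typical tail coordinate, so that a $(1\pm\eps)$-estimate of $\sum_{i=1}^{k}|a_i|^p$ effectively requires counting how many coordinates lie slightly above versus slightly below the threshold $|a_k|$ — a task that encodes a GAP-HAMMING-DISTANCE-type problem. Concretely, I would set up an instance on a universe of size $n$ in which there are $\Theta(n)$ "tail" coordinates all of roughly equal magnitude, forming the bulk of $\norm{\bx_{-k}}_2^2$, and a block of $\Theta(k)$ coordinates near the threshold whose exact count above the threshold is what the estimator must resolve. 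The multiplicative $(1\pm\eps)$ requirement on a sum of $\Theta(k)$ nearly-equal terms translates into resolving that count to additive error $\Theta(\eps k)$, i.e. distinguishing a $\sqrt{k}$-scale Hamming-type gap, which is the classical regime giving an $\Omega(\eps^{-2}k)$ bound; the extra $n/k$ factor will come from replicating this hard instance $\Theta(n/k)$ times in disjoint blocks (a direct-sum / augmentation argument), since one estimator call must succeed on all blocks simultaneously, and the condition $a_k^2 \le \frac{1}{n}\norm{\bx_{-k}}_2^2$ is exactly what permits the tail mass to be spread so that the $\Theta(n/k)$ copies coexist without any single block dominating.

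First I would fix the hard distribution: partition $[n]$ into $B = \Theta(n/k)$ blocks, each of size $\Theta(k)$, and within block $b$ plant a GAP-HAMMING instance $(u^{(b)}, v^{(b)}) \in \{0,1\}^{\Theta(k)}$ held by Alice and Bob (or two sides of a stream), scaled so that the frequencies land in a narrow window around a common value $\tau$; additionally add a common heavy background of $\Theta(n)$ coordinates of value $\Theta(\tau / \poly)$ chosen so that $\norm{\bx_{-k}}_2^2 = \Theta(n\tau^2)$ and hence $a_k^2 = \tau^2 \le \frac{1}{n}\norm{\bx_{-k}}_2^2$, matching the hypothesis (and verifying $k \le 0.1n$, $\eps > 1/\sqrt{k}$ are consistent with the chosen scales). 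Second, I would argue that an estimator achieving $(1\pm\eps)$ on $\sum_{i=1}^{k}|a_i|^p$ must, because all top-$k$ values lie within the narrow window, determine $\sum_b (\text{number of threshold-exceeding coordinates in block } b)$ to additive error $O(\eps k)$ — actually we need per-block resolution, so I would instead embed a single target block's instance and use the other $B-1$ blocks as "noise" whose contribution the protocol must subtract, invoking an information-cost / augmented-indexing style argument so that the streaming algorithm's memory must encode $\Omega(\eps^{-2}k)$ bits about each of the $\Omega(n/k)$ blocks, yielding $\Omega(\eps^{-2} n/k \cdot k / k) = \Omega(\eps^{-2} n/k)$ total — here I would lean on the known $\Omega(\eps^{-2})$-type lower bound for $(1+\eps)$-approximating such counting/$F_p$ quantities (as in the GAP-HAMMING lower bounds of Chakrabarti–Regev / Jayram–Kumar–Sivakumar, or the $2$-SUM variant of~\cite{CLLTWZ2024tight} cited in the technical overview) as the single-block primitive. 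Third, I would convert the communication lower bound to a streaming-space lower bound in the standard way: a space-$S$, $O(1)$-pass algorithm yields an $O(S)$-bit (times passes) communication protocol, so $S = \Omega(\eps^{-2} n/k)$.

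The main obstacle I expect is calibrating the hard instance so that three constraints hold simultaneously: (i) the condition $a_k^2 \le \frac1n\norm{\bx_{-k}}_2^2$ — which is a genuine upper bound on how "spiky" the top-$k$ part may be — forces the $\Theta(k)$ near-threshold coordinates to be nearly indistinguishable from the tail, so that resolving their count really is necessary; (ii) the background/tail mass can be made $common$ to both players (or cheaply communicable) so that it does not itself cost communication, yet still dominates $\norm{\bx_{-k}}_2^2$; and (iii) the $\Theta(n/k)$ blocks genuinely compose via a direct-sum argument, which requires that the estimator cannot "cheat" by reading off aggregate statistics — this is where an augmented-indexing wrapper around the GAP-HAMMING primitive, rather than a naive direct sum, will likely be needed. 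A secondary subtlety is handling general $p \ge 0$ uniformly: since all the relevant coordinates are within a $(1+o(\eps))$ factor of $\tau$, $\sum_{i=1}^k |a_i|^p = (1\pm o(\eps))\, \tau^p \cdot (\text{count})$ for every $p$, so the reduction is essentially $p$-independent, but I would need to state the window width carefully (width $o(\eps/p)$ or just $o(\eps)$ with $p=O(1)$ handled, and note the problem as stated has $p \ge 0$ fixed) to be safe.
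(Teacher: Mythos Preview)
Your high-level intuition --- that the estimator must resolve a count of ``near-threshold'' coordinates to additive error $\Theta(\eps k)$ --- is right, but the block-decomposition plus augmented-indexing route has a structural gap. Note that the hypothesis $a_k^2 \le \frac{1}{n}\norm{\bx_{-k}}_2^2$, combined with the trivial bound $\norm{\bx_{-k}}_2^2 \le (n-k)\,a_k^2$, forces essentially \emph{all} $n$ coordinates to lie within a constant factor of $a_k$. Hence you cannot place $B=\Theta(n/k)$ blocks at well-separated scales so that cancelling blocks $1,\dots,b-1$ leaves block $b$ as the top-$k$; the hard instance is intrinsically flat. And a single call to a top-$k$ estimator returns one scalar, so an augmented-indexing wrapper that needs per-block resolution is not available: you only ever learn an aggregate across blocks, which is exactly the obstruction you flagged and then tried to route around. (Your arithmetic ``$\Omega(\eps^{-2} n/k \cdot k/k)$'' also signals that the per-block and global counts were not lining up.)

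The paper sidesteps all of this by reducing directly to the $\textsc{2-SUM}(t,L,\alpha)$ problem of~\cite{CLLTWZ2024tight} with $t=\eps^{-2}$, $L=\eps^2 n$, $\alpha=\eps^2 k$, whose lower bound $\Omega(tL/\alpha)=\Omega(\eps^{-2}n/k)$ already packages the direct-sum structure. Alice and Bob concatenate their $t$ strings into $\bX,\bY\in\{0,1\}^n$, set $x_i=\tfrac{k}{2}\,X_i$, $y_i=\tfrac{k}{2}\,Y_i$, and stream $\bA=\bx+\by\in\{0,\tfrac{k}{2},k\}^n$. By the input promise, $\Theta(n)$ entries equal $\tfrac{k}{2}$ and exactly $ck$ entries equal $k$, where $c\in[\tfrac12,1]$ is the fraction of non-disjoint pairs; thus $a_k\in\{\tfrac{k}{2},k\}$, $\norm{\bA_{-k}}_2^2=\Theta(nk^2)$, and the hypothesis holds. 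The top-$k$ $F_p$ moment equals $ck\cdot k^p+(1-c)k\cdot(k/2)^p$, so a $(1\pm\eps)$ estimate recovers $c$ to additive $O(\eps)$, i.e.\ recovers $\sum_i \textsc{DISJ}(\bX^i,\bY^i)$ to additive $O(\eps^{-1})=O(\sqrt{t})$, solving $\textsc{2-SUM}$. The takeaway is that the single scalar the estimator outputs \emph{is} the $\textsc{2-SUM}$ answer; you listed $\textsc{2-SUM}$ as a candidate per-block primitive, but it should be the entire reduction, not an ingredient inside a direct-sum.
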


We will consider the following variant of the 2-SUM problem in~\cite{CLLTWZ2024tight}.

\begin{definition}
    For binary strings $\bx = (x_1,\dots, x_L) \in \{0, 1\}^L$ and $\by = (y_1, \dots, y_L) \in \{0, 1\}^L$, define $\textsc{INT}(\bx, \by) = \sum_{i=1}^L x_i \wedge y_i$ which is the number of indices $i$ where both $x_i$ and $y_i$ are $1$ and define $\textsc{DISJ}(\bx, \by)$ to be $1$ if $\textsc{INT}(\bx, \by) = 0$ and $0$ otherwise.
\end{definition}

\begin{definition}[\cite{WZ2014optimal, CLLTWZ2024tight}]
    Suppose Alice has $t$ binary strings $(\bX^1, \dots, \bX^t)$ where each string $\bX^i \in \{0, 1\}^L$ has length $L$. Likewise, Bob has $t$ strings $(\bY^1, \dots, \bY^t)$ each of length $L$. $\textsc{INT}(\bX^i, \bY^i)$ is guaranteed to be either $0$ or $\alpha \geq 1$ for each pair of strings $(\bX^i, \bY^i).$ Furthermore, at least $1/2$ of the $(\bX^i, \bY^i)$ pairs are guaranteed to satisfy $\textsc{INT}(\bX^i, \bY^i) = \alpha.$ In the $\textsc{2-SUM}(t, L, \alpha)$ problem, Alice and Bob attempt to approximate $\sum_{i \in [t]} \textsc{DISJ}(\bX^i, \bY^i)$ up to an additive error of $\sqrt{t}$ with constant probability.
\end{definition}

\begin{lemma}[\cite{CLLTWZ2024tight}] 
\label{lem:2sum}
 To solve $\textsc{2-SUM}(t, L, \alpha)$ with constant probability,
    the expected number of bits Alice and Bob need to communicate is $\Omega(tL/\alpha)$.    
\end{lemma}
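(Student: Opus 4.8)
The plan is to prove the $\Omega(tL/\alpha)$ communication lower bound via the information-complexity and direct-sum framework. There are two ingredients: (i) a lower bound of $\Omega(L/\alpha)$ on the cost of a single gap-intersection instance, proved at the level of information complexity under a hard distribution, and (ii) a direct-sum argument showing that estimating the sum of the $t$ disjointness indicators to additive error $\sqrt{t}$ forces the protocol to be ``charged'' for essentially all $t$ instances.

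For the single-instance bound, consider the promise problem $\Pi_{L,\alpha}$: Alice holds $\bX \in \{0,1\}^L$, Bob holds $\bY \in \{0,1\}^L$, with the promise $\textsc{INT}(\bX,\bY) \in \{0,\alpha\}$, and they must output $\textsc{DISJ}(\bX,\bY)$. I would reduce the standard two-party set-disjointness problem $\textsc{DISJ}_m$ on universe $[m]$ with $m = \lfloor L/\alpha \rfloor$ (promise: intersection $0$ or exactly $1$) to $\Pi_{L,\alpha}$ by the $\alpha$-fold replication map: given $(\mathbf{u},\mathbf{v}) \in \{0,1\}^m \times \{0,1\}^m$, set $X_{(j-1)\alpha+\ell} = u_j$ and $Y_{(j-1)\alpha+\ell} = v_j$ for all $j \in [m]$ and $\ell \in [\alpha]$. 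Then $\textsc{INT}(\bX,\bY) = \alpha \cdot \textsc{INT}(\mathbf{u},\mathbf{v}) \in \{0,\alpha\}$ and $\textsc{DISJ}(\bX,\bY) = \textsc{DISJ}(\mathbf{u},\mathbf{v})$, so the classical $\Omega(m) = \Omega(L/\alpha)$ lower bound for disjointness (Kalyanasundaram--Schnitger, Razborov) transfers. Crucially, I would carry this out at the level of \emph{information} complexity under Razborov's hard distribution for unique disjointness (following Bar-Yossef--Jayram--Kumar--Sivakumar), obtaining a distribution $\mu$ on $\Pi_{L,\alpha}$-inputs with $\mathrm{IC}_\mu(\Pi_{L,\alpha}) = \Omega(L/\alpha)$ even for protocols that are merely weakly correct; this robustness is what lets the subsequent direct sum go through.

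For the direct sum, I would place the product distribution $\mu^{\otimes t}$ on the $t$ pairs $(\bX^i,\bY^i)$, so that each disjointness indicator $z_i = \textsc{DISJ}(\bX^i,\bY^i)$ is an independent fair coin and the target quantity $\sum_{i} z_i$ concentrates around $t/2$ with fluctuation $\Theta(\sqrt{t})$. Since an estimate accurate to $\pm\sqrt{t}$ is only barely better than a blind guess, a naive ``fix the other $t-1$ coordinates from public randomness'' embedding is far too lossy (it would only force $O(1)$ sub-instances to be informative); instead I would invoke a direct-sum/amplification theorem for estimating sums, in the style of the Gap-Hamming-Distance-based counting lower bounds of Woodruff--Zhang, which shows that a protocol whose error is $\pm\sqrt{t}$ must make its transcript carry $\Omega(1)$ bits of information about $z_i$ simultaneously for a constant fraction of the coordinates $i$. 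Summing over these coordinates and applying the single-instance information bound yields $\mathrm{IC}_{\mu^{\otimes t}}(\textsc{2-SUM}) \geq \Omega(t) \cdot \mathrm{IC}_\mu(\Pi_{L,\alpha}) = \Omega(tL/\alpha)$, and since expected communication dominates information cost, this gives the claimed $\Omega(tL/\alpha)$ bound. The main obstacle is exactly this last accounting step: arguing that the $\sqrt{t}$ additive-error requirement genuinely forces $\Omega(t)$ of the sub-instances to be informative rather than a bounded number of them, which is where one must exploit the combinatorial structure of the counting problem (e.g.\ via a reduction from Gap-Hamming-Distance on $t$ coordinates, each gadgetized by an independent copy of $\Pi_{L,\alpha}$) rather than a black-box single-coordinate embedding.
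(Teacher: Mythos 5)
This lemma is cited from \cite{CLLTWZ2024tight} (building on \cite{WZ2014optimal}) and is not proved in the paper, so there is no in-paper proof to compare against; the relevant comparison is to the cited source. Your high-level strategy is the right one and matches what one would expect: an information-complexity lower bound for the single gap-intersection instance via $\alpha$-fold replication of unique set-disjointness (giving $\Omega(L/\alpha)$), followed by a direct-sum argument tailored to the additive-$\sqrt{t}$ counting error. The replication reduction you describe is clean and correct as far as it goes.

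However, the proposal has a genuine gap, and you flag it yourself: the step ``arguing that the $\sqrt{t}$ additive-error requirement genuinely forces $\Omega(t)$ of the sub-instances to be informative'' is exactly where the theorem's content lives, and you do not supply it --- you only name the style of argument (a Gap-Hamming-Distance-type amplification over the $t$ coordinates) without carrying it out. Without that step, the argument delivers only $\Omega(L/\alpha)$, not $\Omega(tL/\alpha)$. The reason this cannot be treated as routine is precisely the obstruction you identify: a protocol with small total communication need not resolve any individual $z_i$, and a single-coordinate embedding (fix the other $t-1$ by public coins, solve coordinate $i$) fails because a $\pm\sqrt{t}$ estimate of the sum provides only $O(1/\sqrt{t})$ advantage per coordinate. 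One must instead show, roughly, that the transcript's mutual information is superadditive across coordinates and that near-optimal estimation of $\sum_i z_i$ to $\pm\sqrt{t}$ forces $\Omega(1)$ bits of information about a constant fraction of the $(\bX^i,\bY^i)$ simultaneously; this requires working with a specific hard distribution and a Hellinger- or KL-based accounting, not a black-box invocation.

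A secondary issue: the product distribution you posit (each $z_i$ an independent fair coin) does not always satisfy the 2-SUM promise that at least half the pairs have $\textsc{INT}(\bX^i,\bY^i)=\alpha$, and it also lacks the per-coordinate structure the paper later relies on (the remark following the lemma uses the property of the \cite{WZ2014optimal} distribution that $\Pr[X^i_j + Y^i_j \ge 1] = \Theta(1)$ for every $i,j$). To make the direct-sum step go through and to support the downstream remark, one needs to use the carefully constructed hard distribution from \cite{WZ2014optimal}, not an arbitrary product of fair coins on the disjoint/non-disjoint indicator.
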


We remark that the construction of the input distribution in~\cite{WZ2014optimal} for this problem has the property such that for every pair $(X^i_j,  Y^i_j)$, the probability of $(X^i_j + Y^i_j) \ge 1$ is $\Theta(1)$. Hence, we can also add the promise that there is at most a constant fraction of $(X^i_j,  Y^i_j)$ over $i, j$ having $(X^i_j + Y^i_j) = 0$. We will use this in our proof. 

We are now ready to prove our Lemma~\ref{lem:top_k_lower_bound_new}. At a high level, we will show a reduction from the estimation of $\sum_{i = 1}^{k} |a_i|^p$ to the 2-SUM problem and derive a $O(\eps^{-2} n/k)$ lower bound.

\begin{proof}[Proof of \Cref{lem:top_k_lower_bound_new}]
    Without loss of generality, we assume $\eps^2 k$ is an integer. Consider the $2$-SUM$(\eps^{-2}, \eps^{2}n, \eps^2 
k)$ problem with the input instance be $(\bX^{1}, \bX^{2}$ $, \ldots, \bX^{\eps^{-2}})$ given to Alice and $(\bY^{1}, \bY^{2}, \ldots, \bY^{\eps^{-2}})$ given to Bob. Let $\bX, \bY \in \{0, 1\}^n$ be the concatenation of the $\bX^i$'s and $\bY^i$'s respectively. We construct the input array $\bA = \bx + \by$ for our top-$k$ sum problem as follows. For $\bx, \by \in \{0, k/2\}^n$, we let $x_i = k/2$ if and only if $X_i = 1$, and $y_i = k/2$ if and only if $Y_i = 1$. 

We next consider the entry of the array $\bA = \bx + \by$. It is clear that the coordinates of $\bA$ are $0, k/2,$ or $k$. Let $c$ be the fraction of the $\bX^i, \bY^i$ such that $\textsc{INT}(\bX^i, \bY^i) = \alpha =  \eps^2 
k$. Recall that from the assumption on the input, we know that $\frac{1}{2} \le c \le 1$. We first consider the number of coordinates of $\bA$ that are equal to $k$. From the definition of $\bA$ we get that it is equal to $c \cdot \eps^{-2} \cdot \eps^2 
k = ck$. Moreover, from the assumption of the input we have all of the top $k$ entries have a value of either $k/2$ or $k$.
Let $\ba$ be an array with the decreasing order of the array $\bA = \bx + \by$. Then we have 

\[
a_k^2 < c \cdot \frac{k}{n} \cdot \frac{\norm{\bA_{-k}}_2^2}{k} \;. 
\]

The reduction is given as follows. Suppose that there is a $q = O(1)$ pass streaming algorithm $\mathcal{A}$ which gives a $(1 \pm \eps)$-approximation to the top-$k$ sum of the input array. 

\begin{enumerate}
        \item Given Alice's strings $(\bX^1, \ldots, \bX^{\eps^{-2}})$ each of length $\eps^2 n$, let $\bX$ be the concatenation of Alice's strings having total length $\eps^{-2} (\eps^2 n) = n$. Similarly let $\bY \in \{0, 1\}^n$ be the concatenation of Bob's strings.

        \item Alice constructs the vector $\bx$ as defined above and performs the updates to $\mathcal{A}$ based on $\bx$. Then Alice sends the memory of the algorithm to Bob.

        \item Bob constructs the vector $\by$ as defined above and performs the updates to $\mathcal{A}$ based on $\by$. 

        \item If the algorithm $\mathcal{A}$ is a $q \ge 2$-pass algorithm, repeat the above process $q$ times.
        \item Run $\mathcal{A}(\bx + \by)$ and compute the solution $\textsc{2-SUM}(\eps^{-2}, \eps^2 n, \eps^2 k)$ based on $\mathcal{A}(\bx + \by)$.
    \end{enumerate}

To show the correctness of the above algorithm, recall that the output of $\mathcal{A}(\bx + \by)$ is $(1 \pm \eps) \sum_{i = 1}^k |a_i|^p$ where $\ba$ is the array $\bA = \bx+ \by$ in non-increasing order. 
Note that we have shown that there are $ck$ entries among the top $k$ entries of $\bA$ having value $k$ as well as the top $k$ entries of $\bA$ having a value of either $k$ or $k/2$. This implies from the output of $\mathcal{A}(\bx + \by)$ we can get a $(1 \pm \eps)$-approximation of $c$, which yields a approximation of $\sum_{i \in [t]} \textsc{DISJ}(\bX^i, \bY^i)$ with additive error $\eps^{-1}$ (as there are a total number  of $\eps^{-2}$ of the pair $\bX^i, \bY^i$).

\end{proof}

\begin{corollary}
Suppose that $a_k^2 \le c' \cdot \frac{k}{n^c} \cdot  \frac{\norm{\bx_{-k}}_2^2}{k}$  
    for some constant $c', c \in (0, 1)$. Assume $k \le 0.1 n$ and $\eps \in (1/\sqrt{k},1]$. Then, any $O(1)$-pass streaming algorithm that outputs a $(1 \pm \eps)$-approximation of $\sum_{i = 1}^k |a_i|^p$ with high constant probability requires $\Omega(\eps^{-2} n^c/k)$ bits of space. 
\end{corollary}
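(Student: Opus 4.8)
The cleanest route I would take is a scaled version of the reduction behind \Cref{lem:top_k_lower_bound_new}. Fix an effective dimension $N=\Theta(n^c)$ (with a large enough hidden constant, pinned down below) and instantiate $\textsc{2-SUM}(\eps^{-2},\,L,\,\eps^2 k)$ with $L=\eps^2 N=\Theta(\eps^2 n^c)$, so that the concatenation of the $\eps^{-2}$ strings held by Alice, and likewise by Bob, uses exactly $N$ coordinates; here $L\ge 1$ and $\eps^2 k\ge 1$ hold since $\eps\ge 1/\sqrt{k}$ and, in the meaningful regime, $k=O(n^c)$. I would then build $\bx,\by\in\{0,k/2\}^{N}$ from Alice's and Bob's strings exactly as in the proof of \Cref{lem:top_k_lower_bound_new}, and embed $\bA=\bx+\by$ into $\Z^n$ by appending $n-N$ zero coordinates. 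Using the promise of \cite{WZ2014optimal,CLLTWZ2024tight} that only a constant fraction of the coordinate pairs are $(0,0)$, the padded instance has $a_k=k/2$, exactly $c'k$ coordinates equal to $k$ for some $c'\in[1/2,1]$, and $\Theta(N)$ coordinates equal to $k/2$; choosing the constant in $N=\Theta(n^c)$ large enough gives $\|\bA_{-k}\|_2^2=\Omega(N)\cdot(k/2)^2\ge n^c\, a_k^2$, that is $a_k^2\le\frac{k}{n^c}\cdot\frac{\|\bA_{-k}\|_2^2}{k}$, and since zero-padding changes none of $a_k$, $\|\bA_{-k}\|_2$, or $\sum_{i=1}^k|a_i|^p$, the hypothesis of the corollary holds.

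Next I would run the standard protocol against the claimed $O(1)$-pass, $(1\pm\eps)$-approximation streaming algorithm: Alice feeds her updates, sends the working memory to Bob, Bob feeds his updates, repeating once per pass; the output, a $(1\pm\eps)$-estimate of $c'k\cdot k^p+(k-c'k)(k/2)^p$, pins down $c'$ to additive $O(\eps)$ and hence $\sum_i\textsc{DISJ}(\bX^i,\bY^i)=\eps^{-2}(1-c')$ to additive $O(\eps^{-1})=O(\sqrt{\eps^{-2}})$, solving $\textsc{2-SUM}(\eps^{-2},L,\eps^2 k)$. By \Cref{lem:2sum} the communication, hence the space used over $O(1)$ passes, is $\Omega(\eps^{-2}L/(\eps^2 k))=\Omega(N/(\eps^2 k))=\Omega(\eps^{-2}n^c/k)$. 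A much shorter but weaker route also works: since $n^c\le n$ for $c\in(0,1)$, every instance satisfying the stronger bound $a_k^2\le\frac{k}{n}\cdot\frac{\|\bx_{-k}\|_2^2}{k}$ of \Cref{lem:top_k_lower_bound_new} automatically satisfies the corollary's hypothesis, so that lemma's $\Omega(\eps^{-2}n/k)\ge\Omega(\eps^{-2}n^c/k)$ bound already applies; the scaled reduction is what shows the $n^c$ rate is the correct one for instances that only just meet the condition.

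The step I expect to need the most care, and the main obstacle, is the regime where $k$ is comparable to or larger than $n^c$: a block of $\Theta(n^c)$ coordinates then cannot hold the $\Omega(k)$ moderately valued coordinates the construction requires. There I would instead keep the $\textsc{2-SUM}$ block at its minimal feasible size $\Theta(k)$ and append a further fixed block of $\Theta(n^c)$ hard-coded coordinates of value $k/2$ (the total dimension staying $\le n$ because $k\le 0.1n$ and this regime forces $n^c=O(k)$), which inflates $\|\bA_{-k}\|_2^2$ just enough that $a_k^2\le\|\bA_{-k}\|_2^2/n^c$ still holds; \Cref{lem:2sum} then yields $\Omega(\eps^{-2})\ge\Omega(\eps^{-2}n^c/k)$ communication. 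Apart from this case split and the constant-chasing needed to keep the instance on the feasible side of the $n^c$ threshold, all of which the shortcut in the previous paragraph sidesteps, the argument is a verbatim replay of \Cref{lem:top_k_lower_bound_new}, so I anticipate no deeper difficulty.
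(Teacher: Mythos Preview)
Your proposal is correct and takes essentially the same approach as the paper: embed the hard instance of \Cref{lem:top_k_lower_bound_new} into the first $\Theta(n^c)$ coordinates and zero-pad to dimension $n$, so that the instance meets the $n^c$ threshold and the $\textsc{2-SUM}$ lower bound yields $\Omega(\eps^{-2}n^c/k)$. The paper's proof is a single line to this effect; your write-up supplies the details the paper omits, and your discussion of the shortcut and of the $k\gtrsim n^c$ regime goes beyond what the paper addresses.
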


\begin{proof}
    Note that we can let $m = n^c / k$ and put the above input instance in the first $n^c$ coordinates in the input to the streaming algorithm, and let the remaining coordinates to be $0$.
\end{proof}

\subsubsection{Additional bound}

We also prove the following lower bound, which shows that when $a_k$ is small enough compared to the $F_2$ moment of the tail $\norm{\bx_{-k}}_2^2$, even an $O(k^p)$ approximation is hard.

\begin{lemma}
\label{lem:top_k_lower_bound}
    Suppose that $a_k^2 \le c \cdot \frac{k^3}{n} \cdot \frac{\norm{\bx_{-k}}_2^2}{k}$
    for some constant $c$. Then, any $O(1)$-pass streaming algorithm that outputs a $O(k^p)$ approximation of $\sum_{i = 1}^k |a_i|^p$ with high constant probability requires $\Omega(n/k^3)$ bits of space.
\end{lemma}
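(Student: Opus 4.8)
The plan is to reduce from the communication complexity of $\textsc{DISJ}$ (set disjointness), or more precisely from the $\textsc{2-SUM}$ machinery already introduced for Lemma~\ref{lem:top_k_lower_bound_new}, but now exploiting that a coarse $O(k^p)$-approximation only needs to distinguish ``the answer is roughly $0$'' from ``the answer is $\Omega(k^p)$'' --- which is a decision problem rather than an estimation problem. Concretely, I would pad the instance so that the first $n/k^2$ (or so) coordinates carry the hard instance and the remaining coordinates are zero; this is the step that converts the ambient dimension $n$ into the $n/k^3$ in the bound, together with one more factor of $k$ coming from the block structure of the $\textsc{2-SUM}$ reduction. The idea is: put $k$ coordinates at value $k/2$ coming from Alice's string and $k$ from Bob's, with an intersection of size either $0$ or $\Theta(k^2)$-scaled appropriately, so that $\sum_{i=1}^k |a_i|^p$ differs by more than an $O(k^p)$ factor between the two cases; then the condition $a_k^2 \le \frac{k^3}{n}\cdot\frac{\|\bx_{-k}\|_2^2}{k}$ forces exactly the parameter regime where this padded construction lives.

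The key steps, in order, would be: (1) fix the hard instance of $\textsc{DISJ}$ on a universe of size $m = \Theta(n/k^3)$ per block, with $\Theta(k^2)$ blocks, so that Alice and Bob's inputs correspond to $0/(k/2)$ vectors whose sum $\bA$ has all of its top $k$ entries equal to $k$ in the ``NO'' (intersecting) case and all top $k$ entries equal to $k/2$ (or at most some bounded value) in the ``YES'' (disjoint) case; (2) verify the promised side condition $a_k^2 \le \frac{k^3}{n}\cdot\frac{\|\bx_{-k}\|_2^2}{k}$ holds for this construction, which amounts to checking that $\|\bA_{-k}\|_2^2 = \Theta(k^2 \cdot (\text{number of non-top coordinates}))$ dominates $k \cdot a_k^2$ by the required polynomial-in-$n$ factor --- this is where the padding dimension is pinned down; (3) observe that the top-$k$ $F_p$ moment is $\Theta(k \cdot k^p) = \Theta(k^{p+1})$ in the NO case and $\Theta(k \cdot (k/2)^p) = \Theta(k^{p+1})$ in the YES case --- so I actually need a sharper gadget, e.g., making the top-$k$ entries $0$ in the YES case by arranging that fewer than $k$ coordinates are ever nonzero, giving a multiplicative gap of $\infty$, or more carefully a gap of $\omega(k^p)$; (4) invoke the $\Omega(mL/\alpha)$-type communication lower bound (Lemma~\ref{lem:2sum}, or the standard $\Omega(m)$ lower bound for $\textsc{DISJ}$ on universe $m$) and divide by $O(1)$ passes to get the $\Omega(n/k^3)$ space bound.

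The main obstacle I expect is step (3): getting a genuine $\omega(k^p)$ multiplicative gap in $\sum_{i=1}^k |a_i|^p$ while still satisfying the side condition $a_k^2 \le \frac{k^3}{n}\cdot\frac{\|\bx_{-k}\|_2^2}{k}$. A naive $0/(k/2)$-vs-$0/k$ construction only yields a constant-factor gap, which does not rule out $O(k^p)$-approximation. The resolution will likely be to use a construction where, in the ``disjoint'' case, strictly fewer than $k$ coordinates are nonzero at all (so the top-$k$ moment counts a bunch of hard zeros, or small tail values), forcing $\sum_{i=1}^k|a_i|^p$ to drop by a factor depending on how the remaining weight is distributed across the tail level sets; calibrating the tail so that this drop is $\omega(k^p)$ while $\|\bx_{-k}\|_2^2$ stays large enough is the delicate balancing act, and it is presumably why the exponent is exactly $k^3$ rather than $k$. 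Alternatively one reduces from a gapped version of $\textsc{INT}$ where the intersection is either $0$ or $\ge k$, and the top-$k$ sum is sensitive to this by a $k^p$-type factor; the promise structure noted after Lemma~\ref{lem:2sum} (that only an $O(1)$ fraction of coordinate-pairs sum to $0$) should be what makes the tail norm controllable.
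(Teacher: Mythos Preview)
Your proposal has a genuine gap, and you have in fact already put your finger on it in step~(3): a two-party $\textsc{DISJ}$/$\textsc{2-SUM}$ reduction with $\{0,k/2,k\}$-valued coordinates only produces a constant-factor gap in $\sum_{i=1}^k |a_i|^p$, which cannot rule out an $O(k^p)$-approximation. Your suggested fixes (force fewer than $k$ nonzero coordinates in the disjoint case, or use a gapped $\textsc{INT}$ variant) are not worked out, and it is not clear either one can simultaneously create an $\omega(k^p)$ multiplicative gap, keep the side condition $a_k^2 \le \frac{k^3}{n}\cdot \frac{\|\bx_{-k}\|_2^2}{k}$ satisfied, and still inherit an $\Omega(n/k^3)$ communication bound.

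The paper sidesteps this entirely by reducing from \emph{$k$-party} set disjointness on a universe of size $m=n/k$, rather than any two-party problem. Each player $i$ forms the indicator vector $\ba^i\in\{0,1\}^{n/k}$ of its set and then $\bx^i=(\ba^i,\ldots,\ba^i)\in\Z^n$ ($k$ copies); the stream vector is $\bx=\sum_i \bx^i$. In the intersecting case the unique common element contributes a coordinate of value $k$ in $\sum_i \ba^i$, and the $k$-fold replication turns this into exactly $k$ coordinates of value $k$ in $\bx$, so $\sum_{i=1}^k |a_i|^p = k^{p+1}$; in the disjoint case all coordinates are $1$ and the top-$k$ moment is $k$. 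This is precisely a factor-$k^p$ gap. The $\Omega(m/k)=\Omega(n/k^2)$ total-communication lower bound for $k$-party disjointness, divided by the $k-1$ times the sketch is passed around per pass, yields the $\Omega(n/k^3)$ space bound. The multiparty structure is what manufactures the large multiplicative gap for free; that is the idea your two-party plan is missing.
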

We will consider the following $k$-player set-disjointess problem.
\begin{lemma}[\cite{G2009asymptotically,J2009hellinger, KPW2021simple}] In the $k$-players set disjointness problem, there are $k$ players with subset $S^1, S^2, \dots, S^k$, each drawn from $\{1, 2, \cdots, m\}$, and we are promised that either the sets are $(1)$ pairwise disjoint, or $(2)$ there is a unique element $j$ occuring in all the sets. To distinguish the two cases with high constant probability, the total communication is $\Omega(m/k)$ bits.
    
\end{lemma}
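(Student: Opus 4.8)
The plan is to prove this classical bound by the information-complexity method together with a direct-sum reduction, following \cite{G2009asymptotically,J2009hellinger,KPW2021simple}. I work in the number-in-hand model (each player sees only its own set and messages are posted on a shared blackboard) and lower bound the \emph{total} communication by the internal information cost of an optimal protocol. Writing each $S^i$ as its $\{0,1\}$-indicator vector, the task is to distinguish ``$\bigwedge_{i=1}^k S^i_\ell = 0$ for every $\ell\in[m]$'' from ``there is a unique $\ell^\star$ with $\bigwedge_{i=1}^k S^i_{\ell^\star}=1$'', and the plan is to show that this forces $\Omega(1/k)$ bits of information per coordinate, hence $\Omega(m/k)$ in total.

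The first step is the direct sum. Fix a single-coordinate collapsing distribution $\mu$ on $\{0,1\}^k$: sample an auxiliary index $D$ uniformly from $[k]$, set the $D$-th bit to $\mathrm{Bernoulli}(1/2)$, and set every other bit to $0$; under $\mu$ the AND is always $0$ and at most one player holds a $1$. Let $\nu=\mu^{\otimes m}$ be the distribution on the $m$ coordinate-columns, which realizes the pairwise-disjoint case. For a protocol $\Pi$, set $\mathrm{IC}_\nu(\Pi)=I(\mathbf A;\Pi\mid\mathbf D)$, where $\mathbf A$ is the tuple of columns and $\mathbf D=(D_1,\dots,D_m)$. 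Since the columns are independent given $\mathbf D$, subadditivity of mutual information gives $\mathrm{IC}_\nu(\Pi)\ge\sum_{\ell=1}^m I(A_\ell;\Pi\mid\mathbf D)$, and a standard embedding---each player fills in the $m-1$ non-special columns using private randomness consistent with the publicly known $D_\ell$'s, turning a solver for the $m$-coordinate promise problem into a solver for $\mathrm{AND}_k$ on the remaining column (which must also be correct on the planted all-ones input)---shows each term is at least $\mathrm{IC}_\mu:=\inf_{\Pi'}I(A;\Pi'\mid D)$ over constant-error protocols $\Pi'$ for $\mathrm{AND}_k$. Hence the total communication is $\Omega(m\cdot\mathrm{IC}_\mu)$, and it remains to prove $\mathrm{IC}_\mu=\Omega(1/k)$.

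For the single-coordinate bound, let $\Phi_a$ be the transcript distribution on input $a\in\{0,1\}^k$ and $h(\cdot,\cdot)$ the Hellinger distance. I will use two structural facts about blackboard protocols: (i) the transcript factors across players conditioned on the transcript, giving the cut-and-paste identities $h(\Phi_a,\Phi_b)=h(\Phi_c,\Phi_d)$ whenever $\{a_i,b_i\}=\{c_i,d_i\}$ as multisets for every player $i$; and (ii) the Hellinger--information inequality, which---since conditioning on $D=j$ leaves the input equal to the all-zeros input $0^k$ or to $e_j$ (the input where only player $j$ holds a $1$) with probability $1/2$ each---yields $I(A;\Pi\mid D=j)\ge h^2(\Phi_{e_j},\Phi_{0^k})$, hence $\mathrm{IC}_\mu\ge\tfrac1k\sum_{j=1}^k h^2(\Phi_{e_j},\Phi_{0^k})$. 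Correctness forces the transcript to separate the intersecting input $1^k$ from the disjoint input $0^k$, so $h^2(\Phi_{1^k},\Phi_{0^k})=\Omega(1)$. The heart of the argument---and the step I expect to be the main obstacle---is the sharp Pythagorean inequality $\sum_{j=1}^k h^2(\Phi_{e_j},\Phi_{0^k})\ge c\,h^2(\Phi_{1^k},\Phi_{0^k})$ for an absolute constant $c>0$ \emph{with no extra factor of $k$}: the naive route, triangulating the Hellinger metric along a path $0^k\to e_1\to e_1+e_2\to\cdots\to 1^k$ and applying Cauchy--Schwarz, loses a factor of $k$ and recovers only the weaker $\Omega(m/k^2)$ bound of earlier work, so the real work is to exploit the rectangle/cut-and-paste structure to avoid that loss (this is the Gronemeier--Jayram lemma, with a streamlined proof in \cite{KPW2021simple}). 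Combining the displays gives
\[
\mathrm{IC}_\mu \;\ge\; \frac1k\sum_{j=1}^k h^2(\Phi_{e_j},\Phi_{0^k}) \;\ge\; \frac{c}{k}\,h^2(\Phi_{1^k},\Phi_{0^k}) \;=\; \Omega(1/k),
\]
and therefore the total communication is $\Omega(m/k)$, as claimed.
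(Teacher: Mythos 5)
The paper does not prove this lemma --- it imports it wholesale from \cite{G2009asymptotically, J2009hellinger, KPW2021simple} --- so there is no in-paper proof to compare against. Your sketch accurately reproduces the structure of the proofs in those references: conditional information complexity over the usual one-coordinate collapsing distribution (each column has at most one player with a $1$, so the promise is satisfied), a direct-sum decomposition across the $m$ columns via $\mathrm{IC}_\nu(\Pi)\ge\sum_\ell I(A_\ell;\Pi\mid\mathbf D)$, an embedding reduction from each column to $\mathrm{AND}_k$ that respects the disjointness promise, and Hellinger/cut-and-paste machinery to bound the single-coordinate conditional information cost by $\Omega(1/k)$. These reductions are all stated correctly.

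However, the proposal is a roadmap rather than a proof. The sharp Pythagorean-type inequality $\sum_{j=1}^k h^2(\Phi_{e_j},\Phi_{0^k}) \ge c\, h^2(\Phi_{1^k},\Phi_{0^k})$ with $c$ independent of $k$ is where essentially the entire technical content of the $\Omega(m/k)$ bound (as opposed to the earlier $\Omega(m/k^2)$) resides, and you invoke it as the Gronemeier--Jayram lemma rather than prove it; the argument bottoms out precisely where the cited works do the real work. A further caution: your description of the ``naive route'' along the path $0^k\to e_1\to e_1{+}e_2\to\cdots\to 1^k$ glosses a subtlety --- cut-and-paste does not equate $h(\Phi_{e_1+\cdots+e_j},\Phi_{e_1+\cdots+e_{j-1}})$ with $h(\Phi_{e_j},\Phi_{0^k})$, since for every player $i<j$ the per-player input multisets are $\{1,1\}$ on one side and $\{0,0\}$ on the other, so even reproducing the weaker $k^{-2}$-loss baseline requires a more careful decomposition than a straight triangle-inequality/Cauchy--Schwarz chain. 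In short: right approach, faithfully outlined, but the key lemma is cited rather than established, so the proposal does not go meaningfully beyond the citation the paper already makes.
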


\begin{proof} [Proof of \Cref{lem:top_k_lower_bound}]
    We shall show that for $m = n/k$, if there is a $O(k^p)$-approximation one-pass streaming for the top-$k$ sum, then the $k$ players can use this to solve the above $k$-plyaer set disjointness problem. This immediately implies an $\Omega(m/k^2) = \Omega(n/k^3)$ lower bound (the extra $1/k$ factor here is due to the fact that the $k$ players need to send the memory status of the algorithm $k - 1$ times).

    For a player $i$, let $\ba^i \in \Z^{n/k} $ denote the binary indicator vector of $S^i$ and $\bx^i = (\ba^i, \ba^i, \dots, \ba^i) \in \Z^n$, which is formed by $k$ copies of $\ba^i$. Consider the vector $\ba = \sum_i \ba^i$, in case $(1)$, we have $\ba = (1, 1, \dots, 1)$, while in the case $(2)$, $\ba$ has one coordinate that equals to $k$ and the remaining coordinates that equals to $1$. Similarly we have the vector $\bx = \sum_i \bx^i = (1, 1, \dots, 1)$ in case $(1)$ and $\bx$ has $k$ values equals to $k$ in case $(2)$. Note that in both case we have $a_k = k$ and $\norm{\bx_{-k}}_2^2 = n - k$, which means that
    \[
    a_k^2 < \frac{k^3}{n} \cdot \frac{\norm{\bx_{-k}}_2^2}{k} \;.
    \]
    On the other hand, the sum $\sum_{i = 1}^{k} a_i^p = k$ or $k^{p + 1}$ for the two different cases. This means that use an $O(k^p)$ approximation algorithm for the top-$k$ sum problem, the $k$-players can solve the $k$-player set disjointness problem with high constant probability, which yields $\Omega(n/k^3)$ bits of space.
\end{proof}
\begin{corollary}
Suppose that $a_k^2 \le c' \cdot \frac{k^3}{n^c} \cdot \frac{\norm{\bx_{-k}}_2^2}{k}$
    for some constant $c', c \in (0, 1)$. Then, any $O(1)$-pass streaming algorithm that outputs a $O(k^p)$ approximation of $\sum_{i = 1}^k |a_i|^p$ with high constant probability requires $\Omega(n^c/k^3)$ bits of space.
\end{corollary}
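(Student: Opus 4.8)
The plan is to derive this corollary from \Cref{lem:top_k_lower_bound} by a universe-shrinking (padding) argument, exactly mirroring how the earlier corollary was obtained from \Cref{lem:top_k_lower_bound_new}. Recall that the proof of \Cref{lem:top_k_lower_bound} builds its hard instance from a $k$-player set disjointness instance on a ground set of size $m = n/k$, producing an $n$-dimensional vector $\bx$ formed by $k$ copies of a length-$(n/k)$ indicator block, with $a_k = k$ and $\norm{\bx_{-k}}_2^2 = n - k$. I would observe that this reduction is parameterized by the ground-set size: for any target dimension $N \le n$ one can instead instantiate it with $m = N/k$, obtaining a vector supported on the first $N$ coordinates of the stream (with $a_k = k$ and $\norm{\bx_{-k}}_2^2 = N - k$) and then setting the remaining $n - N$ coordinates to $0$.

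First I would set $N = n^c$ (assuming $k = O(n^c)$, as otherwise the claim is vacuous) and check admissibility: zero-padding changes neither $a_k$ nor $\norm{\bx_{-k}}_2$, so the same inequality verified inside the proof of \Cref{lem:top_k_lower_bound} now reads $a_k^2 \le \frac{k^3}{n^c}\cdot \frac{\norm{\bx_{-k}}_2^2}{k}$, matching the corollary's hypothesis. Next I would note that an $O(k^p)$-approximation to $\sum_{i=1}^k |a_i|^p$ distinguishes the two values this quantity takes on the instance ($k$ in the ``disjoint'' case versus $k^{p+1}$ in the ``unique common element'' case), hence solves the underlying $k$-player set disjointness instance on a ground set of size $n^c/k$. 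Finally, invoking the $\Omega(m/k)$ communication lower bound for $k$-player set disjointness with $m = n^c/k$, and accounting for the extra factor of $k$ lost in the streaming-to-communication simulation (each of the $k$ players forwards the memory state once around the cycle), yields the stated $\Omega(n^c/k^3)$ space bound; the argument is unaffected by a constant number of passes.

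I do not expect a genuine obstacle here: the only point that needs care is verifying that confining the construction to the first $n^c$ coordinates and zero-filling the rest leaves the corollary's hypothesis satisfied, and this is immediate, since zero coordinates contribute to neither the $k$-th order statistic $a_k$ nor the residual mass $\norm{\bx_{-k}}_2^2$.
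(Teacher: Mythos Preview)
Your proposal is correct and matches the paper's own proof essentially verbatim: the paper also sets $m = n^c/k$, embeds the hard instance from \Cref{lem:top_k_lower_bound} into the first $n^c$ coordinates, and zero-pads the remaining $n - n^c$ coordinates. The extra verification you supply (that zero-padding leaves $a_k$ and $\norm{\bx_{-k}}_2^2$ unchanged so the corollary's hypothesis is met) is fine and simply fleshes out what the paper leaves implicit.
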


\begin{proof}
    Note that we can let $m = n^c / k$ and put the above input instance in the first $n^c$ coordinates in the input to the streaming algorithm, and let the remaining coordinates to be $0$.
\end{proof}

\subsection{Hardness of $F_p$ of Trimmed-$k$}
In this section, we give a lower bound for estimating the $\norm{\bx_{-k}}_p^p$,  which shows the $\eps k \cdot |a_{k - \eps k}|^p$ error is necessary.  We consider the following Gap-Hamming problem.

\begin{definition}
    In the Gap-Hamming problem, Alice gets $\ba \in \{0, 1\}^n$ and Bob get $\bb \in \{0, 1\}^n$, and their goal is to determine the Hamming distance $\Delta(\bx, \by)$ satisfies $\Delta(\bx, \by)\ge n(c_1 - c_2 \eps)$ or $\Delta(\bx, \by)\le n(c_1 - 2c_2 \eps )$ with probability $1 - \delta$ for some constant $c_1, c_2$. 
\end{definition}

\begin{lemma}[\cite{JW2011optimal}]
    The one-way communication complexity of Gap Hamming is $\Omega(\eps^{-2}\log n \log(1/\delta))$.
\end{lemma}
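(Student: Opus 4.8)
Since this is a known result quoted from~\cite{JW2011optimal}, the plan is to recall the standard route to one-way communication lower bounds for Gap-Hamming-type problems and to indicate how each of the three factors $\eps^{-2}$, $\log(1/\delta)$, and $\log n$ arises. The base case is the $\Omega(\eps^{-2})$ bound for a single instance and constant failure probability: this is classical and follows, for example, by a reduction from the Indexing problem on $N=\Theta(\eps^{-2})$ coordinates (whose one-way randomized communication complexity is $\Omega(N)$) under the standard Rademacher embedding. There, using public randomness, Alice turns her string into a vector $\mathbf{u}\in\{-1,1\}^n$ and Bob, from his index $i$, produces a vector $\mathbf{v}$ correlated only with the $i$-th block of $\mathbf{u}$, arranged so that $n-2\Delta(\mathbf{u},\mathbf{v})$ concentrates at a fixed value up to a perturbation of size $\Theta(\eps n)$ whose sign reveals the queried bit; hence any tester distinguishing $\Delta\ge n(c_1-c_2\eps)$ from $\Delta\le n(c_1-2c_2\eps)$ solves the Indexing instance. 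Equivalently, one can argue directly at the distributional level: a transcript of length $o(\eps^{-2})$ cannot distinguish two product distributions on $\{0,1\}^n$ whose per-coordinate agreement probabilities differ by $\Theta(\eps)$.

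To improve the failure probability from constant to $\delta$ one cannot simply run $\Theta(\log(1/\delta))$ independent copies with a majority vote, since a single error-$\delta$ protocol need not decompose into a batch of independent error-$\Theta(1)$ protocols. Instead I would invoke the hardness-amplification argument of~\cite{JW2011optimal}: take $t=\Theta(\log(1/\delta))$ suitably coupled Gap-Hamming instances, decompose the information the transcript carries over the $t$ coordinates, and show via a cut-and-paste argument together with Hellinger/Pinsker-type distance estimates that any protocol with overall error at most $\delta$ must, on a typical single coordinate, succeed with probability bounded away from $1/2$, and therefore must reveal $\Omega(\eps^{-2})$ bits of information about each of $\Omega(\log(1/\delta))$ coordinates; superadditivity of information over independent coordinates then gives $\Omega(\eps^{-2}\log(1/\delta))$. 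The remaining $\log n$ factor is a further direct sum: the hard input is taken to be a concatenation of $\Theta(\log n)$ independent Gap-Hamming sub-instances on disjoint coordinate blocks (exactly the granularity the downstream streaming reduction for the trimmed-$k$ sum will exploit), and the same superadditivity multiplies the per-block bound by $\log n$. Assembling the Indexing reduction, the $\delta$-amplification, and the block-wise direct sum yields the claimed $\Omega(\eps^{-2}\log n\log(1/\delta))$.

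The step I expect to be the main obstacle, and the only one not reducible to elementary manipulations, is the $\log(1/\delta)$ amplification: showing that an error-$\delta$ protocol must carry nontrivial information about $\Omega(\log(1/\delta))$ of the coupled instances requires the delicate conditioning and distance estimates of~\cite{JW2011optimal}. The Indexing reduction for the base case and the block-wise direct sum are routine once that ingredient is available, so in a full write-up I would state them briefly and defer the amplification core to~\cite{JW2011optimal}.
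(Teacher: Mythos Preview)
The paper does not prove this lemma; it is quoted verbatim from \cite{JW2011optimal} with no argument given, so there is no paper proof to compare your proposal against.

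Your outline of the $\Omega(\eps^{-2})$ base case via an Indexing reduction and of the $\log(1/\delta)$ amplification via the information-theoretic machinery of \cite{JW2011optimal} is a reasonable high-level summary of how that reference proceeds. The step you propose for the extra $\log n$ factor, however, does not go through as written. A direct sum over $\Theta(\log n)$ disjoint coordinate blocks lower-bounds the cost of solving all $\Theta(\log n)$ sub-instances simultaneously; it does not lower-bound the single Gap-Hamming instance on the concatenated $n$-bit strings, because the global Hamming distance is simply the sum of the block distances and deciding the global gap does not force the protocol to decide each block's gap. So the superadditivity you invoke bounds the wrong problem. As a sanity check, plugging $\eps=\Theta(1/\sqrt{n})$ and constant $\delta$ into the stated bound already gives $\Omega(n\log n)$, which exceeds the trivial $O(n)$ upper bound for Gap-Hamming on $n$-bit inputs; so either the parameter regime in the lemma is implicitly restricted or the $\log n$ is a misstatement carried over from the streaming application, and in any event it cannot be derived by the block-direct-sum route you sketch.
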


\begin{lemma}
\label{lem:trim_lower_bound}

Any one-pass streaming algorithm that  
estimating $\norm{\bx_{-k}}_p^p$ with
error $\eps k\cdot |a_{k - \eps k}|^p$ and with high constant probability requires $\Omega(\eps^{-2}  \log k)$ bits of space.
\end{lemma}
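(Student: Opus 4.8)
The plan is to reduce the one-way Gap-Hamming problem on binary strings of length $N=\Theta(k)$ to the trimmed-$k$-sum problem. Since a one-pass streaming algorithm using $s$ bits yields a one-way two-party protocol of communication cost $O(s)$ (Alice runs the algorithm on her portion of the stream, sends the $s$-bit memory state to Bob, and Bob finishes and queries), the Gap-Hamming lower bound of \cite{JW2011optimal} will then force $s=\Omega(\eps^{-2}\log N\log(1/\delta))=\Omega(\eps^{-2}\log k)$ for constant failure probability $\delta$.

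First I would fix the hard instance. Set $N=Ck$ for a sufficiently large constant $C$ (depending only on the Gap-Hamming gap constants $c_1,c_2$), and let the frequency vector have dimension $n=2k+N=\Theta(k)$, partitioned into a \emph{sentinel block} of $k$ coordinates, a \emph{Gap-Hamming block} of $N$ coordinates, and a \emph{padding block} of $k$ coordinates. Given Gap-Hamming inputs $\bX\in\{0,1\}^N$ held by Alice and $\bY\in\{0,1\}^N$ held by Bob, Alice streams a $+1$ update to each of the $k$ sentinel coordinates and a $+1$ update to the $i$-th Gap-Hamming coordinate for every $i$ with $X_i=1$; she then passes the memory state of the algorithm to Bob, who streams a $-1$ update to the $i$-th Gap-Hamming coordinate for every $i$ with $Y_i=1$ (this uses that the model is general turnstile). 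The padding block is never updated. In the resulting vector every coordinate has absolute value $0$ or $1$: the sentinels are all $1$; a Gap-Hamming coordinate has absolute value $1$ exactly when $X_i\neq Y_i$; and all other coordinates are $0$. Hence exactly $k+\Delta(\bX,\bY)$ coordinates have absolute value $1$, and at least $k+N-\Delta(\bX,\bY)\ge k$ are zero.

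It follows that, in the sorted order, the top $k$ coordinates all have value $1$, so $|a_{k-\eps k}|=1$ and the tolerated additive error $\eps k\,|a_{k-\eps k}|^p$ equals $\eps k$; the bottom $k$ coordinates are zero; and $\sum_{i=k}^{n-k}|a_i|^p=\Delta(\bX,\bY)+1$. Therefore any algorithm that reports the trimmed-$k$-sum within additive error $\eps k\,|a_{k-\eps k}|^p$ yields, after subtracting $1$, an estimate of $\Delta(\bX,\bY)$ with additive error $O(\eps k)$. The Gap-Hamming gap is $c_2\eps N=Cc_2\,\eps k$, so choosing $C$ large makes $O(\eps k)$ smaller than, say, a third of the gap, and Bob decides the Gap-Hamming instance by thresholding the estimate at the midpoint of the two gap endpoints. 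Combining this reduction with the communication lower bound and the streaming-to-communication simulation gives the claimed $\Omega(\eps^{-2}\log k)$ bound. (In the degenerate regime where $\eps k$ is below a fixed constant the tolerated error forces a near-exact answer, and the same reduction goes through trivially.)

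The step I expect to be the main obstacle is pinning down the length $N$ of the embedded Gap-Hamming instance so that three requirements hold at once: $N$ must be polynomial in $k$ so that $\log N=\Theta(\log k)$, which is what produces the $\log k$ factor; $N$ must be large enough relative to $k$ that the gap $\Theta(\eps N)$ strictly dominates the tolerated error $\eps k$ (this is why $N=\Theta(k)$ rather than $\Theta(1)$ or $\Theta(\eps k)$); and the instance must keep $|a_{k-\eps k}|$ equal to the fixed value $1$ regardless of the Gap-Hamming input, which is precisely the role of the sentinel block---without it the tolerated error would be an uncontrolled quantity. I would also remark that by instead embedding a rescaled Gap-Hamming instance whose Hamming distance is confined to an $O(\eps^2k)$-width window one can additionally arrange $\sum_{i=k}^{n-k}|a_i|^p=o(k\,|a_{k-\eps k}|^p)$, which shows that the additive term in \Cref{thm:trimk} cannot be replaced by a purely multiplicative $\eps\sum_{i=k}^{n-k}|a_i|^p$ error.
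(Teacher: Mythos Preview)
Your proposal is correct and follows essentially the same Gap-Hamming reduction as the paper: embed a $\Theta(k)$-bit Gap-Hamming instance so that the trimmed-$k$ sum equals the Hamming distance (up to a constant), then observe that the allowed error $\eps k\,|a_{k-\eps k}|^p$ is below the Gap-Hamming gap. The only cosmetic difference is that you introduce explicit sentinel and padding blocks to pin $|a_{k-\eps k}|=1$, whereas the paper scales by a large constant $B$ and relies on both Gap-Hamming cases having $\Delta\ge k$ so that the top $k-\eps k$ entries are automatically $\pm B$; both variants yield the same $\Omega(\eps^{-2}\log k)$ bound.
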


\begin{proof}
    Given the binary vector $\ba, \bb \in \{0, 1\}^{2k}$, let $\bx \in \R^n = B \cdot \ba$ for some large value $B = \poly(n)$ and $y = B \cdot \bb$. Suppose that Alice and Bob want to determine the case where (1) $\Delta(\ba, \bb) = k$, and (2) $\Delta(\ba, \bb) = k + \eps k$. For the case one, we have $\norm{(\bx - \by)_{-k}}_p^p = 0$ whereas for case two we have $\norm{(\bx - \by)_{-k}}_p^p = \eps k \cdot B^p$. Hence, if there is a streaming algorithm for the trimmed $k$-sum problem with at most $\eps k |a_{k - \eps k}|^p$, then Alice and Bob can use it to design a protocol to solve the Gap Hamming problem, from which we get a $\Omega(\eps^{-2}  \log k)$ bits of space lower bound. 
\end{proof}

Note that we can scale the $\eps$ to $\eps' = \poly(\eps / \log n)$, which shows that for any $\poly(\log n / \eps)$ bits of space, the $\poly(\eps / \log n) \cdot |a_{k - \eps k}|^p$ error is best possible.

\section{Experiments} \label{sec:exper}
In this section, we evaluate the empirical performance of our algorithm on the following three datasets. All of the experiments are conducted on a laptop with a 2.42GHz CPU and 16GB RAM.

\begin{itemize}
    \item \textbf{Synthetic}: we generate a vector of size 10 million where $k$ entries are random integers between 10 and 100 thousand and the rest of the entries are between 1 and 100. This ensures that the condition we set for our top $k$ algorithm is likely met. 

    \item \textbf{AOL}\footnote{\color{blue}{https://www.kaggle.com/datasets/dineshydv/aol-user-session-collection-500}}~\cite{PCT2006picture}: we create the underlying frequency vector based on the query of each entry of the dataset. Specifically, each unique query corresponds to an entry in our underlying frequency vector, and the value of that entry is the number of times the query is in the dataset. The frequency vector had size about 1.2 million with a max frequency of 98,554.

    \item \textbf{CAIDA}\footnote{\color{blue}{https://publicdata.caida.org/datasets/topology/ark/}}~\cite{caida_ipv4_dnsnames}: we create the frequency vector based on DNS names. For the subset of the dataset we use, the frequency vector had size about 400 thousand with a max frequency of 48.
\end{itemize}

\begin{figure}[tb]
\centering
    \includegraphics[width=0.35\textwidth]{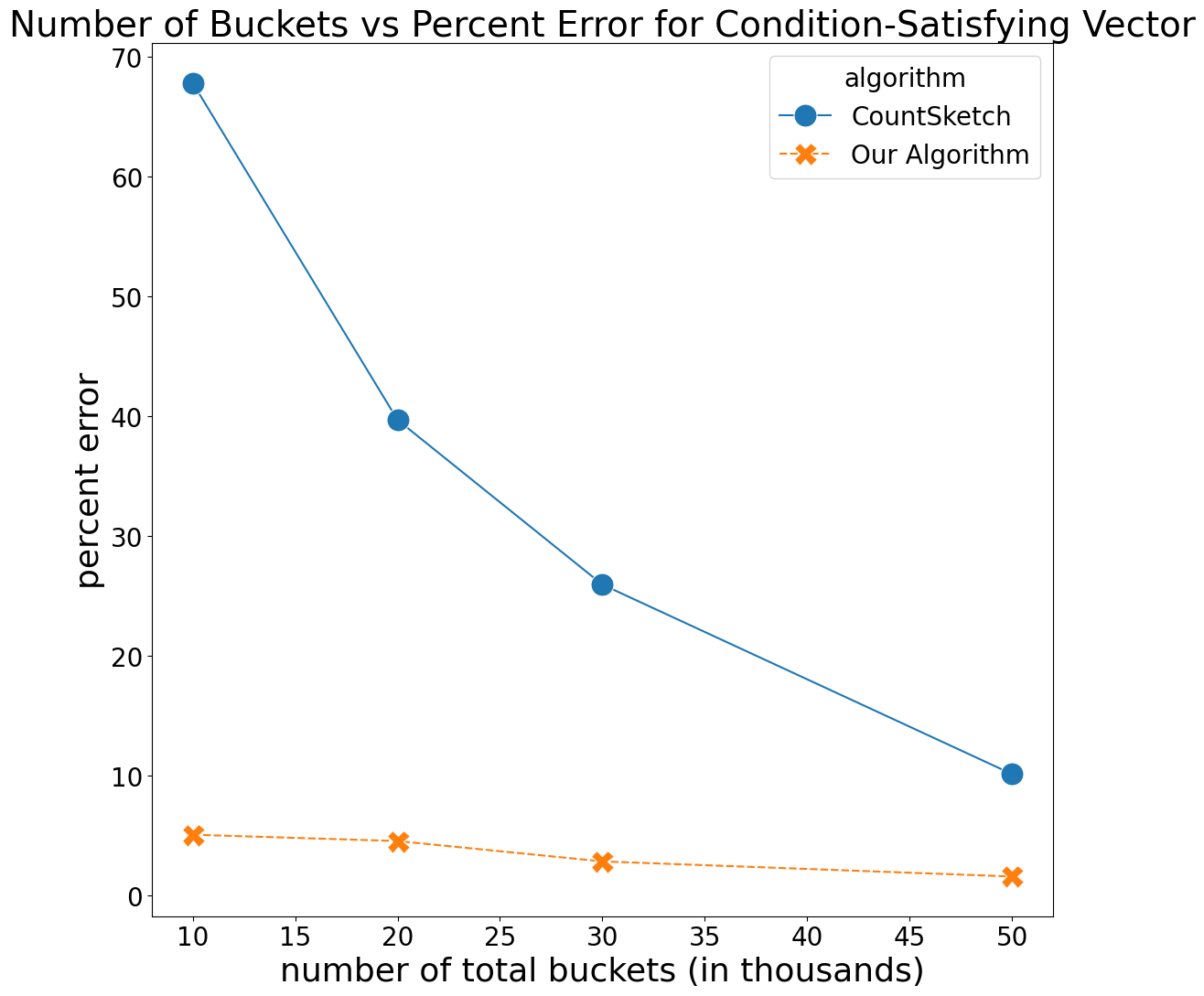}
    \includegraphics[width=0.3\textwidth]{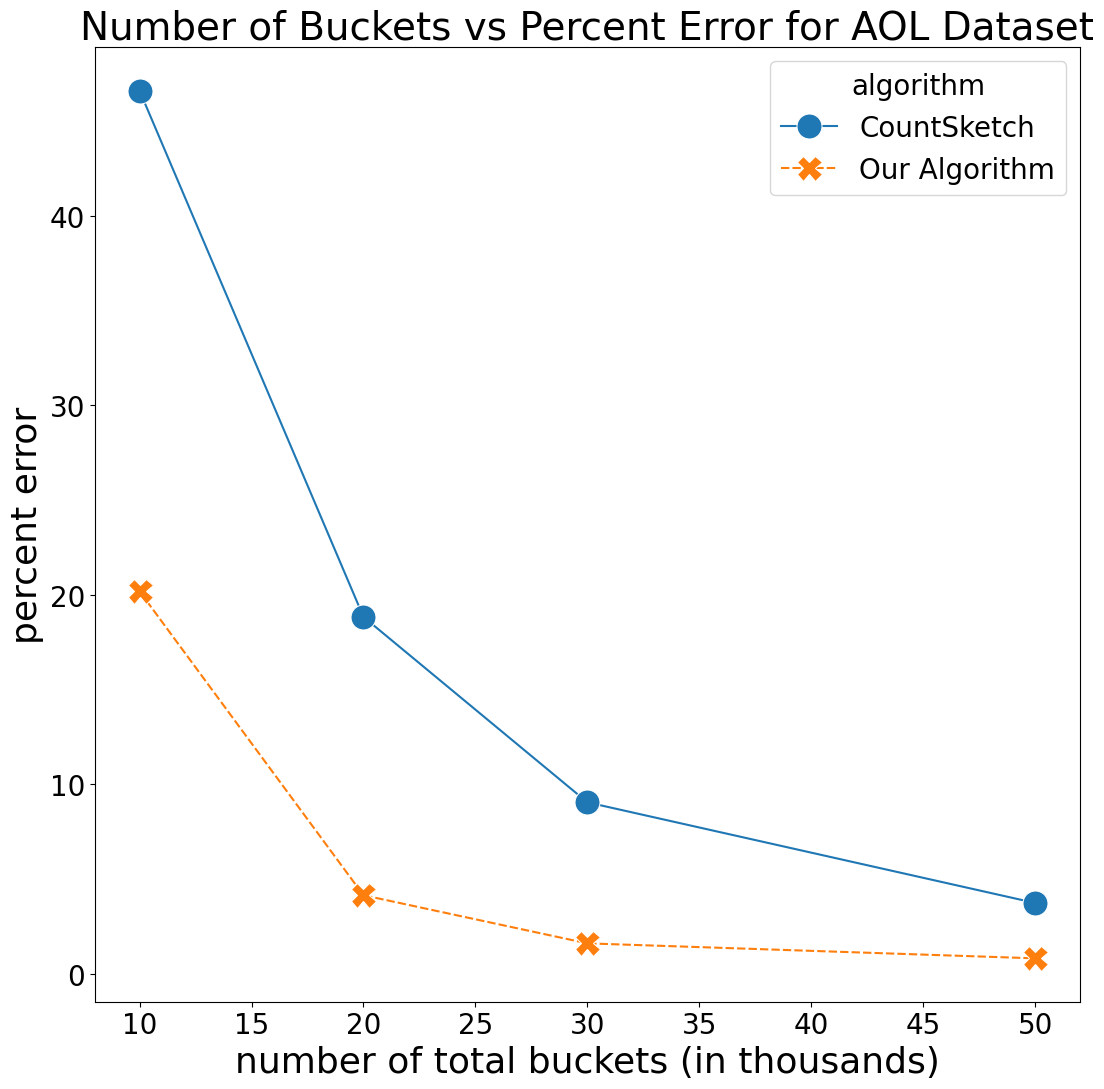}
    \includegraphics[width=0.31\textwidth]{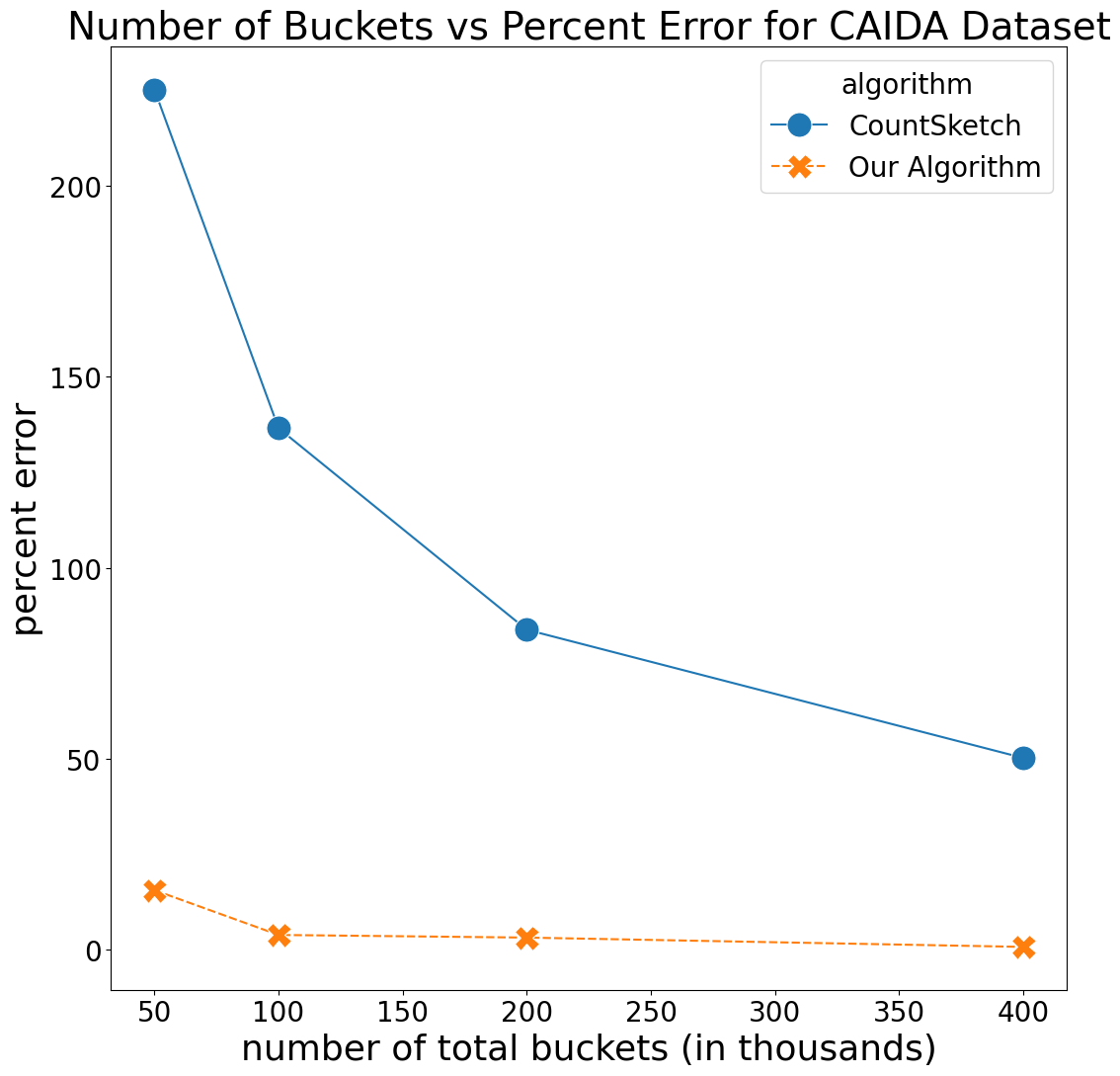}
\caption{Number of Buckets vs Error (Synthetic, AOL, CAIDA respectively).}
\label{fig}
\end{figure}

\paragraph{Experimental setting.}
In our implementations of Algorithm~\ref{alg:topk}, we make standard modifications which are done in the practical implementation of streaming algorithms. In particular, we only have a constant number of subsampling levels and level sets in our implementation. In all the three datasets, we consider the $F_1$ moment of the top $k = 1000$ frequencies for simplicity and interpretability. This can be extended to more general $F_p$ as well.

We compare our algorithm to the classical Count-Sketch across a range of total bucket sizes, using relative error with respect to the ground truth as the evaluation metric. In this context, the bucket size refers to the {\em{total}} number of buckets used in the implementation. For Count-Sketch, vector items are hashed into buckets and recovered across multiple independent repetitions. The final estimate for each item is obtained by taking the median of these repetitions. The total number of buckets is therefore the number of buckets per repetition multiplied by the number of repetitions (i.e., the number of medians taken). We vary the number of repetitions from 1 to 10 and report the lowest error achieved, as there is a trade-off between the number of repetitions and the number of buckets per repetition.

Our algorithm incorporates multiple subsampling levels, each containing a smaller Count-Sketch structure with several repetitions. As a result, the total number of buckets is given by the product of the number of subsampling levels, the number of repetitions, and the number of buckets in each Count-Sketch structure.

\paragraph{Results.} The comparison result is presented in \Cref{fig}, which suggests that our algorithm consistently outperforms the classical Count-Sketch across a range of total bucket sizes.

For the AOL dataset, as shown in \Cref{fig}, both Count-Sketch and our algorithm exhibit decreasing error as the total number of buckets increases, which aligns with expectations. However, our algorithm consistently outperforms Count-Sketch across all tested bucket sizes. Specifically, for bucket sizes of 10,000, 20,000, 30,000, and 50,000, Count-Sketch yields relative errors of 46.59\%, 18.83\%, 9.05\%, and 3.74\%, respectively. In contrast, our algorithm achieves significantly lower errors of 20.18\%, 4.15\%, 1.61\%, and 0.82\%. These results indicate that Count-Sketch requires substantially more space to match the accuracy of our method. This performance gap is expected, as Count-Sketch has a linear dependence on $k$ to achieve a provable guarantee, whereas our algorithm does not. For the CAIDA dataset, the overall trends are similar to those observed for the AOL dataset. However, we note that both algorithms required a larger number of total buckets to achieve reasonable error rates. This is likely due to the underlying frequency vector being flatter, with less distinction between the top-$k$ entries and the remainder. Despite this increased difficulty, our algorithm continues to outperform Count-Sketch across all tested bucket sizes. Specifically, for bucket sizes of 50,000, 100,000, 200,000, and 400,000, Count-Sketch yields relative errors of 225.14\%, 136.50\%, 83.75\%, and 50.18\%, respectively, while our algorithm achieves substantially lower errors of 15.57\%, 3.85\%, 3.16\%, and 0.71\%.

For the synthetic dataset, our algorithm again outperforms Count-Sketch and achieves comparable or better accuracy with significantly less space. For bucket sizes of 10,000, 20,000, 30,000, and 50,000, Count-Sketch yields errors of 67.81\%, 39.68\%, 25.93\%, and 10.11\%, respectively, while our algorithm achieves much lower errors of 5.05\%, 4.52\%, 2.82\%, and 1.56\%. Notably, the number of buckets required to obtain low error is relatively small compared to the size of the underlying frequency vector. This aligns with our expectations, as the synthetic dataset was constructed such that the top-$k$ entries are significantly larger than the rest, making them easier to identify accurately.

\begin{acks}
The authors were supported in part by Office of Naval Research award number N000142112647 and a Simons Investigator Award. Honghao Lin was supported in part by a CMU Paul and James Wang Sercomm Presidential Graduate Fellowship. Hoai-An Nguyen was supported in part by NSF GRFP award number DGE2140739. 
\end{acks}

\bibliographystyle{ACM-Reference-Format}
\bibliography{sample-base}


\end{document}